\newtheorem{cor}{Corollary}
\newtheorem{obs}{Observation}
\newcommand\norm[1]{\lVert#1\rVert}
\DeclareMathOperator{\Cov} {Cov}
\DeclareMathOperator{\diag}{diag}
\DeclareMathOperator{\ran} {ran}
\title{Specifying Gaussian Markov Random Fields with Incomplete Orthogonal Factorization using Givens Rotations}
\author[1]{Xiangping Hu\footnote{Corresponding author. Email: \texttt{Xiangping.Hu@math.ntnu.no}}}
\author[1]{Daniel Simpson}
\author[1]{H\aa{}vard Rue}
\affil[1]{Department of Mathematical Sciences, Norwegian University of Science and Technology, N-7491 
Trondheim, Norway}
\date{July 4, 2013}
\begin{document}

\maketitle

 \begin{abstract}
In this paper an approach for finding a sparse incomplete Cholesky factor through an incomplete orthogonal factorization with Givens rotations
is discussed and applied to Gaussian Markov random fields (GMRFs). The incomplete Cholesky factor obtained from the incomplete orthogonal factorization is usually
sparser than the commonly used Cholesky factor obtained through the standard Cholesky factorization. On the computational side, 
this approach can provide a sparser Cholesky factor, which gives a computationally more efficient representation of GMRFs. On the theoretical side, this approach 
is stable and robust and always returns a sparse Cholesky factor. Since this approach applies both to square matrices and to rectangle matrices, 
it works well not only on precision matrices for GMRFs but also when the GMRFs are conditioned on a subset of the variables or on observed data. 
Some common structures for precision matrices are tested in order to illustrate the usefulness of the approach.
One drawback to this approach is that the incomplete orthogonal factorization is usually slower than the standard Cholesky factorization implemented in standard libraries and currently it 
can be slower to build the sparse Cholesky factor. 
\newline
\noindent \textbf{Keywords}: {Gaussian Markov random field; Incomplete orthogonal factorization; Upper triangular matrix,  Givens Rotation; Sparse matrix; Precision matrix}
\end{abstract}

\section{Introduction} \label{sec: cholesky_introduction}
Gaussian Markov random fields(GMRFs) are useful models in spatial statistics due to the Gaussian properties together with Markovian
structures. They can also be formulated as conditional auto-regressions (CARs) models \citep{rue2005gaussian}. GMRFs have applications in many areas, such as spatial statistics,
time-series models, analysis of longitudinal survival data, image analysis and geostatistics. See~\citet[Chapter $1$]{rue2005gaussian} for more
information and literature on how the {GMRFs} can be applied in different areas. From an analytical point of view GMRFs have good properties and can be specified through mean values $\boldsymbol{\mu}$ and covariance matrices $\boldsymbol{\Sigma}$.
While from a computational point of view {GMRFs} can conveniently specified through precision matrices $\boldsymbol{Q}$ (the inverse of the covariance matrices $\boldsymbol{\Sigma}$), which are usually sparse matrices. 
The numerical algorithms for sparse matrices can be exploited for calculations with the sparse precision matrices and hence fast statistical inference is possible~\citep{rue2001fast}. The numerical
algorithms for sparse matrices can be applied to achieve fast simulation of the fields and evaluation the densities (mostly, log-densities) of GMRFs and GMRFs with
conditioning on subset of variables or linear constraints. See~\citet[Chapter 2]{rue2005gaussian} for further details.
These algorithms can also be used to calculate the marginal variances~\citep{rue2005marginal}, and they can be extended to non-Gaussian cases ~\citep{rue2004approximating}.

Precision matrices $\boldsymbol{Q}$ are commonly used to specify GMRFs. This approach is natural due to the
sparsity patterns of the precision matrices in Markovian models. In many situations the Cholesky factors are required and are crucial for simulation and inferences with GMRFs, and the Cholesky factors are normally obtained with
Cholesky factorization routines in standard libraries. See~\citet[Chapter $2$]{rue2005gaussian} for different simulation algorithms for GMRFs using Cholesky factors. In order to
get an even sparser Cholesky factor, with the purposes of saving computational resources, \citet{wist2006specifying} showed that the
Cholesky factor from an incomplete Cholesky factorization can be much sparser than the Cholesky factor from the regular Cholesky factorization. However, they 
provided theoretical and empirical evidence showing that the representation of sparser Cholesky factor was fragile when conditioning
the GMRF on a subset of the variables or on observed data. It means that the sparsity patterns of the sparser Cholesky factors are destroyed when some constraints or observed data are introduced and the computational cost increases. 
Additionally, the sparser Cholesky factor from the incomplete Cholesky factorization is only valid for a specific precision matrix. Their
approach is illustrated in Figure \ref{fig: cholesky_diag} with Routine $1$. 

In this paper a different approach is chosen to solve the problem presented by \citet{wist2006specifying} . The main idea is given in the Figure \ref{fig: cholesky_diag} with Routine $2$. 
In this approach one rectangular matrix $\boldsymbol{A}$ is formulated,

\begin{equation} \label{eq: rectanglar_matrix_A}
\boldsymbol{A} =  \begin{pmatrix} \boldsymbol{L}_1^{\mbox{T}}\\\boldsymbol{L}_2^{\mbox{T}}\end{pmatrix}.
\end{equation}

\noindent It consists of the Cholesky factor $\boldsymbol{L}^{\mbox{T}}_1$ from the precision matrix $\boldsymbol{Q}_1$ of a given GMRF and the Cholesky factor
$\boldsymbol{L}^{\mbox{T}}_2$ of the matrix $\boldsymbol{Q}_2$. The matrix $\boldsymbol{Q}_2$ can be the additional effect
when the GMRF is conditioned on observed data or on a subset of the variables. Both $\boldsymbol{L}_1$ and
$\boldsymbol{L}_2$ are lower triangular matrices.

An incomplete orthogonal factorization is then used to factorize the matrix $\boldsymbol{A}$ in Equation \eqref{eq: rectanglar_matrix_A} to find the sparse Cholesky factor
 for specifying the GMRF. It is shown that by using this approach an upper triangular matrix $\boldsymbol{R}$ which is sparser than the standard Cholesky factor is obtained. 
Furthermore, this approach is applicable when the GMRF is conditioned on a subset of the variables or on observed data. 
Since the upper triangular matrix $\boldsymbol{R}$ is sparser in structure than the common Cholesky factor, it is better for applications.

\begin{figure} 
\centering
\begin{pspicture}(1, 1)(10,10)
\psframe[framearc=0.2, linewidth=0.8pt](1,1)(5,2)
\rput[c](3,1.5){$\boldsymbol{\bar{L}}\boldsymbol{\bar{L}}^{T}$}
\psframe[framearc=0.2, linewidth=0.8pt](1,3)(5,4)
\rput[c](3, 3.5){$\boldsymbol{\bar{L}}_1 \boldsymbol{\bar{L}}_1^{T} + \boldsymbol{Q}_2$}
\rput[c](3,2.5){$\Downarrow$}
\rput[c](3.35,4.5){$(1)$}
\psframe[framearc=0.2, linewidth=0.8pt](1,5)(5,6)
\rput[c](3,5.5){$\bar{\boldsymbol{L}}_1 \bar{\boldsymbol{L}}_1^{T}$}
\rput[c](3,4.5){$\Downarrow$}
\rput[c](6, 7.5){$\Longrightarrow$}
\rput[c](6,7.8){$(2)$}
\psframe[framearc=0.2, linewidth=0.8pt](1,7)(5,8)
\rput[c](3,7.5){$\boldsymbol{L}_1 \boldsymbol{L}_1^{T}$}
\rput[c](3,6.5){$\Downarrow$}
\rput[c](5.1,6.5){\text{\tiny{Incomplete Cholesky factorization}}}
\psframe[framearc=0.2, linewidth=0.8pt](1,9)(5,10)
\rput[c](3, 9.5){\text{Precision matrix} $\boldsymbol{Q}_1$}
\rput[c](3,8.5){$\Downarrow$}
\rput[c](4.4,8.5){\text{\tiny{Cholesky factorization}}}
\psframe[framearc=0.2, linewidth=0.8pt](6.9,7)(10.1,8)
\rput[c](8.5,7.5){$\left[\boldsymbol{L}_1, ~\boldsymbol{L}_2 \right] \left[\boldsymbol{L}_1, ~\boldsymbol{L}_2 \right]^{T}$}
\rput[c](8.5,6.5){$\Downarrow$}
\rput[c](9.1,6.5){\text{\tiny{cTIGO}}}
\psframe[framearc=0.2, linewidth=0.8pt](7,5)(10,6)
\rput[c](8.5,5.5){$\boldsymbol{\tilde{L}}\boldsymbol{\tilde{L}}^{T}$}
\end{pspicture}
\caption{Diagram for the algorithm for finding sparser Cholesky factor by incomplete Cholesky factorization used by \citet{wist2006specifying} (Routine $1$)
and the algorithm used in this paper (Routine $2$).}
\label{fig: cholesky_diag}
\end{figure}
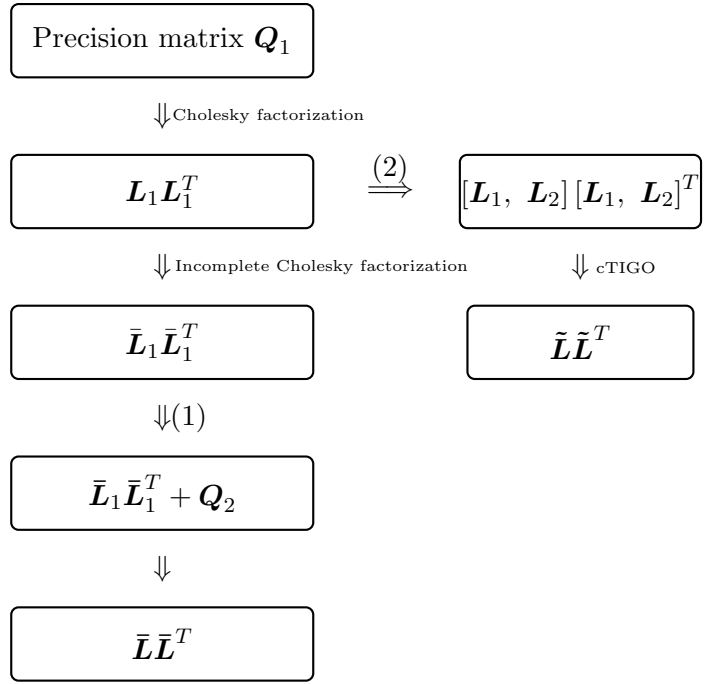

The rest of this paper is organized as follows. In Section \ref{sec: cholesky_background} some basic theory on GMRFs, sparsity patterns for
precision matrices and Cholesky factors of GMRFs are presented. Some basic theories on the orthogonal factorization and the incomplete orthogonal factorization are also introduced in this section.
In Section \ref{sec: cholesky_algorithm} the algorithm for obtaining the sparse Cholesky factor from the
incomplete orthogonal factorization is introduced. A small example is given in order to illustrate how the algorithm works when the GMRFs are
conditioned on a subset of the variables or on observed data. Results for different structures on the precision matrices are given in Section \ref{sec: cholesky_results}. 
Conclusion and general discussion in Section \ref{sec: cholesky_conclusion} ends the paper.

\section{Background and Preliminaries} \label{sec: cholesky_background}

\subsection{Basic theory on GMRFs} \label{sec: cholesky_GMRFs}
A random vector $\boldsymbol{x}=(x_1,x_2,\ldots,x_n)^{\mbox{T}} \in \mathbb{R}^n$ is called a GMRF if it is Gaussian distributed and processes
a Markov property. The structure of a GMRF is usually presented by a labeled graph $\mathcal{G} = (\mathcal{V}, \mathcal{E})$, where $\mathcal{V}$
is the set of vertexes $\{ 1,2,\ldots, n \}$ and $\mathcal{E}$ is the set of edges. The graph $\mathcal{G}$ satisfies the properties that no edge between node $i$ and node $j$ if and only if 
$x_i \perp x_j | \boldsymbol{x}_{-ij}$ \citep{rue2005gaussian}, where  $\{ \boldsymbol{x}_{-ij}; i, j = 1, 2, \dots, n\}$ denotes $\boldsymbol{x}_{-\{i,j\}}$.
If the random vector $\boldsymbol{x}$ has a mean
$\boldsymbol{\mu}$ and a precision matrix $\boldsymbol{Q}_1>0$, the probability density of the vector $\boldsymbol{x}$ is

\begin{equation}\label{eq: cholesky_density_function}
\pi(\boldsymbol{x}|\boldsymbol{\theta}) = 
\left(\frac{1}{2\pi} \right)^{n/2}|\boldsymbol{Q}_1(\boldsymbol{\theta})|^{1/2}\exp\left(-\frac{1}{2}(\boldsymbol{x}-\boldsymbol{\mu})^{\mbox{T}}\boldsymbol{Q}_1(\boldsymbol{\theta})(\boldsymbol{x}-\boldsymbol{\mu})\right),
\end{equation}
with the property 
$$ Q_{ij}\neq 0 \Longleftrightarrow \{i,j\} \in \mathcal{E}
\text{ for all }  i \neq j. $$
The notation $\boldsymbol{Q}_1>0$ means that $\boldsymbol{Q}_1$ is a symmetric positive definite matrix. $\boldsymbol{\theta}$ denotes the parameters in the precision matrix. 
This implies that any vector with a Gaussian distribution and a symmetric positive definite 
covariance matrix is a Gaussian random field (GRF), and GMRFs are GRFs with Markov properties.
The graph $\mathcal{G}$ determines the nonzero pattern of $\boldsymbol{Q}_1$. 
If $\mathcal{G}$ is fully connected, then $\boldsymbol{Q}_1$ is a complete dense matrix. A useful property of GMRF is that we can know whether $x_i$ and $x_j$ are conditionally 
independently or not directly from the precision matrix $\boldsymbol{Q}_1$ and the graph $\mathcal{G}$.
Values of mean $\boldsymbol{\mu}$ do not have any influence on the pairwise conditional independence properties of the GMRFs, and hence we set $ \boldsymbol{\mu} = \boldsymbol{0}$ 
in the following sections unless otherwise specified.
The diagonal elements in the precision matrix $Q_{ii} \, (i = 1, 2, \dots, n)$ are the conditional precisions of $x_{i}$ given all the other nodes $\boldsymbol{x}_{-i}$. The off-diagonal elements 
$Q_{ij} \, (i,j = 1, 2, \dots, n, i \neq j)$ can provide information about the correlations between $x_i$ and $x_j$ given on the nodes $\boldsymbol{x}_{-ij}$.
These are the main differences in the interpretation between the precision matrix $\boldsymbol{Q}_1$ and the covariance matrix $\boldsymbol{\Sigma}_1$. 
The  covariance matrix $\boldsymbol{\Sigma}_1$ contains the marginal variance of $x_i$ and the marginal correlation between $x_i$ and $x_j$. However, with the precision matrix 
the marginal properties are not directly available \citep{rue2005gaussian}.

Since $\boldsymbol{Q}_1$ is symmetric positive definite, there is a unique Cholesky factor $\boldsymbol{L}_1$ where $\boldsymbol{L}_1$ is a lower triangle matrix
satisfying $ \boldsymbol{Q}_1 = \boldsymbol{L}_1\boldsymbol{L}_1^{\mbox{T}} $. If we want to sample from the GMRF $\boldsymbol{x} \thicksim \mathcal{N}(\boldsymbol{\mu}, \boldsymbol{Q}_1^{-1})$, 
the Cholesky factor $\boldsymbol{L}_1$ is commonly used. One algorithm for sampling GMRFs is given in Section \ref{sec: cholesky_sampling}.
More algorithms for sampling GMRFs with different specifications are also available. See \citet[Chapter $2$]{rue2005gaussian} for a detailed discussion on these algorithms.
\citet{rue2005gaussian} showed how to check the sparsity pattern of the Cholesky factor of a GMRF. Define

\begin{equation}\label{eq: cholesky_F}
F(i,j) = {i+1, i+2, \ldots, j-1, j+1, \ldots, n,}
\end{equation}
which is the future of i except j. Then
\begin{equation}\label{eq: cholesky_conditional_indpendent}
x_i \bot x_j\ | \ \boldsymbol{x}_{F(i,j)}  \Longleftrightarrow
L_{ji} = 0,
\end{equation}
 and $F(i,j)$ is called a separating subset of $i$ and $j$. 
 If $i \sim j$ denotes that $i$ and $j$ are neighbors, then $F(i,j)$ cannot be a separating subset for $i$ and $j$ whenever $i \sim j$.
Further, the Cholesky factor of the precision matrix of a GMRF is always equally dense or denser than the lower triangle
part of $\boldsymbol{Q}_1$. 

In many situations there are more nonzero elements in $\boldsymbol{L}_1$ than in the lower triangular part of $\boldsymbol{Q}_1$. 
Denote $n_{\boldsymbol{L}_1}$ and $n_{\boldsymbol{Q}_1}$  the numbers of nonzero elements in the Cholesky factor $\boldsymbol{L}_1$ and the lower triangular part of precision matrix $\boldsymbol{Q}_1$, respectively.
The difference $n_f =  n_{\boldsymbol{L}_1} - n_{\boldsymbol{Q}_1}$ is called the fill-in. The ideal case is $n_f = 0$ or $n_{\boldsymbol{L}_1} = n_{\boldsymbol{Q}_1}$,  
but commonly $n_{\boldsymbol{L}_1} > n_{\boldsymbol{Q}_1}$ or even $n_{\boldsymbol{L}_1} \gg  n_{\boldsymbol{Q}_1}$. 
It is known that the fill-in $n_f$ not only depends on the graph, but also on the order of the nodes in the graph \citep{rue2005gaussian}.
Thus a re-ordering is usually needed before doing a Cholesky factorization. It is desirable to find an optimal 
or approximately optimal ordering of the graph in order to make the Cholesky factor of $\boldsymbol{Q}_1$ sparser and to save computational resources,
but this is not the focus in this paper. We refer to \cite{rue2005gaussian} for more information on why it is desirable to do re-ordering of the graph of a GMRF.

 \subsection{Orthogonal factorization} \label{sec: cholesky_IGO}
 With an $m \times n$ matrix $\boldsymbol{A}$, the orthogonal factorization of $\boldsymbol{A}$ is 

 \begin{equation}\label{eq: cholesky_Orthogonal_factorization}
 \boldsymbol{A} = \boldsymbol{S} \cdot \boldsymbol{R},
 \end{equation}
 where $\boldsymbol{S} \in \mathbb{R}^{m \times m}$ is an orthogonal matrix  and $\boldsymbol{R} \in \mathbb{R}^{m \times n}$ is an
 upper triangular matrix. We assume without loss of the generality that $m \geq n$.  There exist many algorithms for orthogonal factorization,
 such as the standard Gram - Schmidt algorithm or the modified Gram--Schmidt (MGS) algorithm and the Householder orthogonal factorization. We refer to
 \citet{saad2003iterative} and \citet{bjorck1996numerical} for more algorithms. If $\boldsymbol{A}$ has full  column rank, then the first $n$ columns of $\boldsymbol{S}$ forms an
 orthonormal basis of ran($\boldsymbol{A}$), where ran($\boldsymbol{A}$) denotes the range of $\boldsymbol{A}$

 $$ \ran(\boldsymbol{A}) = \left\{ \boldsymbol{y} \in \mathbb{R}^m: \boldsymbol{y} = \boldsymbol{A} \boldsymbol{x} \ \textrm{for} \ \textrm{some} \
 \boldsymbol{x} \in \mathbb{R}^n \right\}.$$
The orthogonal factorization is usually used to find an orthonormal basis for a matrix. 
The orthogonal factorization has many advantages and some of them are given in what follows. 

 \begin{enumerate}
 \item It is numerically stable and robust both with a Householder orthogonal factorization and with a orthogonal factorization using Givens rotations. If the matrix $\boldsymbol{A}$ is non-singular,
 it always produces  an orthogonal matrix $\boldsymbol{S}$ and an upper triangular matrix $\boldsymbol{R}$ which satisfy Equation \eqref{eq: cholesky_Orthogonal_factorization};
 \item It is easy to solve the linear system of equations $\boldsymbol{A}\boldsymbol{x} = \boldsymbol{b} $ using the upper triangular matrix $\boldsymbol{R}$ since $\boldsymbol{S}$ is an orthogonal matrix;
 \item The normal equation has the form $\boldsymbol{A}^{\mbox{T}}\boldsymbol{A} \boldsymbol{x} = \boldsymbol{A}^{\mbox{T}}\boldsymbol{b}$
       and the normal equation matrix is $\boldsymbol{A}^{\mbox{T}}\boldsymbol{A}$, where $\boldsymbol{A}^{\mbox{T}}$ denotes the transpose of $\boldsymbol{A}$.
       Then the triangular matrix $\boldsymbol{R}$ is the Cholesky factor of the normal equations matrix.  
 \end{enumerate}

\subsection{Givens rotations} \label{sec: cholesky_givens}
 A Givens rotation $G(i,j,\theta) \in \mathbb{R}^{m \times n}$ is an identity matrix $\boldsymbol{I}$ except that 

 \begin{displaymath}
   \begin{split}
   G_{ii} & = c,\hspace{6.5mm}   G_{ij}  = s, \\
   G_{ji} &= -s, \hspace{3mm} G_{jj}  = c.
  \end{split}
 \end{displaymath}
 If $c = \cos(\theta)$ and $s = \sin(\theta)$, then $\boldsymbol{y} = G(i,j,\theta) \cdot \boldsymbol{x}$ rotates $\boldsymbol{x}$ clockwise in the $(i,j)$-plane with
 $\theta$ radians, which gives

 \begin{equation}\label{eq: eq6}
 y_l = \left\{ \begin{array} {ll}
 x_l, & \textrm{ when } l \neq i,j,  \\
 cx_i + sx_j, & \textrm{ when } l =i, \hspace{10mm} (1 \leq l \leq m),  \\
 -sx_i + cx_j, & \textrm{ when } l=j.
 \end{array}  \right.
 \end{equation}
If we want to rotate $\boldsymbol{x}$ counterclockwise in the $(i,j)$-plane with $\theta$ radians, then we can set $c = \cos(\theta)$ and $s = -\sin(\theta)$.
It is obvious from Equation \eqref{eq: eq6} that if

 \begin{displaymath}
 s = \frac{x_j}{\sqrt{x_i^2 + x_j^2}} \  \textrm{and} \  c = \frac{x_i}{\sqrt{x_i^2 + x_j^2}}
 \end{displaymath}
  then $y_j = 0$. So the Givens rotations can set the elements in $\boldsymbol{A}$ to zeros one at a time. This is useful when dealing with sparse matrices. At
  the same time, $c$ and $s$ are the only two values which we need for this algorithm. Givens rotations are suitable for structured least squares
  problems such as the problems at the heart of GMRFs.

\subsection{Incomplete factorization algorithms}
There are many algorithms for incomplete factorizations of matrices, such as the incomplete triangular factorization and the incomplete orthogonal factorization. These algorithms are commonly 
used in practical applications \citep{axelsson1996iterative, meijerink1981guidelines, saad1988preconditioning}. 
The incomplete factorizations usually have the form
\begin{equation}\label{eq: cholesky:imcimplete_factorizations}
 \boldsymbol{A} = \boldsymbol{M}_1 \cdot \boldsymbol{M}_2 + \boldsymbol{E},
 \end{equation}
where $\boldsymbol{E}$ is the error matrix, and $\boldsymbol{M}_1$ and $\boldsymbol{M}_2$ are some well-structured matrices.
The incomplete factorization algorithms are usually associated with dropping strategies. A dropping strategy for an incomplete factorization specifies rules
for when elements of the factors should be dropped. We returns to a detailed discussion on the dropping strategies in Section \ref{sec: cholesky_sparse_factor}.

One of the commonly used incomplete factorization algorithms is the incomplete triangular factorization, and it is also called incomplete LU (ILU) factorization since 
$\boldsymbol{M}_1$ is a \emph{lower} triangular matrix and $\boldsymbol{M}_2$ is an \emph{upper} triangular matrix. This algorithm
 is usually applied to the square matrices, and it uses Gaussian elimination together with a predefined dropping strategy. 
Many incomplete orthogonal factorizations can be used both for square matrices and for rectangular matrices,
and these algorithms usually use the modified Gram-Schmidt procedure together with some dropping strategies in order to return a sparse and generally non-orthogonal matrix $\boldsymbol{S}$ and a sparse
upper triangular matrix $\boldsymbol{R}$. \citet{wang1997cimgs} proved the existence and stability of the associated incomplete orthogonal factorization. 
Incomplete orthogonal factorization using Givens rotations was proposed by \citet{bai2001class}. The main idea of the incomplete orthogonal factorization is to 
use the Givens rotations to zero-out the elements in the matrix one at a time. Some predefined dropping strategies are needed in order to achieve the sparsity pattern for the upper triangular matrix $\boldsymbol{R}$.
This algorithm computes a sparse matrix $\boldsymbol{S}$, which is always an orthogonal matrix, together with a sparse upper triangular matrix $\boldsymbol{R}$.   
Since the matrix $\boldsymbol{S}$ is the product of the Givens rotations matrices, it is always an orthogonal matrix.
The incomplete orthogonal factorization has the form

 \begin{equation}\label{eq: cholesky:imcimplete_orthogonal_factorizations}
 \boldsymbol{A} = \boldsymbol{S} \cdot \boldsymbol{R} + \boldsymbol{E}.
 \end{equation}
  This method was originally described and implemented by \citet{jennings1984incomplete}.
  \citet{saad1988preconditioning} described this incomplete orthogonal factorization with the modified Gram--Schmidt process using some numerical dropping
  strategy. Another version of the incomplete orthogonal factorization is given by \citet{bai2001class} with Givens rotations.
\citet{bai2001class}  claimed that this incomplete algorithm inherited the good properties of the orthogonal factorization. 

  \begin{enumerate}
  \item $\boldsymbol{R}$ is a sparse triangular matrix and $\boldsymbol{S}$ is an orthogonal matrix. 
         \citet{bai2009numerical} pointed out that the sparsity pattern of the upper-triangular part of $\boldsymbol{A}$ is inherited by the incomplete upper triangular matrix $\boldsymbol{R}$.
          They also pointed out that the number of nonzero elements in the upper triangular matrix $\boldsymbol{R}$ is less than the number of nonzero elements in the upper-triangular part of $\boldsymbol{A}$.
  \item The error matrix $\boldsymbol{E} = \boldsymbol{A} - \boldsymbol{S} \cdot \boldsymbol{R}$ is ``small'' in some sense and the size of the errors can be controlled by the pre-defined threshold.
  \item The triangular matrix $\boldsymbol{R}$ is non-singular whenever $\boldsymbol{A}$ is not singular. We can always obtain this triangular matrix in the same way as the orthogonal factorization and 
        $\boldsymbol{R}$ will always be an incomplete Cholesky factor for the normal equation matrix $\boldsymbol{A}^{\mbox{T}} \boldsymbol{A}$.
  \item Another merit of the incomplete orthogonal factorization with Givens rotations is that we do not need to form the corresponding normal matrices $\boldsymbol{S}$ since only the $(c, s)$-pair is needed in
        order to find the upper triangular matrix $\boldsymbol{R}$. More information about the Givens rotations and the $(c, s)$-pairs are given in Section \ref{sec: cholesky_givens}
  \end{enumerate}

\citet{papadopoulos2005class} implemented different versions of the algorithm proposed by \citet{bai2001class}. 
There are two main differences between these versions. The first one is the order in which elements in the matrix $\boldsymbol{A}$ are zeroed out, and the second one is the rules for dropping strategies. 
We refer to \citet{bai2001class} and \citet{papadopoulos2005class} for more information about this algorithm and implementations.
There are also more variations for incomplete orthogonal factorization using Givens rotations, such as \citet{bai2009modified} and  \citet{bai2009numerical}.
\citet{bai2009modified} proposed some modified incomplete orthogonal factorization methods and these algorithms have special storage and sparsity-preserving techniques. 
\citet{bai2009modified} showed a way to adopt a diagonal compensation strategy by reusing the dropped elements.
These dropped elements are added to the main diagonal elements of the same rows in the incomplete upper-triangular matrix $\boldsymbol{R}$. 
\citet{bai2009numerical} proposed practical incomplete Givens orthogonalization (IGO) methods for solving large sparse systems of linear equations.
They claimed that these incomplete IGO methods took the storage requirements, the accuracy of the solutions and the coding of the pre-conditioners into consideration.

  In this report, we have chosen the column-wise threshold incomplete Givens orthogonal (cTIGO) factorization
  algorithm for finding the sparse upper triangular matrix $\boldsymbol{R}$. 
 This sparse upper triangular matrix $\boldsymbol{R}$ has sparse structure and can be used for specifying the GMRFs. 
 The matrix $\boldsymbol{S}$ does not need to be stored in our setting since 
 we only need the upper-triangular matrix $\boldsymbol{R}$. The matrix $\boldsymbol{S}$ only needs computed whenever it is explicitly needed.

\section{Specifying GMRFs using sparse Cholesky factors} \label{sec: cholesky_algorithm}
In this section we begin by introducing the background of GMRFs conditioned on a subset of the variables or on observed data. 
A small example is used to illustrate how the cTIGO algorithm works when applied to GMRFs. 

\subsection{GMRFs conditioned on a subset of the variables} \label{sec: cholesky_conditioningsubsets}
\textbf{I. GMRFs with soft constraint} \newline
Let $\boldsymbol{x}$ be a GMRF and assume that we have observed some linear transformation $\boldsymbol{Ax}$ with additional Gaussian distributed noise
\begin{displaymath}
 \boldsymbol{e}|\boldsymbol{x} \sim \mathcal{N}(\boldsymbol{Ax}, \boldsymbol{\boldsymbol{Q}}_{\epsilon}^{-1}),
\end{displaymath}
where $k$ is the dimension of the vector $\boldsymbol{e}$, $\boldsymbol{A}$ is a $k \times n$ matrix with rank $k$ and $k < n$, and
$ \boldsymbol{Q_{\epsilon}} > 0$ is the precision matrix of $\boldsymbol{e}$.
This is called ``soft constraint" by \citet{rue2005gaussian} and the log-density for the model is 
\begin{equation}\label{eq: logdensity(x|e)}
 \log{\pi(\boldsymbol{x}|\boldsymbol{e})} = -\frac{1}{2}(\boldsymbol{x}^{\mbox{T}} - \boldsymbol{\mu}) \boldsymbol{Q}_1 (\boldsymbol{x} - \boldsymbol{\mu})
                                            -\frac{1}{2}(\boldsymbol{e} - \boldsymbol{Ax})^{\mbox{T}} \boldsymbol{Q_{\epsilon}} (\boldsymbol{e}-\boldsymbol{Ax}) + \text{const},
\end{equation}
where $\boldsymbol{\mu}$  and $\boldsymbol{Q}_1$ are the mean and the precision matrix of the GRMF, respectively, and ``const'' is constant.
If $\boldsymbol{x}$ has mean $\boldsymbol{\mu} = \boldsymbol{0}$ then
\begin{equation}
 \boldsymbol{x}|\boldsymbol{e} \sim \mathcal{N}_c (\boldsymbol{A}^{\mbox{T}}\boldsymbol{Q}_{\boldsymbol{\epsilon}}\boldsymbol{e}, \boldsymbol{Q}_1+\boldsymbol{A}^{\mbox{T}}\boldsymbol{Q}_{\boldsymbol{\epsilon}}\boldsymbol{A}).
\end{equation}
Here we use the canonical form $\mathcal{N}_c (\cdot, \cdot)$ for $\boldsymbol{x}|\boldsymbol{e}$. We refer to \citet[Chapter 2.3.2]{rue2005gaussian} for more information about the canonical form for GMRF.
We can notice that for specifying the GMRFs with  ``soft constraint", the Routine $(2)$ as shown in Figure \ref{fig: cholesky_diag} can be applied since 
$\boldsymbol{Q} = \boldsymbol{Q}_1 + \boldsymbol{A}^{\mbox{T}}\boldsymbol{Q}_{\boldsymbol{\epsilon}}\boldsymbol{A}$ with $\boldsymbol{Q_2} = \boldsymbol{A}^{\mbox{T}}\boldsymbol{Q}_{\boldsymbol{\epsilon}}\boldsymbol{A}$. 
\newline
\newline
\textbf{II. Models with auxiliary variables} \newline
Auxiliary variables are crucial in some models to retrieve GMRF full conditionals. We look at binary regression models with auxiliary variables.

Assume that we have Bernoulli observational model for binary responses. The binary responses have latent parameters which is a GMRF $\boldsymbol{x}$, and the GMRF usually 
depends on some hyperparameters $\boldsymbol{\theta}$. We usually choose the logit or probit models in this case, where
\begin{equation}
 y_i \sim \mathcal{B} \left( \eta^{-1}(z_i^{\mbox{T}}\boldsymbol{x}) \right), \hspace{3mm} i = 1,2,\dots, m
\end{equation}
 where $\mathcal{B}(p)$ denotes a Bernoulli distribution with probability $p$ for $1$ and $1-p$ for $0$. $\boldsymbol{z}_i$ is a vector of covariates and we assume it is fixed.
$\eta(\cdot)$ is a link function
\begin{equation}
 \eta(p) = \begin{cases}
            \log\left(p/(1-p)\right) & \text{ for logit link}  \\
            \Phi(p)       & \text{ for probit link}   
           \end{cases}
\end{equation}
where $\Phi(\cdot)$ denotes the cumulative distribution function (CDF) for standard Gaussian distribution.
We can use models with auxiliary variables $\boldsymbol{\omega} = (\omega_1, \omega_2, \dots, \omega_m)$ to represent these models,
\begin{displaymath}
 \begin{split}
 \epsilon_i & \overset{iid} \sim G(\epsilon_i), \\
 \omega_i & = \boldsymbol{z}_i^{\mbox{T}}\boldsymbol{x} +\epsilon_i, \\
  y_i & = \begin{cases}
        1, &  \text{ if } \omega_i > 0,    \\
        0, &  \text{ otherwise},
        \end{cases}
 \end{split}
\end{displaymath}
where $G(\cdot)$ is the CDF of standard logistic distribution in the logit case and $G(\cdot) = \Phi(\cdot)$ in the probit case. We refer to \citet[Chapter $28$]{forbes2011statistical} for more information about 
the standard logistic distribution and its CDF.  
Let $\boldsymbol{x}|\boldsymbol{\theta}$ be a GMRF of dimension $n$ with mean $\boldsymbol{\mu} = \boldsymbol{0}$, and assume that we have $\boldsymbol{z}_i^{\mbox{T}}\boldsymbol{x} = x_i$ and $m = n$. With the probit link 
the posterior distribution is 
\begin{equation}
\pi(\boldsymbol{x},\boldsymbol{\omega},\boldsymbol{\theta} |\boldsymbol{y}) \propto \pi(\boldsymbol{\theta})\pi(\boldsymbol{x}|\boldsymbol{\theta})\pi(\boldsymbol{\omega}|\boldsymbol{x})\pi(\boldsymbol{y}|\boldsymbol{\omega}).
\end{equation}
The conditional distribution of $\boldsymbol{x}$ given the auxiliary variables can then be obtained
\begin{displaymath}
 \pi(\boldsymbol{x}|\boldsymbol{\theta},\boldsymbol{\omega}) \propto \exp \left( -\frac{1}{2}\boldsymbol{x}^{\mbox{T}}\boldsymbol{Q}_1(\boldsymbol{\theta})\boldsymbol{x}-\frac{1}{2}\sum_i(x_i-\omega_i) \right)^2,
\end{displaymath}
and this can be written in the canonical form 
\begin{displaymath}
 \boldsymbol{x}|\boldsymbol{\theta},\boldsymbol{\omega} \sim \mathcal{N}_c(\omega, \boldsymbol{Q}_1(\boldsymbol{\theta}) + \boldsymbol{I}).
\end{displaymath}
A general form for the conditional distribution of $\boldsymbol{x}$ given the auxiliary variables, for this binomial model with a probit link function, is given as
\begin{displaymath}
  \boldsymbol{x}|\boldsymbol{\theta},\boldsymbol{\omega} \sim \mathcal{N}_c(\boldsymbol{Z}^{\mbox{T}} \omega, \boldsymbol{Q}_1(\boldsymbol{\theta}) + \boldsymbol{Z}^{\mbox{T}}\boldsymbol{Z}),
\end{displaymath}
where $\boldsymbol{Z}$ is an $m \times n$ matrix.
Similarly, the conditional distribution of $\boldsymbol{x}$ given the auxiliary variables for the logistic regression model can be written as
\begin{displaymath}
  \boldsymbol{x}|\boldsymbol{\theta},\boldsymbol{\omega} \sim \mathcal{N}_c(\boldsymbol{Z}^{\mbox{T}} \boldsymbol{\Lambda} \omega, \boldsymbol{Q}_1(\boldsymbol{\theta}) + \boldsymbol{Z}^{\mbox{T}} \boldsymbol{\Lambda} \boldsymbol{Z}),
\end{displaymath}
where $\boldsymbol{\Lambda} = \diag(\boldsymbol{\lambda})$, and $\lambda_i$ is from the model specification. See more discussions on these models in \citet[Chapter $4.3$]{rue2005gaussian}.

In all the examples in this section, the models are suitable for use Routine (2) in Figure \ref{fig: cholesky_diag} to find the sparse Cholesky factors of the precision matrices of GMRFs.

\subsection{GMRFs conditioned on data} \label{sec: cholesky_GMRFs_conditioning_data}
As mentioned in Section ~\ref{sec: cholesky_GMRFs}, if a vector $\boldsymbol{x}$ is a GMRF with precision matrix $\boldsymbol{Q}_1$ and mean vector $\boldsymbol{\mu}$, then
the density of the vector is given by Equation \eqref{eq: cholesky_density_function}. In practical applications it is common to set $\boldsymbol{\mu} = \boldsymbol{0}$ \citep{rue2005gaussian,gneitingmatern},
which gives the probability density function 
\begin{equation} \label{eq: cholesky_GMRF_zeromean}
\pi(\boldsymbol{x}|\boldsymbol{\theta})
= \left(\frac{1}{2\pi} \right)^{n/2}|\boldsymbol{Q}_1(\boldsymbol{\theta})|^{1/2}\exp\left(-\frac{1}{2}\boldsymbol{x}^{\mbox{T}}\boldsymbol{Q}_1(\boldsymbol{\theta})\boldsymbol{x}\right).
\end{equation}
Assume that the data are of dimension $k$ and defined as a $k$-dimensional random vector

\begin{displaymath}
 \boldsymbol{y}|\boldsymbol{x}, \boldsymbol{\theta} \sim \mathcal{N} \left( \boldsymbol{Ax}, \boldsymbol{Q}_{\boldsymbol{\epsilon}}^{-1} \right)
\end{displaymath}
and has the probability density function

\begin{equation} \label{eq: cholesky_density_data}
  \pi(\boldsymbol{y|\boldsymbol{x},\boldsymbol{\theta}}) = \left(\frac{1}{2\pi} \right) ^{k}|\boldsymbol{Q}_{\boldsymbol{\epsilon}}|^{1/2}
               \exp \left(-\frac{1}{2} (\boldsymbol{y}-\boldsymbol{Ax})^{\mbox{T}} \boldsymbol{Q}_{\boldsymbol{\epsilon}} (\boldsymbol{y}-\boldsymbol{Ax}) \right),
\end{equation}
where $\boldsymbol{A}$ is a $k \times n$ matrix used to select the data location. The precision matrix $\boldsymbol{Q}_{\boldsymbol{\epsilon}}$ for the noise process is a positive definite matrix with dimension $k \times k$. 
Notice that the density function $\pi(\boldsymbol{y|\boldsymbol{x},\boldsymbol{\theta}})$ is not dependent on the $\boldsymbol{\theta}$, and hence the probability density function $\pi(\boldsymbol{y|\boldsymbol{x},\boldsymbol{\theta}})$
can be written as $\pi(\boldsymbol{y|\boldsymbol{x}})$.
The probability density function of ${\boldsymbol{x}|\boldsymbol{y}, \boldsymbol{\theta}}$ can be found from Equations \eqref{eq: cholesky_GMRF_zeromean} and \eqref{eq: cholesky_density_data} through
\begin{equation} \label{eq: cholesky_x|y,theta}
\begin{split}
 \pi(\boldsymbol{x}|\boldsymbol{y}, \boldsymbol{\theta}) & \propto \pi({\boldsymbol{x}, \boldsymbol{y} | \boldsymbol{\theta}})   \\
          & = \pi(\boldsymbol{x}|\boldsymbol{\theta}) \pi(\boldsymbol{y}|\boldsymbol{x}, \boldsymbol{\theta}) \\
          & \propto \exp \left( -\frac{1}{2} \left[ x^{\mbox{T}} (\boldsymbol{Q}_1(\boldsymbol{\theta})
            + \boldsymbol{A}^{\mbox{T}} \boldsymbol{Q}_{\boldsymbol{\epsilon}} \boldsymbol{A}) \boldsymbol{x} 
           - 2\boldsymbol{x}^{\mbox{T}} \boldsymbol{A}^{\mbox{T}}\boldsymbol{Q}_{\boldsymbol{\epsilon}} \boldsymbol{y} \right] \right).
 \end{split}
\end{equation}
Similarly, the density function \eqref{eq: cholesky_x|y,theta} can be written in the canonical form as

\begin{equation} \label{eq: cholesky_canonical_form}
 {\boldsymbol{x}| \boldsymbol{y}, \boldsymbol{\theta}} \sim \mathcal{N} \left( \boldsymbol{\mu}_c (\boldsymbol{\theta}), \boldsymbol{Q}_c (\boldsymbol{\theta}) \right).
\end{equation}
where $\boldsymbol{\mu}_c (\boldsymbol{\theta}) =  \boldsymbol{Q}_c(\boldsymbol{\theta})^{-1} \boldsymbol{A}^{\mbox{T}} \boldsymbol{Q}_{\boldsymbol{\epsilon}} \boldsymbol{y} $, 
and $\boldsymbol{Q}_c (\boldsymbol{\theta}) = \boldsymbol{Q}_1(\boldsymbol{\theta}) + \boldsymbol{A}^{\mbox{T}} \boldsymbol{Q}_{\boldsymbol{\epsilon}} \boldsymbol{A}$. Now we can notice that the precision matrix for the GMRF conditional on data has the form 
$\boldsymbol{Q} = \boldsymbol{Q}_1(\boldsymbol{\theta}) + \boldsymbol{Q}_2 $ with $\boldsymbol{Q}_2 = \boldsymbol{A}^{\mbox{T}} \boldsymbol{Q}_{\boldsymbol{\epsilon}} \boldsymbol{A}$, where $\boldsymbol{Q}_2 $ does not
depend on $\boldsymbol{\theta}$. Since $\boldsymbol{Q}$ has the same form as given in Routine ($2$) in Figure \ref{fig: cholesky_diag}, 
it is possible to use the proposed routine to find the sparse Cholesky factor of the precision matrix of the GMRF conditioned on data.

Even though it is not the focus of this paper, it might be useful to point out that using Equations \eqref{eq: cholesky_GMRF_zeromean} - \eqref{eq: cholesky_x|y,theta}, we can find the analytical formula for the posterior
density function of $(\boldsymbol{\theta} | \boldsymbol{y})$ through Bayes' formula. It is given by
\begin{equation} \label{eq: cholesky_log(theta|y)}
\begin{split}
 \log(\pi(\boldsymbol{\theta}|\boldsymbol{y})) = & \text{ const.} + \log(\pi(\boldsymbol{\theta})) + \frac{1}{2}\log(|\boldsymbol{Q}_1(\boldsymbol{\theta})|) \\
             & - \frac{1}{2}\log(|\boldsymbol{Q}_c(\boldsymbol{\theta})|) + \frac{1}{2} \boldsymbol{\mu}_c(\boldsymbol{\theta})^{\mbox{T}} \boldsymbol{Q}_c(\boldsymbol{\theta}) \boldsymbol{\mu}_c(\boldsymbol{\theta}).
\end{split}
\end{equation}
We refer to \citet{hu2012multivariate} for detailed information about this log-posterior density function. The log-posterior density function  $ \log(\boldsymbol{\theta} | \boldsymbol{y})$ 
is crucial when doing statistical inference in Bayesian statistics.

The sparse structure of $\boldsymbol{Q}_2$ depends both on the structures of $\boldsymbol{A}$ and of $\boldsymbol{Q}_{\boldsymbol{\epsilon}}$. 
In most cases, the $\boldsymbol{Q}_{\boldsymbol{\epsilon}}$ is a diagonal matrix and the matrix $\boldsymbol{A}$ has sparse structure.
Therefore $\boldsymbol{Q}_2$ should also have a sparse structure. When the observations are conditional independent, 
but have a non-Gaussian distribution, then we can use a GMRF approximation to obtain a sparse structure of $\boldsymbol{Q}_2$ as presented in Section \ref{sec: cholesky_GMRF_approximation}.

\subsection{GMRF approximation} \label{sec: cholesky_GMRF_approximation}
Suppose there are $n$ conditionally independent observations $y_1,y_2, \ldots, y_n$ from a non-Gaussian distribution and that $y_i$ is an
indirect observation of $x_i$. $\boldsymbol{x}$ is a GMRF with mean $\boldsymbol{\mu} =\boldsymbol{0}$ and precision matrix $\boldsymbol{Q}_1$.
The full conditional $\pi(\boldsymbol{x}|\boldsymbol{y}, \boldsymbol{\theta})$ then has the form

\begin{equation}\label{eq: cholesky_nongaussian_data}
\pi \left(\boldsymbol{x}|\boldsymbol{y}, \boldsymbol{\theta} \right) \propto \exp\left(-\frac{1}{2}\boldsymbol{x}^{\mbox{T}} {\boldsymbol{Q}_1} \boldsymbol{x} + \sum_{i=1}^{n}\log{\pi(y_i|x_i)} \right).
\end{equation}
Apply a second-order Taylor expansion of $\sum_{i=1}^{n} \log{\pi(y_i|x_i)}$ around $\boldsymbol{\mu}_0$. In other words, construct a suitable GMRF proposal density
$\widetilde{\pi}(\boldsymbol{x}|\boldsymbol{y}, \boldsymbol{\theta})$

\begin{equation} \label{eq: cholesky_GMRF_proposal_density}
 \begin{split}
\widetilde{\pi}(\boldsymbol{x}|\boldsymbol{y}, \boldsymbol{\theta}) & \propto \exp\left( -\frac{1}{2}{\boldsymbol{x}}^{\mbox{T}} {\boldsymbol{Q}_1(\boldsymbol{\theta})}{\boldsymbol{x}} + \sum_{i=1}(a_i + b_i x_i -\frac{1}{2}c_ix_i^2) \right) \\
 & \propto \exp\left( -\frac{1}{2}\boldsymbol{x}^{\mbox{T}} \left({\boldsymbol{Q}_1(\boldsymbol{\theta}) + \diag(\boldsymbol{c})}\right) \boldsymbol{x} + \boldsymbol{b}^{\mbox{T}}  {\boldsymbol{x}} \right).
\end{split}
\end{equation}

\noindent $c_i$ should set to zero when $c_i < 0$. $\boldsymbol{b}$ and $\boldsymbol{c}$ depend on $\boldsymbol{\mu}_0$.
 The canonical parametrization of $\widetilde{\pi}(\boldsymbol{x}|\boldsymbol{y}, \boldsymbol{\theta})$ has the form

\begin{displaymath}
\mathcal{N}_{c}\left(\boldsymbol{b}, \boldsymbol{Q}_1(\boldsymbol{\theta}) + \diag(\boldsymbol{c})\right).
\end{displaymath}
In this case $\boldsymbol{Q}_2$ has a diagonal structure. An important feature of \eqref{eq: cholesky_GMRF_proposal_density} is that it inherits the \emph{Markov}
property of the prior on $\boldsymbol{x}$, which is useful for sampling GMRF. When $\boldsymbol{\mu} \neq \boldsymbol{0}$, the canonical parametrization of the 
$(\boldsymbol{x}|\boldsymbol{y},\boldsymbol{\theta})$ is changed to 
\begin{displaymath}
\mathcal{N}_{c}\left(\boldsymbol{Q} \boldsymbol{\mu} + \boldsymbol{b}, \boldsymbol{Q}_1(\boldsymbol{\theta}) + \diag(\boldsymbol{c})\right),
\end{displaymath}
and does not change the matrix $\boldsymbol{Q} = \boldsymbol{Q}_1 +\boldsymbol{Q}_2$.

As it was pointed out in Section \ref{sec: cholesky_GMRFs}, to sample from the GMRFs, the Cholesky factor $\boldsymbol{L}$ is one of most important factors. In order to save
computational resources, a sparse Cholesky factor is preferable if the approximated precision matrix is ``close'' to the
original precision matrix, where ``close'' means both in structure and the elements.

\subsection{Theoretical background} \label{sec: cholesky_algorithm_orthogonal}
It has been mentioned in Section \ref{sec: cholesky_GMRFs} that the sparsity pattern of the Cholesky factor is determined by the graph
$\mathcal{G}$, and it is unnecessary to calculate the zero elements in the Cholesky factor. 
In this section, we are going to introduce the theoretical background for finding the Cholesky factor from the orthogonal factorization when the GMRF is conditioned on 
observed data or a subset of the variables.

Let $\boldsymbol{y}$ be the observed data and assume $ \boldsymbol{y} = (y_1, y_2, \ldots, y_n)$ has the Gaussian distribution, then the density of $\boldsymbol{x}$ conditioned on $\boldsymbol{y}$ has the form in 
\eqref{eq: cholesky_x|y,theta}. In the discussed situations in Section \ref{sec: cholesky_conditioningsubsets} - Section \ref{sec: cholesky_GMRF_approximation},
the precision matrix $\boldsymbol{Q}$ can be split into two parts, the precision matrix $\boldsymbol{Q}_1$ of the GMRF  $(\boldsymbol{x}|\boldsymbol{\theta})$ and the matrix $\boldsymbol{Q}_2$
which is the additional effect. The matrix $\boldsymbol{Q}_2$ is usually a diagonal matrix or another type of sparse matrix. 
If the data is not Gaussian distributed, then we can apply the GMRFs approximation given in \eqref{eq: cholesky_GMRF_proposal_density} and it returns the precision matrix $\boldsymbol{Q}_1$ 
with a diagonal matrix $\boldsymbol{Q}_2$ added. This structure satisfies the Routine (2) in Figure \ref{fig: cholesky_diag}. 

Let $\boldsymbol{Q} = \boldsymbol{Q}_1 + \boldsymbol{Q}_2$ 
and assume that the Cholesky factors for $\boldsymbol{Q}_1$, $\boldsymbol{Q}_2$ and $\boldsymbol{Q}$ are $\boldsymbol{L}_1$, $\boldsymbol{L}_2$ and $\boldsymbol{L}$, respectively. 
The Cholesky factors $\boldsymbol{L}_1$ and $\boldsymbol{L}_2$ are assumed to be known. We have the following results.

 \begin{obs} \label{thm_Cholesky}
  Let $\boldsymbol{x} \in \mathbb{R}^n$  be a zero mean GMRF with precision matrix £$\boldsymbol{Q}_1$. Assume that 
  the precision matrix has the form $\boldsymbol{Q} = \boldsymbol{Q}_1 + \boldsymbol{Q}_2 \in \mathbb{R}^{n\times n}$
  when conditioned on observed data or a subset of the variables. 
  Let the Cholesky factors for $\boldsymbol{Q}_1$ and $\boldsymbol{Q}_2$ be $\boldsymbol{L}_1$ and $\boldsymbol{L}_2$, respectively. Form
  $$\boldsymbol{A} =  \begin{pmatrix} \boldsymbol{L}_1^{\mbox{T}}\\\boldsymbol{L}_2^{\mbox{T}}\end{pmatrix}.$$
  Then ${\boldsymbol{A}^{\mbox{T}}} {\boldsymbol{A}}$ is the precision matrix $\boldsymbol{Q}$.
 \end{obs}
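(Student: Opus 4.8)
The plan is to prove the statement by a direct block-matrix computation, using the defining property of the Cholesky factorization recalled in Section~\ref{sec: cholesky_GMRFs}. First I would note that since $\boldsymbol{L}_1$ and $\boldsymbol{L}_2$ are by assumption the Cholesky factors of $\boldsymbol{Q}_1$ and $\boldsymbol{Q}_2$, we have $\boldsymbol{Q}_1 = \boldsymbol{L}_1\boldsymbol{L}_1^{\mbox{T}}$ and $\boldsymbol{Q}_2 = \boldsymbol{L}_2\boldsymbol{L}_2^{\mbox{T}}$, with $\boldsymbol{L}_1$ and $\boldsymbol{L}_2$ lower triangular of size $n \times n$. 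Transposing gives that $\boldsymbol{L}_1^{\mbox{T}}$ and $\boldsymbol{L}_2^{\mbox{T}}$ are upper triangular, so $\boldsymbol{A}$ is the $2n \times n$ matrix obtained by stacking $\boldsymbol{L}_1^{\mbox{T}}$ on top of $\boldsymbol{L}_2^{\mbox{T}}$ (consistent with the assumption $m \geq n$ of Section~\ref{sec: cholesky_IGO}, here $m = 2n$).

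Next I would observe that the transpose of $\boldsymbol{A}$ is the $n \times 2n$ block row $\boldsymbol{A}^{\mbox{T}} = (\boldsymbol{L}_1, \boldsymbol{L}_2)$. Multiplying out by blocks then gives
$$\boldsymbol{A}^{\mbox{T}}\boldsymbol{A} = (\boldsymbol{L}_1, \boldsymbol{L}_2)\begin{pmatrix}\boldsymbol{L}_1^{\mbox{T}}\\\boldsymbol{L}_2^{\mbox{T}}\end{pmatrix} = \boldsymbol{L}_1\boldsymbol{L}_1^{\mbox{T}} + \boldsymbol{L}_2\boldsymbol{L}_2^{\mbox{T}} = \boldsymbol{Q}_1 + \boldsymbol{Q}_2 = \boldsymbol{Q},$$
which is exactly the claimed identity. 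This is the entire argument; the result is essentially a restatement of how block-partitioned matrix products interact with the Cholesky factorizations of the two summands.

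There is no genuine obstacle here, so the only point deserving a remark is well-definedness, namely that the factor $\boldsymbol{L}_2$ exists at all. This requires $\boldsymbol{Q}_2$ to be symmetric positive (semi-)definite. In all the cases treated in Sections~\ref{sec: cholesky_conditioningsubsets}--\ref{sec: cholesky_GMRF_approximation} the added term $\boldsymbol{Q}_2$ is of the form $\boldsymbol{A}^{\mbox{T}}\boldsymbol{Q}_{\boldsymbol{\epsilon}}\boldsymbol{A}$ or $\diag(\boldsymbol{c})$ with nonnegative entries, hence positive semi-definite, so a (possibly rank-deficient) Cholesky factor exists and the computation goes through unchanged. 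Finally, I would connect this to the third listed property of the orthogonal factorization in Section~\ref{sec: cholesky_IGO}: since $\boldsymbol{A}^{\mbox{T}}\boldsymbol{A} = \boldsymbol{Q}$, the upper triangular factor $\boldsymbol{R}$ produced by an (incomplete) orthogonal factorization of $\boldsymbol{A}$ is precisely a Cholesky factor of the normal-equation matrix $\boldsymbol{A}^{\mbox{T}}\boldsymbol{A} = \boldsymbol{Q}$, which is what justifies using Routine~$(2)$ to recover the sparse Cholesky factor of $\boldsymbol{Q}$.
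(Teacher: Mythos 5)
Your proof is correct and is essentially identical to the paper's: both reduce the claim to the one-line block computation $\boldsymbol{A}^{\mbox{T}}\boldsymbol{A} = \boldsymbol{L}_1\boldsymbol{L}_1^{\mbox{T}} + \boldsymbol{L}_2\boldsymbol{L}_2^{\mbox{T}} = \boldsymbol{Q}_1 + \boldsymbol{Q}_2 = \boldsymbol{Q}$. Your added remarks on the positive semi-definiteness of $\boldsymbol{Q}_2$ and the link to the normal-equations property are sensible context but not part of the paper's argument, which stops at the block product.
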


 \begin{proof}
 ${\boldsymbol{A}^{\mbox{T}}} {\boldsymbol{A}} =
 {\begin{pmatrix} \boldsymbol{L}_1 \ \boldsymbol{L}_2 \end{pmatrix}} {\begin{pmatrix}
 \boldsymbol{L}_1^{\mbox{T}} \\ \boldsymbol{L}_2^{\mbox{T}} \end{pmatrix}} =\boldsymbol{L}_1 \boldsymbol{L}_1^{\mbox{T}} + \boldsymbol{L}_2 \boldsymbol{L}_2^{\mbox{T}} = \boldsymbol{Q}_1 +
 \boldsymbol{Q}_2 = \boldsymbol{Q}$.
 \end{proof}

From Observation \ref{thm_Cholesky} the following corollaries are established. Sketched proofs for these corollaries are given.
We refer to \citet{simpson2008krylov} for numerical examples with Corollary \ref{cor_sample}.

 \begin{cor} \label{cor_sample}
 Let $\boldsymbol{X}$ be a zero mean GMRF with precision matrix $\boldsymbol{Q} = \boldsymbol{Q}_1 + \boldsymbol{Q}_2 \in \mathbb{R}^{n\times n}$, and
 let $\boldsymbol{A}$ have the form given in Observation \ref{thm_Cholesky}. 
 Let $\boldsymbol{z} \in \mathbb{R}^{2n}$ be a vector of independent and identically distributed (i.i.d.) standard Gaussian random variables.
 Then the solution  of the least squares problem

  \begin{equation} \label{eq10}
  \boldsymbol{x} = \mathop{\arg \min}_{y} \norm{ \boldsymbol{A} \boldsymbol{y} - \boldsymbol{z}}_2
  \end{equation}

\noindent  is a sample from the GMRF $\boldsymbol{X}$.
  \end{cor}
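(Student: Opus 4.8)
The plan is to solve the least squares problem explicitly through its normal equations and then identify the distribution of the minimizer. Since $\boldsymbol{Q}_1 > 0$ and $\boldsymbol{Q}_2$ is positive semidefinite, the matrix $\boldsymbol{Q} = \boldsymbol{Q}_1 + \boldsymbol{Q}_2 = \boldsymbol{A}^{\mbox{T}}\boldsymbol{A}$ is positive definite by Observation \ref{thm_Cholesky}; in particular $\boldsymbol{A}$ has full column rank $n$, so the minimizer in \eqref{eq10} exists, is unique, and is characterized by the normal equations $\boldsymbol{A}^{\mbox{T}}\boldsymbol{A}\boldsymbol{x} = \boldsymbol{A}^{\mbox{T}}\boldsymbol{z}$.

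First I would substitute $\boldsymbol{A}^{\mbox{T}}\boldsymbol{A} = \boldsymbol{Q}$ from Observation \ref{thm_Cholesky} to rewrite the normal equations as $\boldsymbol{Q}\boldsymbol{x} = \boldsymbol{A}^{\mbox{T}}\boldsymbol{z}$, and hence $\boldsymbol{x} = \boldsymbol{Q}^{-1}\boldsymbol{A}^{\mbox{T}}\boldsymbol{z}$. This exhibits $\boldsymbol{x}$ as a fixed linear map applied to the standard Gaussian vector $\boldsymbol{z} \sim \mathcal{N}(\boldsymbol{0}, \boldsymbol{I}_{2n})$, so $\boldsymbol{x}$ is itself Gaussian and it suffices to compute its first two moments.

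Next I would compute these moments. The mean is $\boldsymbol{Q}^{-1}\boldsymbol{A}^{\mbox{T}}\mathbb{E}[\boldsymbol{z}] = \boldsymbol{0}$, and the covariance is
$$\Cov(\boldsymbol{x}) = \boldsymbol{Q}^{-1}\boldsymbol{A}^{\mbox{T}}\,\Cov(\boldsymbol{z})\,\boldsymbol{A}\boldsymbol{Q}^{-1} = \boldsymbol{Q}^{-1}(\boldsymbol{A}^{\mbox{T}}\boldsymbol{A})\boldsymbol{Q}^{-1} = \boldsymbol{Q}^{-1}\boldsymbol{Q}\boldsymbol{Q}^{-1} = \boldsymbol{Q}^{-1},$$
where I again use $\boldsymbol{A}^{\mbox{T}}\boldsymbol{A} = \boldsymbol{Q}$ together with $\Cov(\boldsymbol{z}) = \boldsymbol{I}_{2n}$ and the symmetry of $\boldsymbol{Q}^{-1}$. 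Thus $\boldsymbol{x} \sim \mathcal{N}(\boldsymbol{0}, \boldsymbol{Q}^{-1})$, which is precisely the law of the zero mean GMRF $\boldsymbol{X}$ with precision matrix $\boldsymbol{Q}$, completing the argument.

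I do not expect a genuine obstacle here, as the content is entirely linear-Gaussian algebra that hinges on the single identity $\boldsymbol{A}^{\mbox{T}}\boldsymbol{A} = \boldsymbol{Q}$ supplied by Observation \ref{thm_Cholesky}. The only point that needs a word of care is the well-posedness of \eqref{eq10}, namely that the minimizer exists and is unique; this rests on $\boldsymbol{A}$ having full column rank, which follows immediately from the positive definiteness of $\boldsymbol{A}^{\mbox{T}}\boldsymbol{A} = \boldsymbol{Q}$ noted at the outset.
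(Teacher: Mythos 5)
Your proof is correct, and it reaches the conclusion by a genuinely different technical route than the paper, although the two share the same skeleton: in both, the minimizer is exhibited as a fixed linear map of $\boldsymbol{z}$, hence Gaussian, so it suffices to compute the mean and covariance. Where you differ is in how the minimizer is represented. You first observe that $\boldsymbol{A}^{\mbox{T}}\boldsymbol{A} = \boldsymbol{Q} = \boldsymbol{Q}_1 + \boldsymbol{Q}_2$ is positive definite, so $\boldsymbol{A}$ has full column rank, and then solve the least squares problem through the normal equations, $\boldsymbol{x} = \boldsymbol{Q}^{-1}\boldsymbol{A}^{\mbox{T}}\boldsymbol{z}$, after which the covariance computation collapses to $\boldsymbol{Q}^{-1}\boldsymbol{A}^{\mbox{T}}\boldsymbol{A}\boldsymbol{Q}^{-1} = \boldsymbol{Q}^{-1}$. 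The paper instead writes the minimizer as $\boldsymbol{x} = \boldsymbol{A}^{\dag}\boldsymbol{z}$ using the Moore--Penrose pseudo-inverse, expands it via a singular value decomposition $\boldsymbol{A} = \boldsymbol{U}\boldsymbol{S}\boldsymbol{W}^{\mbox{T}}$, and obtains $\Cov(\boldsymbol{x}) = \boldsymbol{W}(\boldsymbol{S}^{\mbox{T}}\boldsymbol{S})^{\dag}\boldsymbol{W}^{\mbox{T}} = \boldsymbol{Q}^{\dag}$. Your argument is more elementary (no SVD or pseudo-inverse calculus) and also tidier at the end: the paper concludes $\boldsymbol{x} \sim MVN(0, \boldsymbol{Q}^{\dag})$ and leaves implicit the identification $\boldsymbol{Q}^{\dag} = \boldsymbol{Q}^{-1}$ needed to match the law of $\boldsymbol{X}$, as well as the uniqueness of the $\arg\min$; you settle both explicitly via the full-rank observation. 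What the paper's route buys is formal generality, since the pseudo-inverse representation remains valid if $\boldsymbol{A}$ were rank-deficient (yielding covariance $\boldsymbol{Q}^{\dag}$, relevant to intrinsic models), but that generality is not needed for the corollary as stated, since a precision matrix is by the paper's definition invertible.
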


  \begin{proof}
  $\boldsymbol{Q} = \boldsymbol{A}^{\mbox{T}}\boldsymbol{A}$ from Observation \ref{thm_Cholesky} is the starting
  point to prove this Corollary.  Denote $\boldsymbol{A}^\dag$ the Moore-Penrose pseudo-inverse of $\boldsymbol{A}$, and then the solution to the
  least squares problem is $\boldsymbol{x} = \boldsymbol{A}^\dag \boldsymbol{z}$ \citep{bjorck1996numerical}.  
  From the definition of the pseudo-inverse, $\boldsymbol{x} = \boldsymbol{W}\boldsymbol{S}^\dag \boldsymbol{U}^{\mbox{T}}\boldsymbol{z}$,
  where $\boldsymbol{A} = \boldsymbol{U}\boldsymbol{S}\boldsymbol{W}^{\mbox{T}}$ is a singular value decomposition of $\boldsymbol{A}$ and $\boldsymbol{S}^\dag\in \mathbb{R}^{d\times 2d}$ is the matrix with the
  reciprocals of the non-zero singular values on the diagonal. We can verify that $\boldsymbol{x}$ has the required distribution, and it is
  sufficient to check the first two moments since $\boldsymbol{x}$ has a Gaussian distribution, being linear in $\boldsymbol{z}$.  It is clear that $\mathbb{E}(\boldsymbol{x}) = 0$. Furthermore,
  \begin{align*}
  \mathbb{E}(\boldsymbol{x}\boldsymbol{x}^{\mbox{T}}) &= \boldsymbol{A}^\dag \mathbb{E}(\boldsymbol{z}\boldsymbol{z}^{\mbox{T}}) (\boldsymbol{A}^\dag)^{\mbox{T}}\\
        &=\boldsymbol{W}\boldsymbol{S}^\dag \boldsymbol{U}^{\mbox{T}}\boldsymbol{U}(\boldsymbol{S}^\dag)^{\mbox{T}} \boldsymbol{W}^{\mbox{T}} \\
        &=\boldsymbol{W}\boldsymbol{S}^\dag(\boldsymbol{S}^\dag)^{\mbox{T}} \boldsymbol{W}^{\mbox{T}} \\
        &= \boldsymbol{W} (\boldsymbol{S}^{\mbox{T}}\boldsymbol{S})^\dag \boldsymbol{W}^{\mbox{T}}.
  \end{align*} Calculations yield $\boldsymbol{Q} = \boldsymbol{W}\boldsymbol{S}^{\mbox{T}}\boldsymbol{S}\boldsymbol{W}^{\mbox{T}}$ and, hence,
  $\boldsymbol{x}\sim MVN (0,\boldsymbol{Q}^\dag)$.
  \end{proof}

  Therefore, it is possible to sample from a GMRF  by solving the sparse least squares problem given in \eqref{eq10} with some conditions on GMRFs.

  \begin{cor} \label{cholesky_rectangle}
  The upper triangular matrix $\boldsymbol{R}$ from the orthogonal factorization of the rectangular matrix
  \begin{equation}  \label{cholesky_A_matrix}
   \boldsymbol{A} =  \begin{pmatrix} \boldsymbol{L}_1^{\mbox{T}}\\\boldsymbol{L}_2^{\mbox{T}}\end{pmatrix}
  \end{equation}
  is the Cholesky factor of the precision matrix $\boldsymbol{Q} = \boldsymbol{Q}_1 + \boldsymbol{Q}_2$.
  \end{cor}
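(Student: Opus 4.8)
The plan is to derive the result directly from Observation \ref{thm_Cholesky} together with the defining property of the orthogonal factorization recorded as advantage (3) in Section \ref{sec: cholesky_IGO}, namely that the upper triangular factor is the Cholesky factor of the normal equations matrix. Concretely, I would write the (full) orthogonal factorization $\boldsymbol{A} = \boldsymbol{S}\boldsymbol{R}$ with $\boldsymbol{S}\in\mathbb{R}^{2n\times 2n}$ orthogonal and $\boldsymbol{R}\in\mathbb{R}^{2n\times n}$ upper triangular, and then compute the normal equations matrix: since $\boldsymbol{S}^{\mbox{T}}\boldsymbol{S} = \boldsymbol{I}$, we get
\[
\boldsymbol{A}^{\mbox{T}}\boldsymbol{A} = \boldsymbol{R}^{\mbox{T}}\boldsymbol{S}^{\mbox{T}}\boldsymbol{S}\boldsymbol{R} = \boldsymbol{R}^{\mbox{T}}\boldsymbol{R}.
\]
By Observation \ref{thm_Cholesky} the left-hand side equals $\boldsymbol{Q}$, so $\boldsymbol{Q} = \boldsymbol{R}^{\mbox{T}}\boldsymbol{R}$, exhibiting $\boldsymbol{R}$ as an upper triangular square root of $\boldsymbol{Q}$.

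Before invoking uniqueness I would first check that the factorization is well posed, i.e. that $\boldsymbol{A}$ has full column rank $n$. This holds because $\boldsymbol{Q}_1 > 0$ while $\boldsymbol{Q}_2$ is at least positive semidefinite (it admits a Cholesky factor $\boldsymbol{L}_2$ and arises in the applications of Sections \ref{sec: cholesky_conditioningsubsets}--\ref{sec: cholesky_GMRF_approximation} as $\boldsymbol{A}^{\mbox{T}}\boldsymbol{Q}_{\boldsymbol{\epsilon}}\boldsymbol{A}$ or as $\diag(\boldsymbol{c})$ with $c_i\geq 0$), so $\boldsymbol{Q} = \boldsymbol{Q}_1 + \boldsymbol{Q}_2$ is positive definite; hence $\boldsymbol{A}^{\mbox{T}}\boldsymbol{A} = \boldsymbol{Q}$ is nonsingular and $\boldsymbol{A}$ has rank $n$. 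Because $\boldsymbol{A}$ is $2n\times n$, the bottom $n$ rows of $\boldsymbol{R}$ vanish, and the relation $\boldsymbol{Q} = \boldsymbol{R}^{\mbox{T}}\boldsymbol{R}$ involves only the top $n\times n$ upper triangular block; I would restrict attention to that block, still denoted $\boldsymbol{R}$.

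The main obstacle, and the only point requiring genuine care, is matching $\boldsymbol{R}$ to the Cholesky factor \emph{exactly} rather than merely up to signs. The upper triangular factor produced by an orthogonal factorization is determined only up to a left multiplication by a diagonal sign matrix with entries $\pm 1$: flipping the sign of a row of $\boldsymbol{R}$ and of the corresponding column of $\boldsymbol{S}$ leaves $\boldsymbol{A} = \boldsymbol{S}\boldsymbol{R}$ unchanged. The Cholesky factor $\boldsymbol{L}$ of $\boldsymbol{Q}$, by contrast, is unique once one fixes the convention that its diagonal entries be positive. To close the argument I would invoke the uniqueness of the Cholesky factorization of a symmetric positive definite matrix: both $\boldsymbol{Q} = \boldsymbol{R}^{\mbox{T}}\boldsymbol{R}$ and $\boldsymbol{Q} = \boldsymbol{L}\boldsymbol{L}^{\mbox{T}}$ are decompositions of $\boldsymbol{Q}$ into a lower triangular factor times its transpose, so choosing the signs of the rows of $\boldsymbol{R}$ to make its diagonal positive forces $\boldsymbol{R} = \boldsymbol{L}^{\mbox{T}}$. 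Thus $\boldsymbol{R}$ is precisely the transpose of the Cholesky factor of $\boldsymbol{Q}$, as claimed.
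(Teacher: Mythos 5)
Your proposal follows exactly the paper's route: the paper's own proof is the one-line observation that the upper triangular factor from an orthogonal factorization is the Cholesky factor of the normal equations matrix $\boldsymbol{A}^{\mbox{T}}\boldsymbol{A}$, which equals $\boldsymbol{Q}$ by Observation \ref{thm_Cholesky}. Your write-up is correct and simply fills in the details the paper declares ``obvious'' --- the computation $\boldsymbol{A}^{\mbox{T}}\boldsymbol{A}=\boldsymbol{R}^{\mbox{T}}\boldsymbol{R}$, the full-column-rank check, and the sign normalization needed to match the unique positive-diagonal Cholesky factor.
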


  \begin{proof}
  Since the upper triangular matrix from the orthogonal factorization is the Cholesky factor for the normal equations matrix,
  this is obvious from Observation \ref{thm_Cholesky}.
  \end{proof}

 By using the orthogonal factorization of the rectangular matrix $\boldsymbol{A}$, it is possible to get samples from the GMRFs when they are conditioned on data or a subset of the variables by using
 Corollary \ref{cor_sample} or the Cholesky factor from Observation \ref{thm_Cholesky} together with the sampling algorithms discussed in \citet[Chapter 2]{rue2005gaussian}.

\subsection{Dropping strategies}  \label{sec: cholesky_sparse_factor}
In this section the dropping strategy for the incomplete orthogonal factorization is introduced in order to find the incomplete Cholesky factor for matrix $\boldsymbol{A}^{\mbox{T}}\boldsymbol{A}$. 
Together with some dropping strategy for the incomplete orthogonal factorization of the rectangular matrix $\boldsymbol{A}$, a sparse upper triangular matrix $\boldsymbol{R}$ can be
obtained. From Corollary \ref{cholesky_rectangle} and the discussion in Section \ref{sec: cholesky_IGO}, we know that 
$\boldsymbol{R}$ is an incomplete Cholesky factor or sparse Cholesky factor for the precision matrix $\boldsymbol{Q}$.
This sparse Cholesky factor can then be used to specify the GMRF. 
The dropping strategies are important when doing the incomplete orthogonal factorization. Generally speaking, there are two kinds of dropping strategies.
\begin{enumerate}
 \item Drop fill-ins based on sparsity patterns. Before doing the incomplete orthogonal factorization, the sparsity pattern of the
 upper triangular matrix is predefined and fixed. If the factorization based only on the sparsity pattern of the original matrix, we
 drop all the elements which are pre-defined to be zeros. The algorithm does not consider the actual numerical values of the elements during the factorizations.
 \item Drop fill-ins by using a numerical threshold. This  strategy only includes the elements in $\boldsymbol{R}$ if they are bigger than a predefined
 threshold value. \citet{munksgaard1980solving} presented one way to select the value of the threshold parameter. His strategy drops
 the elements which are smaller than the diagonal elements of their rows and columns, multiplied by some predefined small value (called dropping tolerance). In this report,
 a slightly different dropping strategy is chosen. During the incomplete orthogonal factorization using Givens rotations,
 or the column-wise threshold incomplete Givens orthogonal (cTIGO) factorization \citep{papadopoulos2005class}, 
 we drop the elements according to their magnitudes with some predefined dropping tolerance.
 The nonzero pattern of $\boldsymbol{R}$ is determined dynamically.
 \end{enumerate}

 Both the fixed sparsity pattern strategy and the dynamic strategy are useful in applications. The fixed sparsity pattern strategy is the candidate when the computation resources are low.
 It is usually faster but sometimes returns unsatisfactory results. The dynamic strategy will in most cases return satisfactory results by choosing proper dropping tolerances
 but it is usually more expensive both in time and computations.

 There are different versions of orthogonal factorizations. We refer to \citet{saad2003iterative}, \citet{golub1996matrix} and \citet{trefethen1997numerical} for more information.
  Based on the research of \citet{bai2001class}, \citet{papadopoulos2005class} and \citet{bai2009numerical}, we choose the incomplete orthogonal factorization using Givens rotations 
  to find the sparse Cholesky factor. This algorithm is stable and robust and always returns a sparse matrix. This algorithm inherits the advantages of orthogonal 
  factorization. \citet{bai2001class} commented that there is little attention given to incomplete orthogonal factorization with Givens rotations, which is actually useful in many numerical problems. 

  In order to use Givens rotations for incomplete orthogonal factorization, the following nonzero patterns needs to be defined,

  \begin{equation*}
  \begin{aligned}
  N_{\boldsymbol{Q}} &= \{(i,j)~| ~Q_{ij} \neq 0, 1 \leq {i,j} \leq n\}, \\
  N_{\boldsymbol{Q},l} &= \{(i,j)~| ~Q_{ij} \neq 0, i \geq j , 1 \leq {i,j} \leq n\}, \\
  N_{\boldsymbol{Q},u} &= \{(i,j)~| ~Q_{ij} \neq 0, i \leq j , 1 \leq {i,j} \leq n\}, \\
  N_{\boldsymbol{L}_1} &= \{(i,j)~| ~L_{1_{ij}} \neq 0, 1 \leq {i,j} \leq n\}, \\
  N_{\boldsymbol{L}_2} &= \{(i,j)~| ~L_{2_{ij}} \neq 0, 1 \leq {i,j} \leq n\}, \\
  N_{\boldsymbol{L}} &= \{(i,j)~| ~L_{ij} \neq 0, 1 \leq {i,j} \leq n\}, \\
  N_{\boldsymbol{A}} &= \{(i,j)~| ~A_{ij} \neq 0, 1 \leq {i,j} \leq n\}, \\
  N_{\boldsymbol{R}} &= \{(i,j)~| ~R_{ij} \neq 0, 1 \leq {i,j} \leq n\}, \\
  \end{aligned}
  \end{equation*}

 \noindent  where $N_{\boldsymbol{Q}}$ is the nonzero pattern of the matrix $\boldsymbol{Q}$, and $N_{\boldsymbol{Q},u}$, $N_{\boldsymbol{Q},l}$ are the nonzero patterns of the
  upper and lower triangular parts of the matrix $\boldsymbol{Q}$, respectively. $N_{\boldsymbol{L}_1}$, $N_{\boldsymbol{L}_2}$, $N_{\boldsymbol{L}}$ $N_{\boldsymbol{A}}$ and $N_{\boldsymbol{R}}$
  are the nonzero patterns of the lower triangular matrix $\boldsymbol{L}_1$, the lower triangular matrix $\boldsymbol{L}_2$, the lower triangular matrix $\boldsymbol{L}$, the matrix $\boldsymbol{A}$  and the matrix $\boldsymbol{R}$, respectively. 
  These matrices are already formulated in previous sections.

  In order to use the cTIGO algorithm, the rectangular matrix $\boldsymbol{A}$ in \eqref{cholesky_A_matrix} is formed. The sparsity pattern of  matrix $\boldsymbol{A}$
  is already known beforehand. However, since the dynamic strategy is chosen, there will be some fill-in during Givens rotations process, and the sparsity pattern of the 
  sparse Cholesky factor $\boldsymbol{R}$ will depend on the dropping tolerance and usually $N_{\boldsymbol{R}} < N_{\boldsymbol{A}}$. 
  For more information about cTIGO algorithm, see \citet{bai2001class} for theoretical issues and \citet{papadopoulos2005class} for implementations.

\subsection{A small example} \label{sec: cholesky_example}
A small example is explored in this section to illustrate how to use the cTIGO algorithm to find the sparse Cholesky factor $\boldsymbol{R}$. 
For simplicity and without loss of generality, we assume that $\boldsymbol{Q}_1$ is the precision matrix for a zero mean GMRF $\boldsymbol{x} \sim \mathcal{N}(\boldsymbol{0}, \boldsymbol{Q}_1)$, and that the data are normally distributed, i.e.,
$\boldsymbol{y} \sim \mathcal{N}(\boldsymbol{0}, \boldsymbol{I})$ and hence $\boldsymbol{Q}_2 = \boldsymbol{I}$. Assume that these matrices are given as follows

  \begin{displaymath}
  \boldsymbol{Q_1} =
  \left( \begin{array}{cccccc}
   5 & -1 & 0 & \ldots & 0 &-1 \\
  -1 & 5  & -1 & 0 & \ldots & 0 \\
   0 & -1 & 5 & -1 & \ldots & 0 \\
   \vdots & & \ddots & \ddots & & \vdots \\
   -1 & 0 & \ldots & & -1 & 5
  \end{array} \right) _{9 \times 9}
  \end{displaymath}

  \noindent and

  \begin{displaymath}
  \boldsymbol{Q}_2 = \boldsymbol{I}_{9 \times 9} =
  \left( \begin{array}{ccccc}
  1 & 0 & 0 & \ldots & 0 \\
  0 & 1 & 0 & \ldots & 0 \\
  \vdots & \ddots & \ddots & \ddots\\
  0 & \ldots\ & & & 1
  \end{array} \right) _{9 \times 9}.
  \end{displaymath}

  \noindent Let $\boldsymbol{L}_1$ and $\boldsymbol{L}_2$  denote the Cholesky factor of the two matrices $\boldsymbol{Q}_1$ and $\boldsymbol{Q}_2$, respectively,  
  with the sparsity patterns given in Figure \ref{fig: cholesky_small_L1} and Figure \ref{fig: cholesky_small_L2}.
  The rectangular matrix $\boldsymbol{A}$ can then be formed as given in \eqref{cholesky_A_matrix} with the sparsity pattern given in Figure \ref{fig: cholesky_small_A}. 
  Apply the cTIGO algorithm to the rectangular matrix $\boldsymbol{A}$ with a dropping tolerance  of $0.0001$ to find the sparse incomplete Cholesky factor $\boldsymbol{R}$.
  The sparsity pattern of $\boldsymbol{R}$ is given in Figure \ref{fig: cholesky_small_R}. The sparsity pattern of the Cholesky factor $\boldsymbol{L}$ from the standard Cholesky factorization
  of the precision matrix $\boldsymbol{Q}$ is given in Figure \ref{fig: cholesky_small_L}. 

  \begin{figure}[htb]
    \centering
    \subfigure[]{ \includegraphics[width=0.3\textwidth,height=0.3\textwidth]{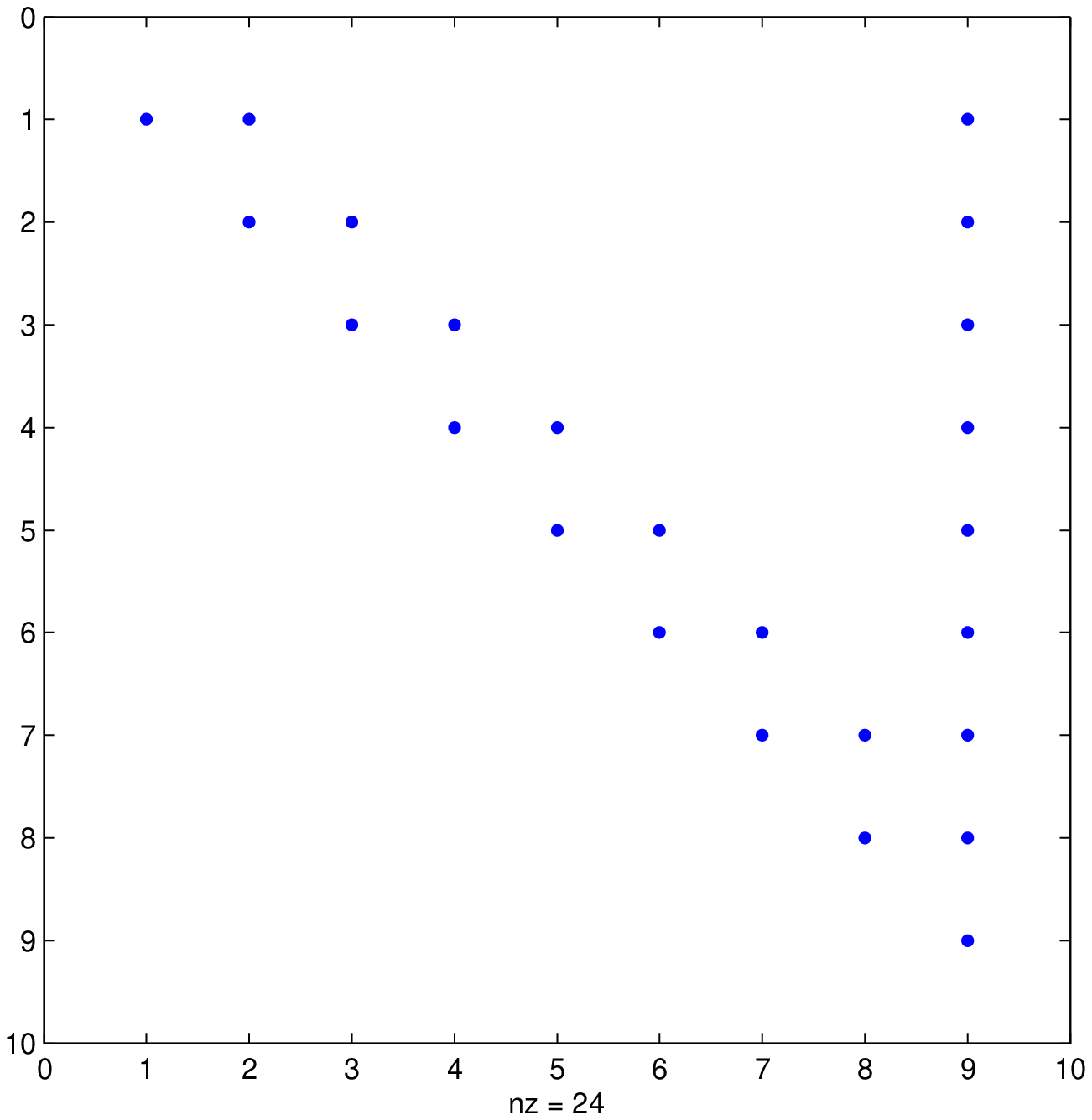} \label{fig: cholesky_small_L1} }
    \subfigure[]{ \includegraphics[width=0.3\textwidth,height=0.3\textwidth]{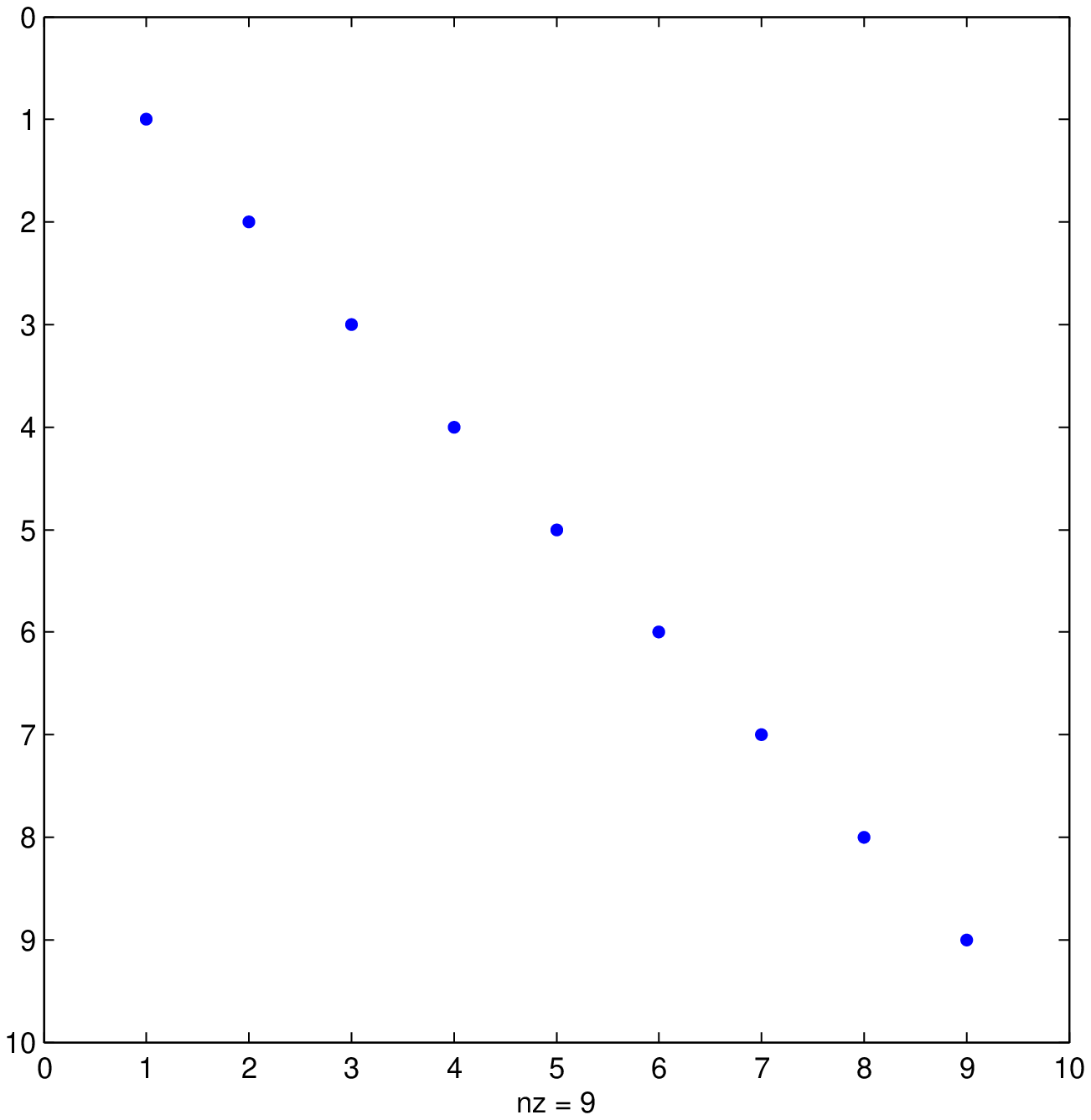} \label{fig: cholesky_small_L2} }
    \subfigure[]{ \includegraphics[width=0.3\textwidth,height=0.3\textwidth]{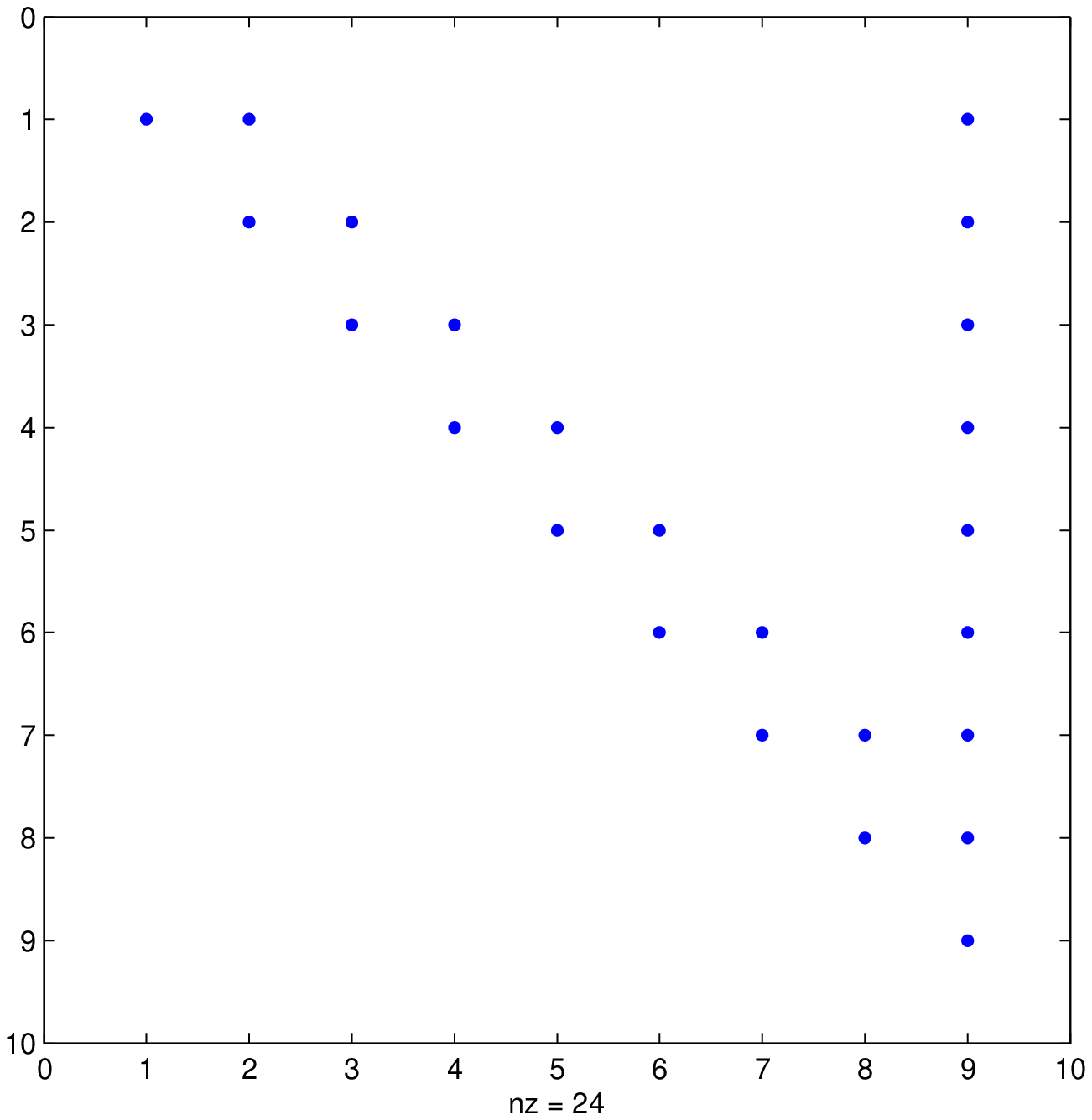} \label{fig: cholesky_small_L} }
    \subfigure[]{ \includegraphics[width=0.2\textwidth,height=0.4\textwidth]{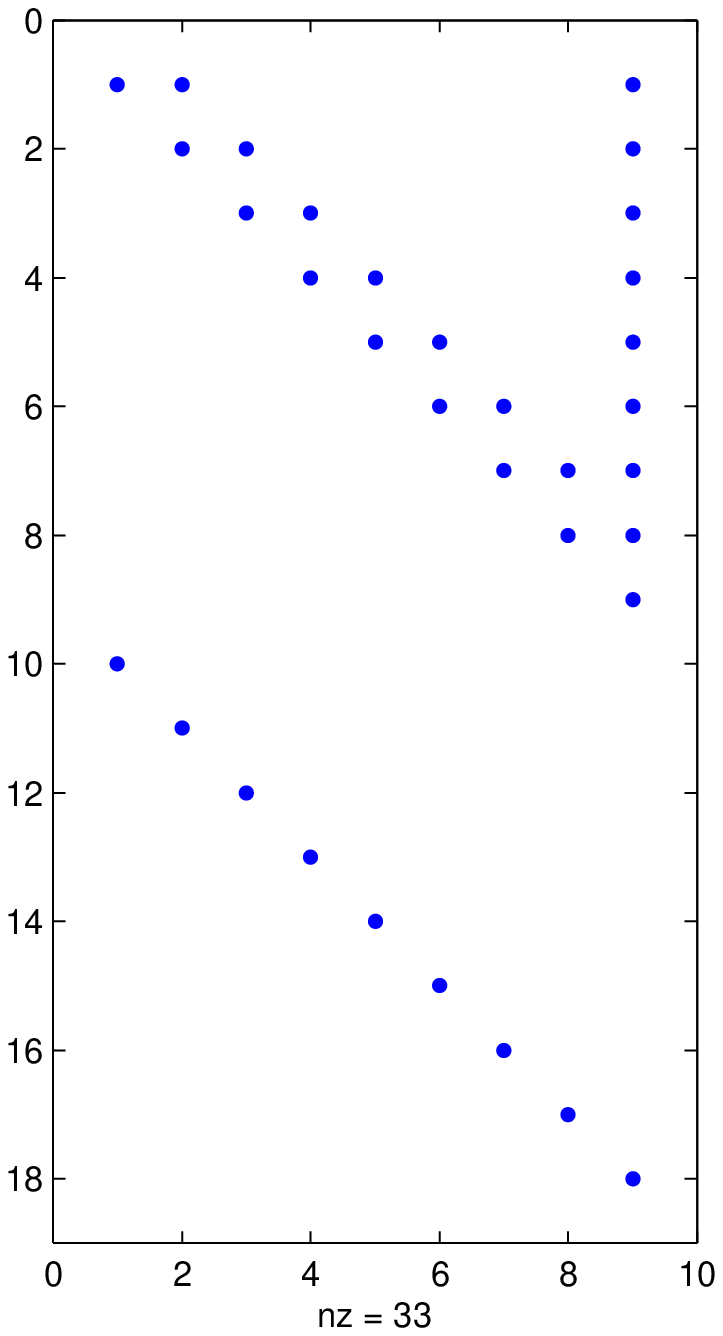}  \label{fig: cholesky_small_A} }
    \subfigure[]{ \includegraphics[width=0.2\textwidth,height=0.4\textwidth]{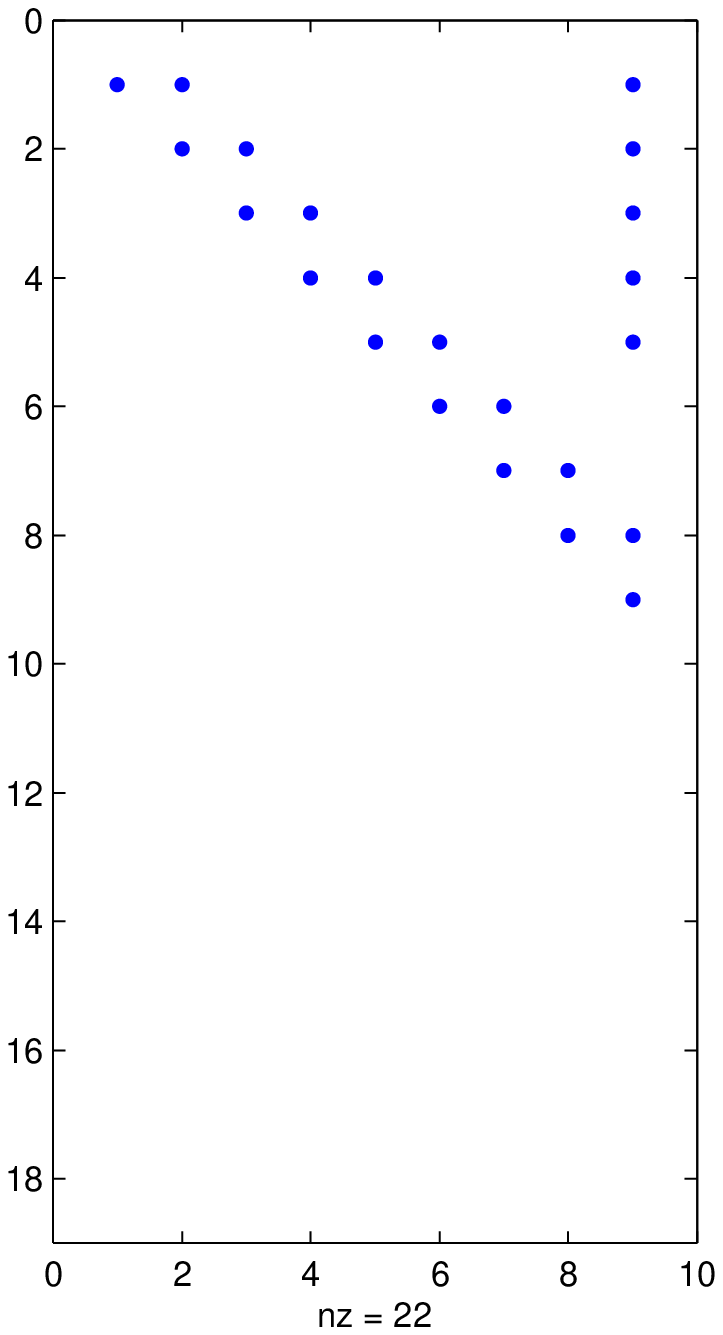}  \label{fig: cholesky_small_R} }
    \caption{The sparsity patterns of $\boldsymbol{L}_1$ (a), $\boldsymbol{L}_2$ (b), $\boldsymbol{L}$ (c), $\boldsymbol{A}$ (d) and $\boldsymbol{R}$ (e)} \label{fig: SmallExample}
  \end{figure}

  We notice that the precision matrix $\boldsymbol{Q}_1$ is quite similar to the tridiagonal matrix except the values at
  two of the corners. However, there is a lot of fill-in in the Cholesky factor $\boldsymbol{L}_1$. This is a common structure for the precision matrix of a GMRF,
  for instance, a GMRF on a torus. The same comments can be given for $\boldsymbol{Q}$ and $\boldsymbol{L}$. Note that the upper triangular matrix $\boldsymbol{R}$ has less nonzero elements than
  $\boldsymbol{L}_1$, $\boldsymbol{L}$ and $\boldsymbol{A}$, $N_{\boldsymbol{R}} < N_{\boldsymbol{L}_1}, N_{\boldsymbol{R}} < N_{\boldsymbol{L}} \text{ and } N_{\boldsymbol{R}} < N_{\boldsymbol{A}}$. 
  The sparsity pattern of $\boldsymbol{R}$ depends on the dropping tolerance and also the elements of the matrices $\boldsymbol{Q}_1$ and $\boldsymbol{Q}_2$, but we are not going deeper here.

  As discussed in Section \ref{sec: cholesky_IGO} and Section \ref{sec: cholesky_sparse_factor}, the sparse upper triangular matrix $\boldsymbol{R}$ is an
  incomplete Cholesky factor for the precision matrix $\boldsymbol{Q}$ of the GMRF when it is conditioned on data. The error matrix $\boldsymbol{E}$ 
  between the true precision matrix $\boldsymbol{Q} = \boldsymbol{Q}_1 + \boldsymbol{Q}_2$ and the approximated precision matrix $\boldsymbol{\widetilde{Q}} = \boldsymbol{R}^{\mbox{T}} \boldsymbol{R}$ is given by
  \begin{equation}
  \boldsymbol{E} = \left(\boldsymbol{Q}_1 + \boldsymbol{Q}_2\right) - \boldsymbol{\widetilde{Q}}.
  \end{equation}
  
  \noindent The sparsity patterns of the precision matrix $\boldsymbol{Q}$ and its approximation $\boldsymbol{\widetilde{Q}}$
   are shown in Figure \ref{fig: cholesky_small_Q} and Figure \ref{fig: cholesky_small_Qq}, respectively. In order to compare the difference between the approximated covariance matrices (inverse 
   of the approximated precision matrix) $\widetilde{\boldsymbol{\Sigma}} = \widetilde{\boldsymbol{Q}}^{-1}$
   and the true covariance matrix (inverse of the true precision matrix) $\boldsymbol{\Sigma} = \boldsymbol{Q}^{-1}$, we calculate the error matrix $\widetilde{\boldsymbol{E}}$,

 \begin{equation}
  \widetilde{\boldsymbol{E}} = \boldsymbol{\Sigma} - \widetilde{\boldsymbol{\Sigma}}.  
 \end{equation}
  The images of $\boldsymbol{\Sigma}$, $\boldsymbol{\widetilde{\Sigma}}$ and $\widetilde{\boldsymbol{E}}$ are given in Figure \ref{fig: ImageSQ_Qq_Err}, and they
  show that the difference between $\boldsymbol{\Sigma}$ and $\widetilde{\boldsymbol{\Sigma}}$ is quite small. 
  By chosen different dropping tolerance, the error can be made smaller and become negligible.

    \begin{figure}[htb]
    \centering
    \subfigure[]{\includegraphics[width=0.4\textwidth,height=0.4\textwidth]{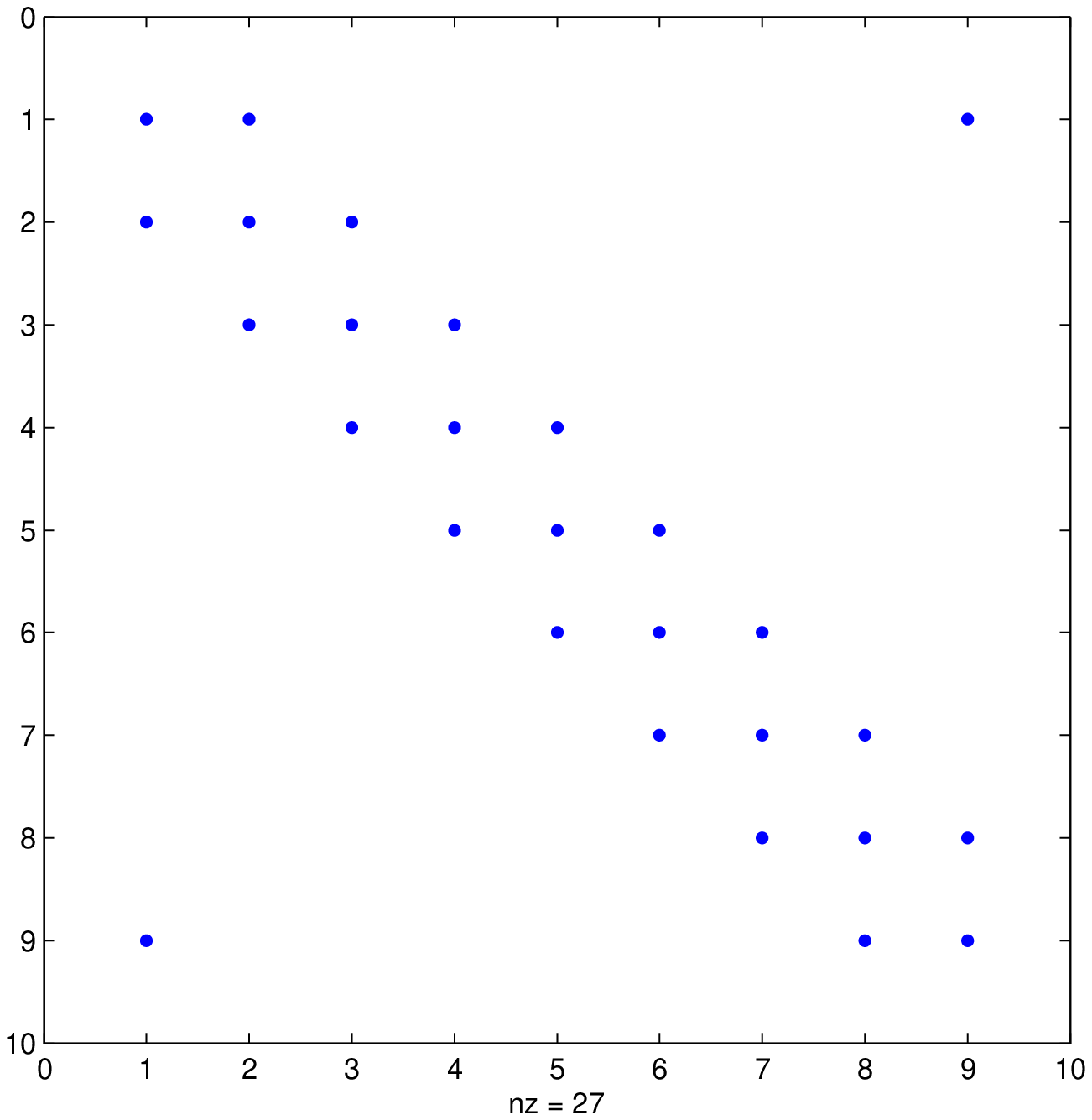}  \label{fig: cholesky_small_Q}}
    \subfigure[]{\includegraphics[width=0.4\textwidth,height=0.4\textwidth]{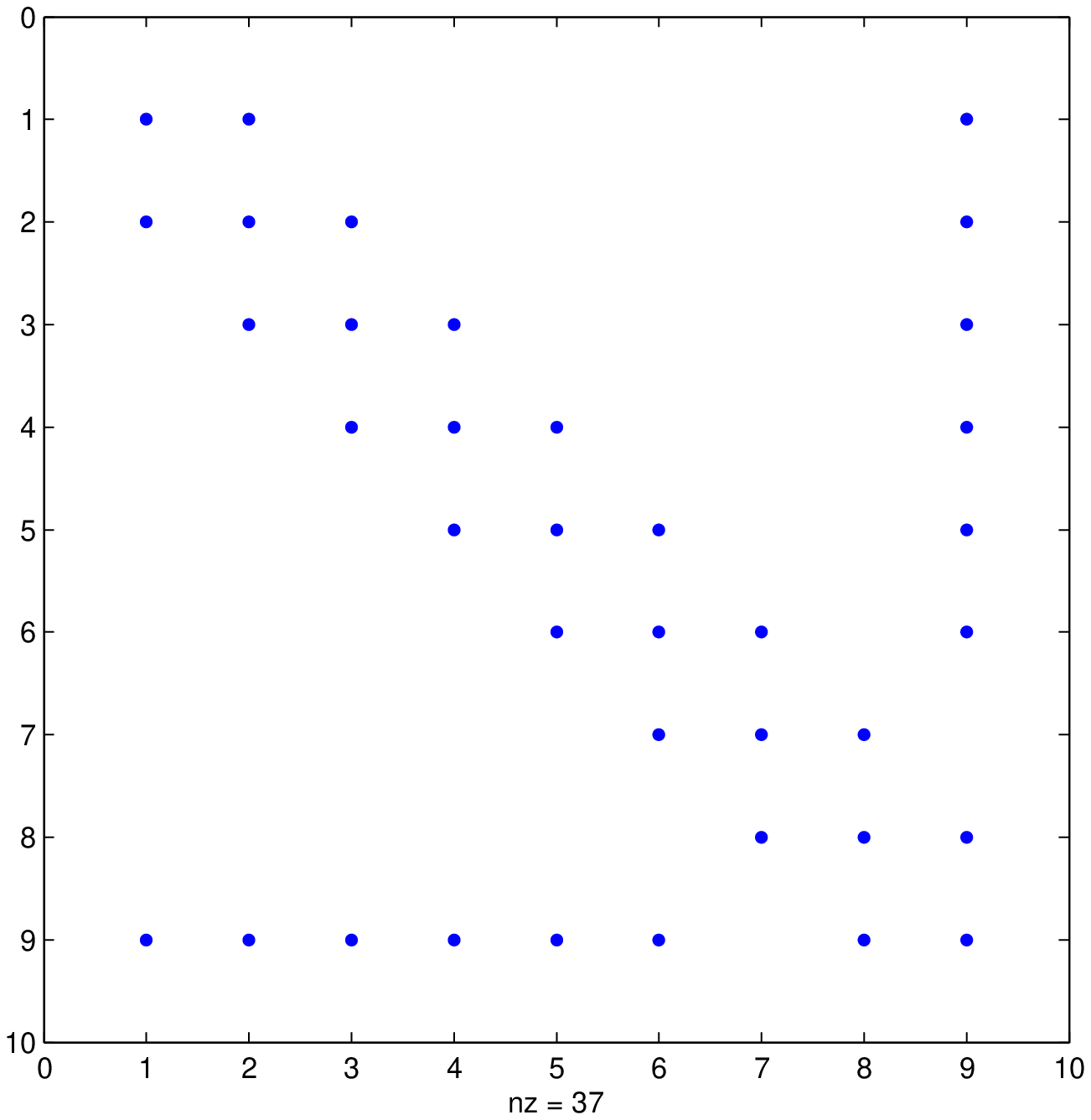}  \label{fig: cholesky_small_Qq}}
    \caption{Sparsity patterns of the true precision matrix $\boldsymbol{Q}$ (a) and the approximated precision matrix $\boldsymbol{\widetilde{Q}}$ (b) }
    \label{fig: cholesky_SparseQ_Qq}
  \end{figure}
 
    \begin{figure}[htb]
    \centering
    \subfigure[]{\includegraphics[width=0.3\textwidth,height=0.3\textwidth]{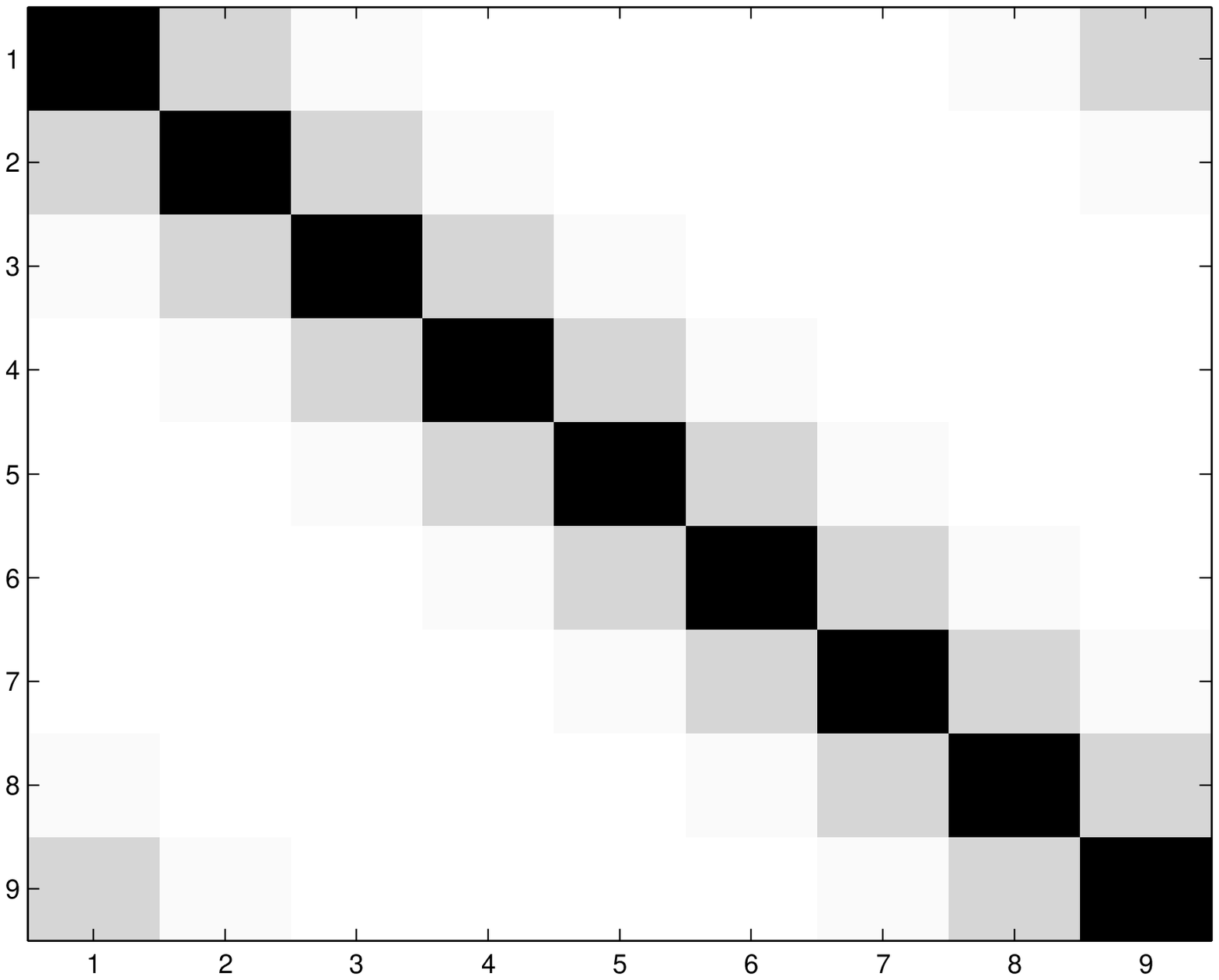}}
    \subfigure[]{\includegraphics[width=0.3\textwidth,height=0.3\textwidth]{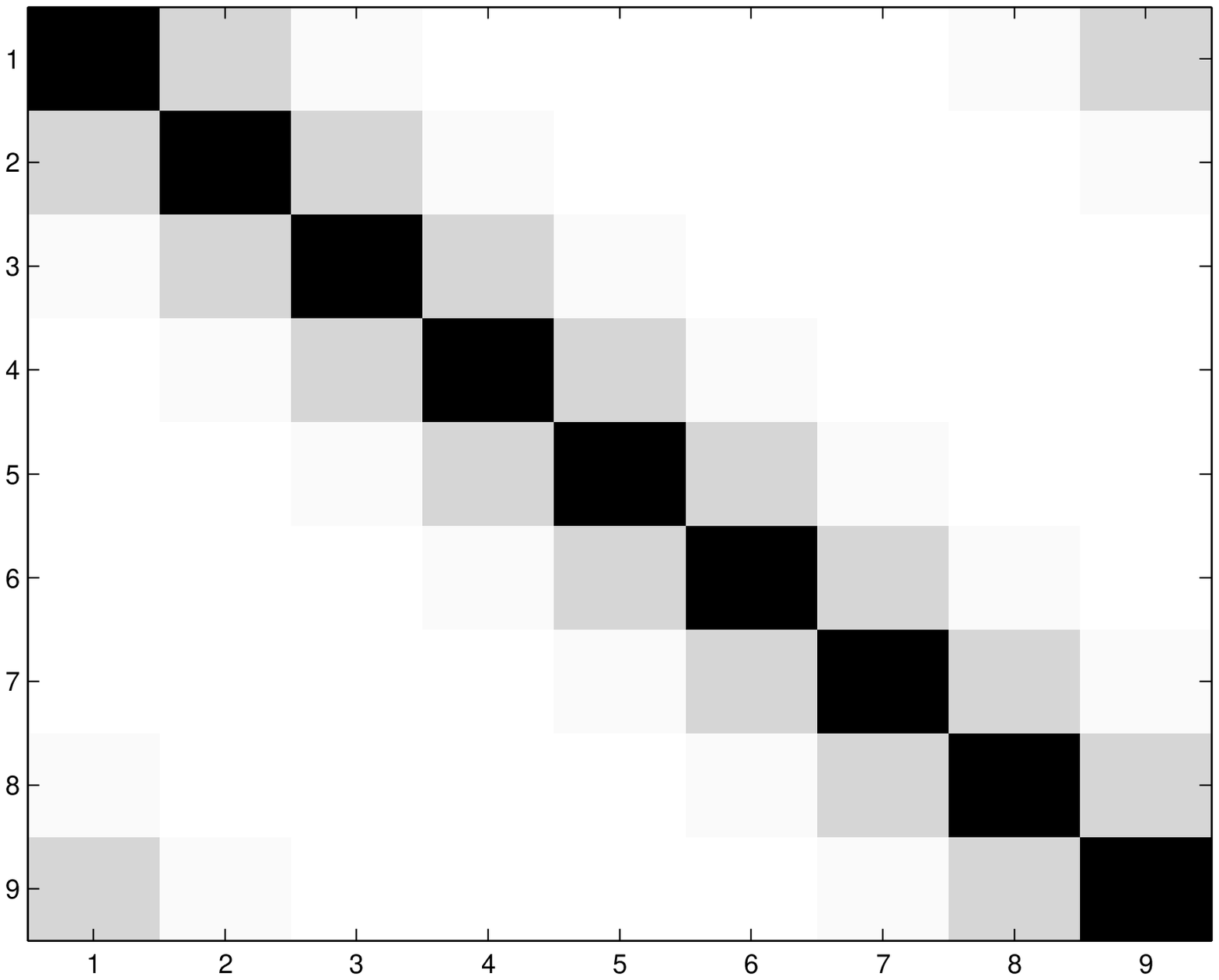}}
    \subfigure[]{\includegraphics[width=0.3\textwidth,height=0.315\textwidth]{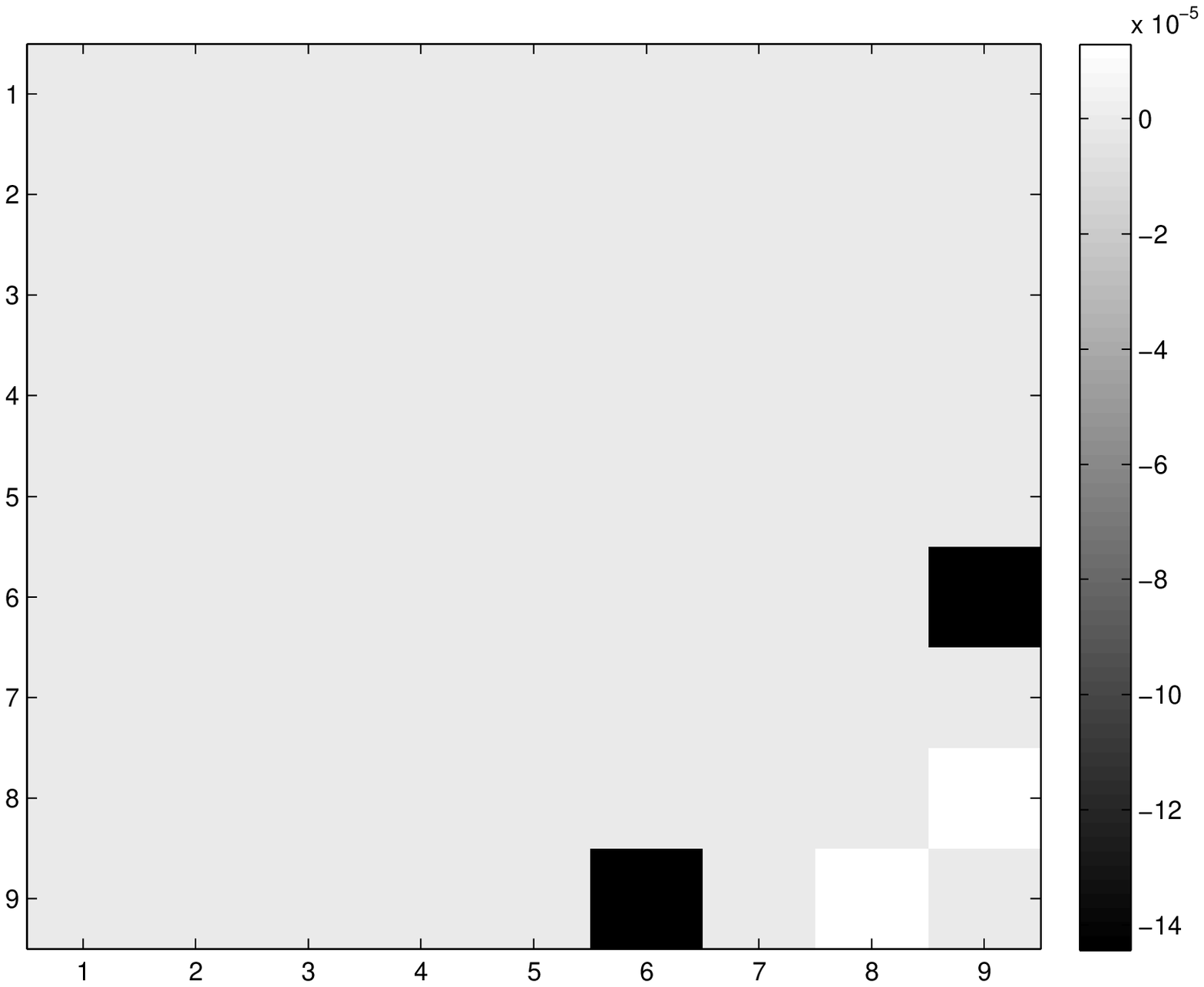}}
    \caption{ Images of the true covariance matrix $\boldsymbol{\Sigma}$ (a), approximately covariance matrix $\boldsymbol{\widetilde{\Sigma}}$ (b) and 
     the error matrix $\widetilde{\boldsymbol{E}}$ (c)} \label{fig: ImageSQ_Qq_Err}
   \end{figure}

\section{Simulation Results with cTIGO algorithm} \label{sec: cholesky_results}
   Using the incomplete orthogonal factorization with Givens rotations, it leads to a sparse upper triangular matrix $\boldsymbol{R}$, which is a sparse incomplete Cholesky factor for the precision matrix $\boldsymbol{Q}$ and
   can be used to specify the GMRF. Hence it has the potential possibility to reduce the computational cost.
   We first apply the cTIGO algorithm to some commonly used structures of the precision matrices in Section \ref{sec: cholesky_result_band_matrices}. 
   In Section \ref{sec: cholesky_result_spde}, we apply the cTIGO algorithm to 
   precision matrices which are generated from the stochastic partial differential equations (SPDEs) discussed in \citet{lindgren2011explicit} and \citet{fuglstad2011spatial}.  

   \subsection{Simulation results for precision matrices with commonly used structures} \label{sec: cholesky_result_band_matrices}
    It is known that if the precision matrix $\boldsymbol{Q} > 0$ is a band matrix with bandwidth $p$, then its Cholesky factor $\boldsymbol{L}$ (lower triangular matrix)
    has the same bandwidth $p$. See \citet{golub1996matrix} (Theorem 4.3.1) for a direct proof and \citet[Chapter 2.4.1]{rue2005gaussian} for more information on
    how to finding Cholesky factor efficiently in this case with Algorithm $2.9$. \citet{wist2006specifying} pointed out that if the original precision
    matrix $\boldsymbol{Q}$ is a band matrix, then the incomplete Cholesky factor $\boldsymbol{\widetilde{L}}$ from the incomplete Cholesky factorization will also be a band
    matrix with the same bandwidth $p$. 

    In this section we consider some commonly used structures for the precision matrices. The first two examples are band matrices with different bandwidths.
    Let $x$ be Gaussian auto-regressive processes of order $1$ or $2$, 
    and then the precision matrix for the process will be a band matrix with bandwidth $p = 2$ or $p = 3$, respectively. The precision matrices for the first-order Random Walk (RW$1$) and the second-order Random
    Walk (RW$2$) models have bandwidths $p = 2$ and $p = 3$. Since these models are intrinsic GMRFs, the precision matrices are not of full rank. 
    We fix this by slightly modifying the elements in the precision matrices for the RW$1$ and RW$2$ models but we still called them as the precision matrices for 
    the RW$1$ and the RW$2$ models. For more information about intrinsic GMRFs and 
    the RW$1$ and RW$2$ models, see, for example, \citet[Chapter 3]{rue2005gaussian}.

    Assume that the data are Gaussian distributed. Then from Section \ref{sec: cholesky_GMRFs_conditioning_data} the matrix $\boldsymbol{Q}_2$ is a diagonal matrix when $\boldsymbol{A} = \boldsymbol{I}$. For simplicity and without lost of generality, 
    assume the data $\boldsymbol{y}$ $\sim$ $\mathcal{N}(\boldsymbol{0},\boldsymbol{I})$, then the matrix $\boldsymbol{Q}_2$ and its Cholesky factor $\boldsymbol{L}_2$ are identity matrices.
    Since we know exactly what the sparsity patterns of the precision matrices $\boldsymbol{Q}_1$ and $\boldsymbol{Q}_2$ and the Cholesky factors $\boldsymbol{L}_1$ and $\boldsymbol{L}_2$ are, 
    the sparsity pattern of $\boldsymbol{A}$ is known beforehand and can be taken advantage of in the implementation.
    By applying the cTIGO algorithm to the matrix $\boldsymbol{A}$ with dropping tolerance $\tau = 0.0001$, the sparse upper triangular matrix $\boldsymbol{R}$ can be obtained. 
    The sparsity patterns of the matrices $\boldsymbol{L}_1$, $\boldsymbol{L}_2$, $\boldsymbol{L}$, $\boldsymbol{A}$ and $\boldsymbol{R}$ are given in Figure \ref{fig: cholesky_RW1}.
    The sparsity patterns of the true precision matrix $\boldsymbol{Q}$ and the approximated precision matrix $\boldsymbol{\widetilde{Q}}$ are given in Figure \ref{fig:cholesky_RW1_QandQq}.
    The image of the true covariance matrices $\boldsymbol{\Sigma}$, 
    the approximated covariance matrix $\boldsymbol{\widetilde{\Sigma}}$ and the error matrix $\widetilde{\boldsymbol{E}}$ 
    for the RW$1$ model are shown in Figure \ref{fig: cholesky_RW1ImQ12QqR}. Note that the order of the numerical values in the error matrix $\widetilde{\boldsymbol{E}}$  
    is $10^{-8}$, which is essentially zero in practice applications.

    Similarly for the RW$2$ model we apply the cTIGO algorithm to the matrix $\boldsymbol{A}$ with the dropping tolerance $\tau = 0.0001$. 
    The results in this case are quite similar to the results for the RW$1$ model. We only show the 
    images of the true covariance matrix $\boldsymbol{\Sigma}$, the approximated covariance matrix $\boldsymbol{\widetilde{\Sigma}}$, and the error matrix $\boldsymbol{\widetilde{E}}$. 
    The results are given in Figure \ref{fig: cholesky_RW2_ImQ12QqR}. Note that the order of the numerical values in the error matrix $\widetilde{\boldsymbol{E}}$ is $10^{-8}$ as for the RW$1$ model.
    See Section \ref{sec: cholesky_result_comparsion} from more simulation results for the RW$1$ and RW$2$ models and discussions. We can notice that the sparseness of $\boldsymbol{R}$ is the same as $\boldsymbol{L}$. Hence in
    these two cases, we do not save computational resources. However, this approach is still have the potential to be used in applications since it is robust.
    \begin{figure}[htbp]
    \centering
    \subfigure[]{\includegraphics[width=0.4\textwidth,height=0.4\textwidth]{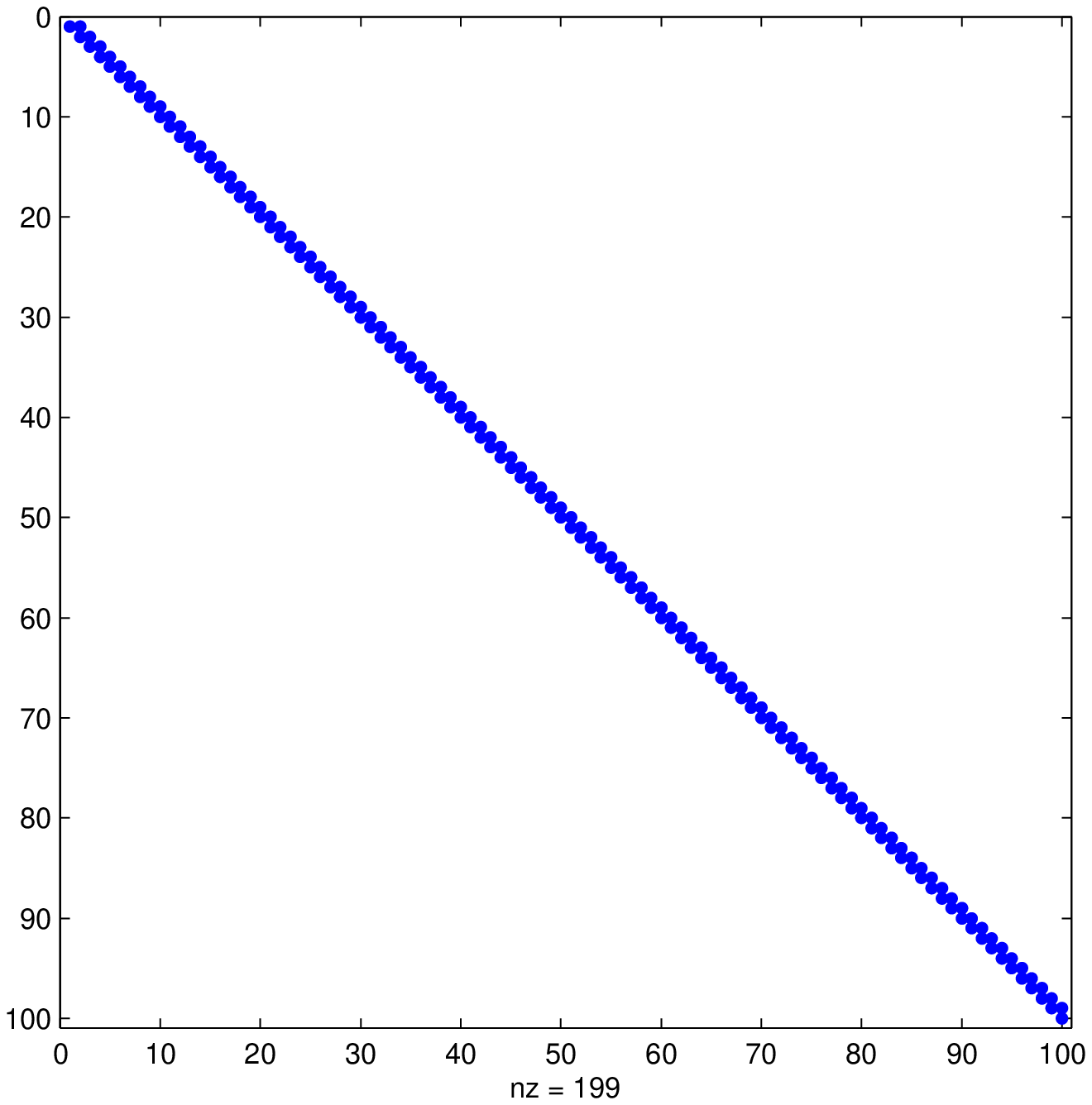} \label{fig: cholesky_RW1_L1} }
    \subfigure[]{\includegraphics[width=0.4\textwidth,height=0.4\textwidth]{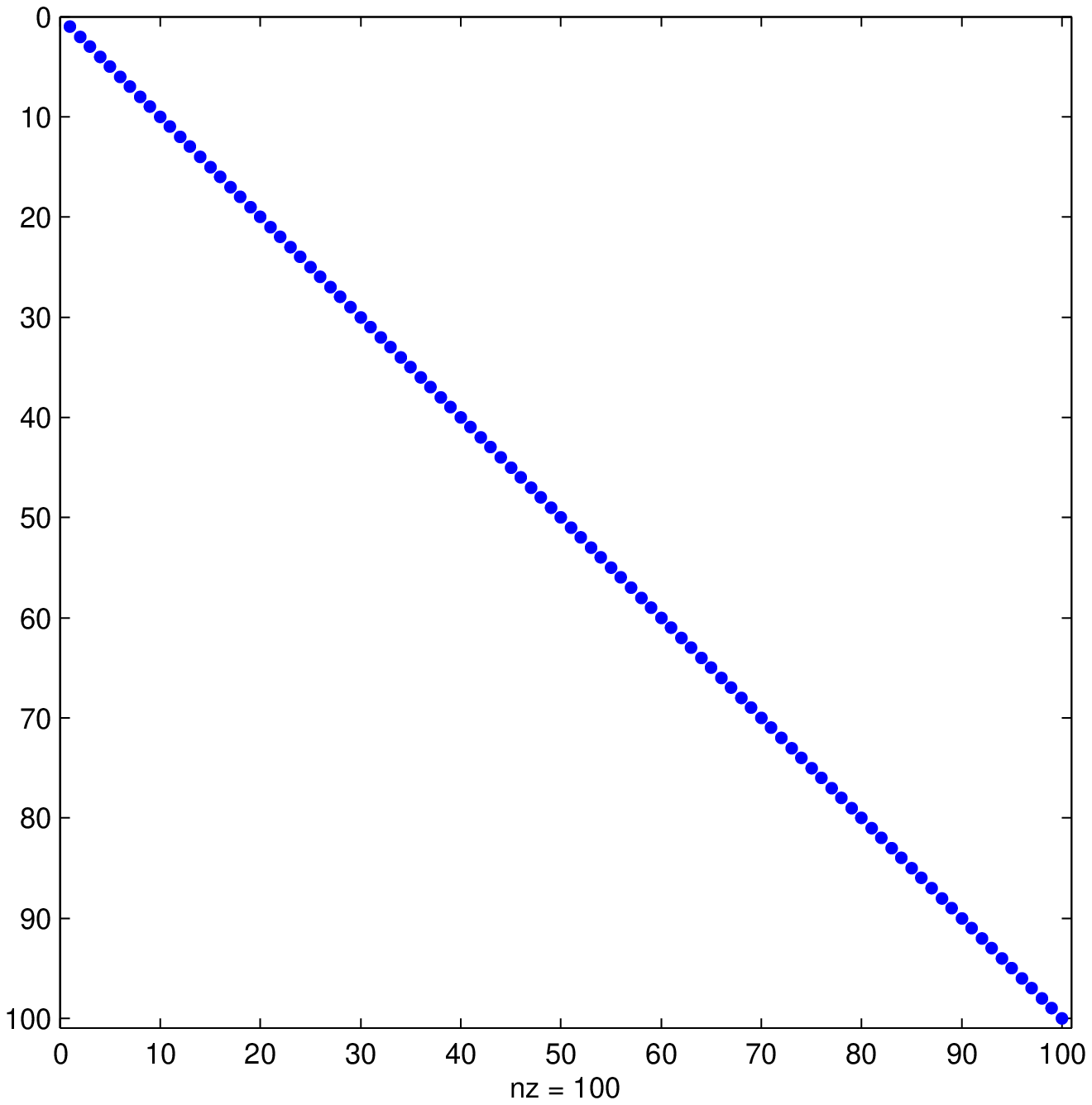} \label{fig: cholesky_RW1_L2} } 
    \subfigure[]{\includegraphics[width=0.4\textwidth,height=0.4\textwidth]{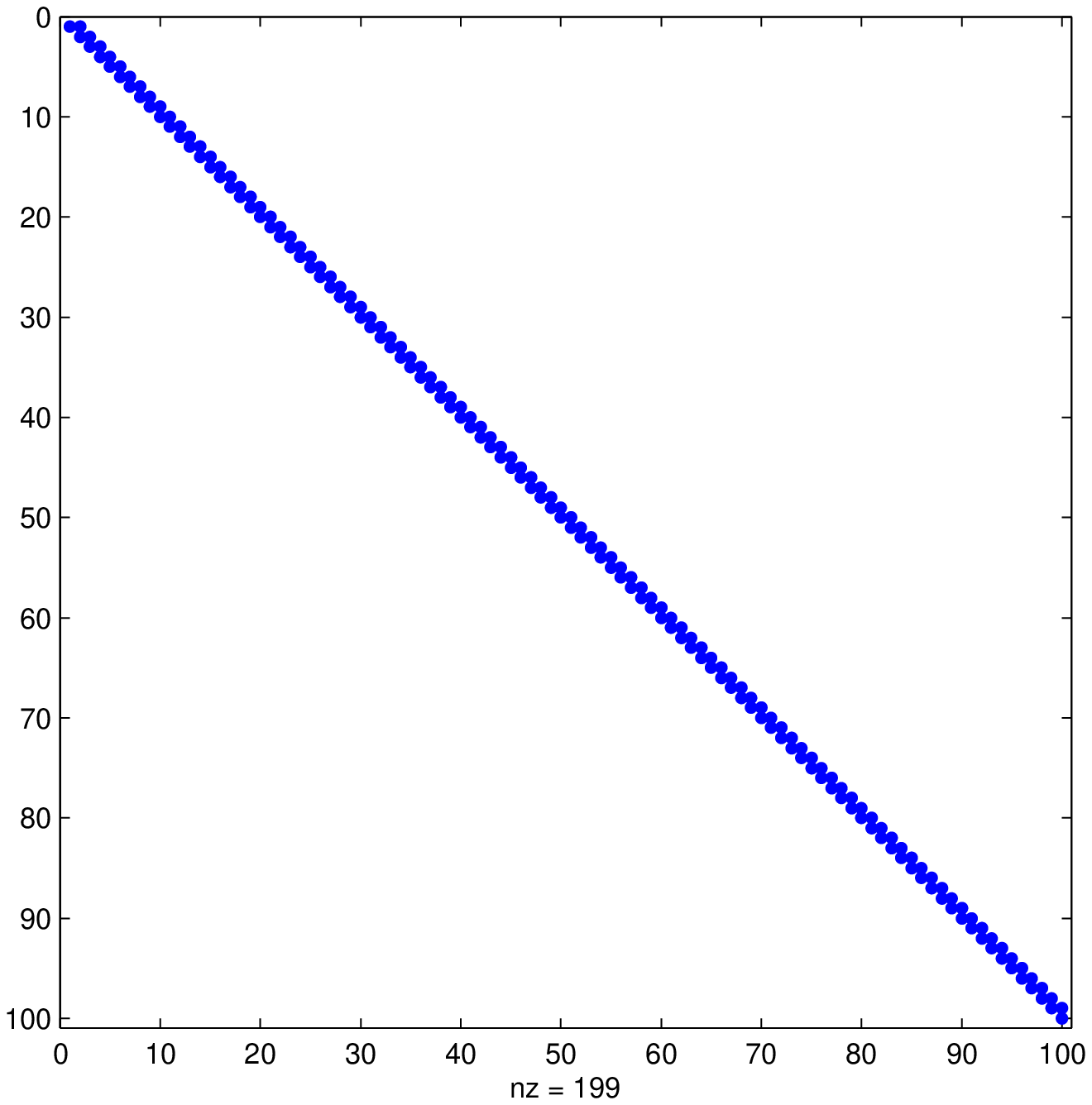}  \label{fig: cholesky_RW1_L} }
    \subfigure[]{\includegraphics[width=0.2\textwidth,height=0.4\textwidth]{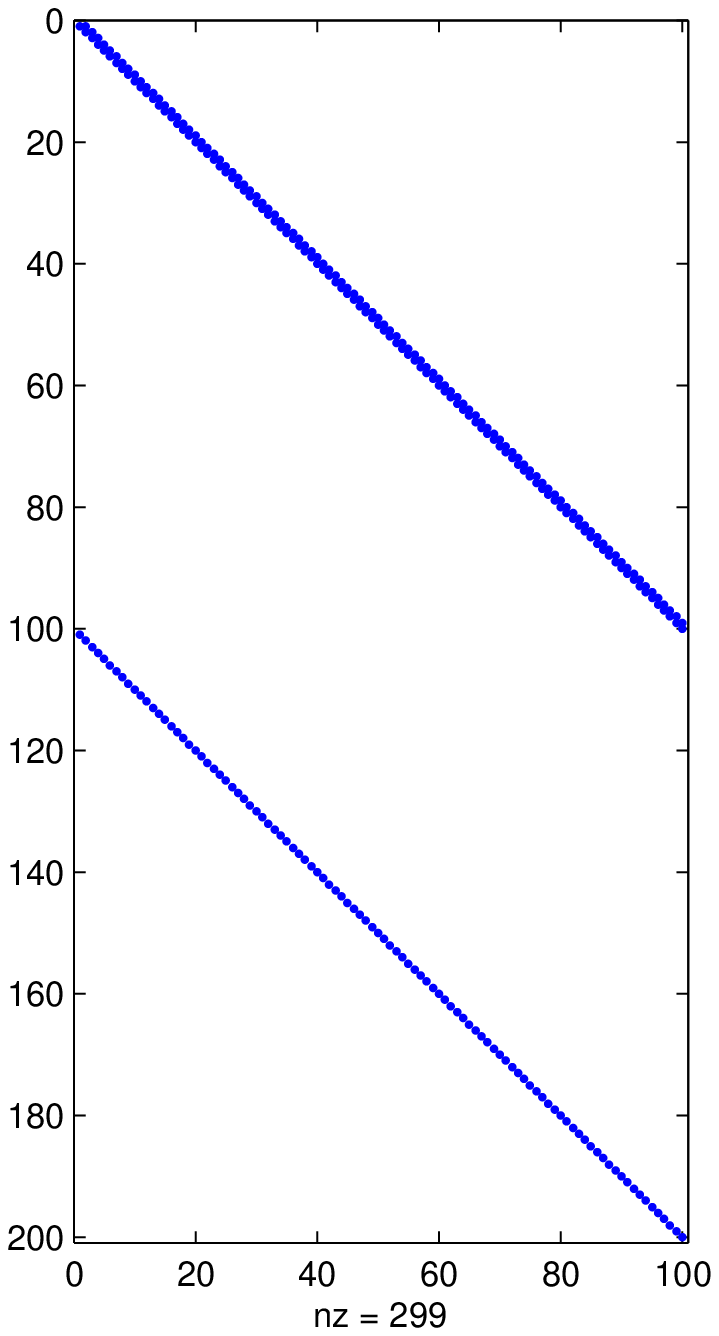} \label{fig: cholesky_RW1_A} }
    \subfigure[]{\includegraphics[width=0.2\textwidth,height=0.4\textwidth]{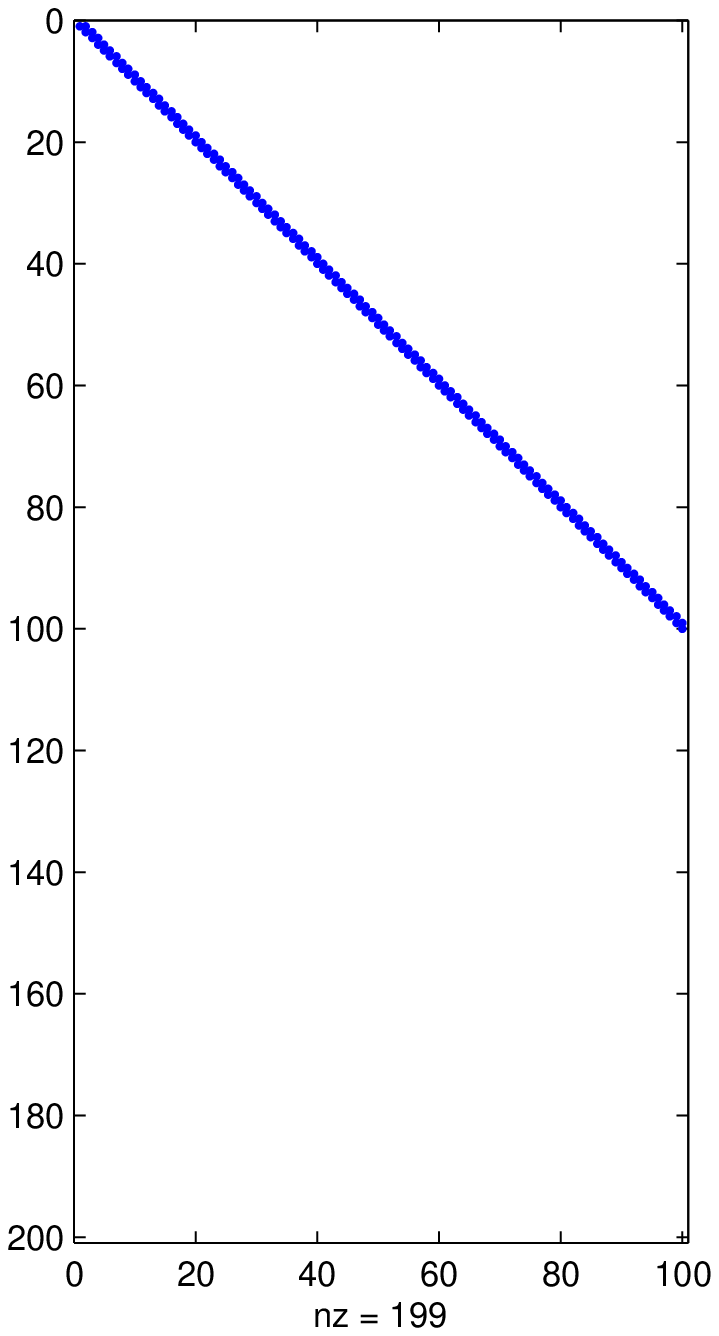} \label{fig: cholesky_RW1_R} }
    \caption{Sparsity patterns of $\boldsymbol{L}_1$ (a), $\boldsymbol{L}_2$ (b), $\boldsymbol{L}$ (c), $\boldsymbol{A}$ (d) and $\boldsymbol{R}$ (e) for the RW$1$ model} \label{fig: cholesky_RW1}
    \end{figure}   

    \begin{figure}[htb]
    \centering
    \subfigure[]{\includegraphics[width=0.4\textwidth,height=0.4\textwidth]{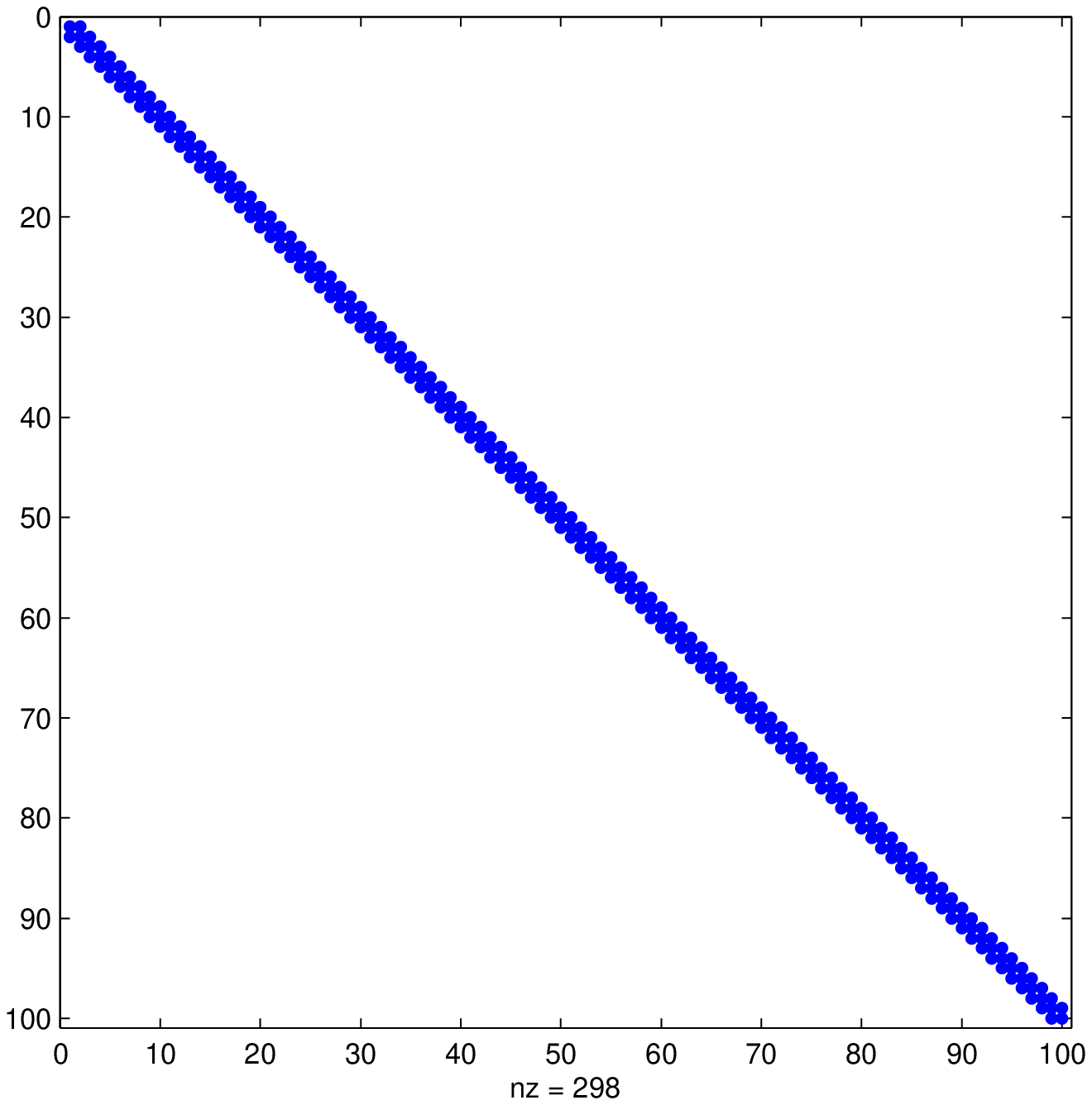} \label{fig: cholesky_RW1_Q} } 
    \subfigure[]{\includegraphics[width=0.4\textwidth,height=0.4\textwidth]{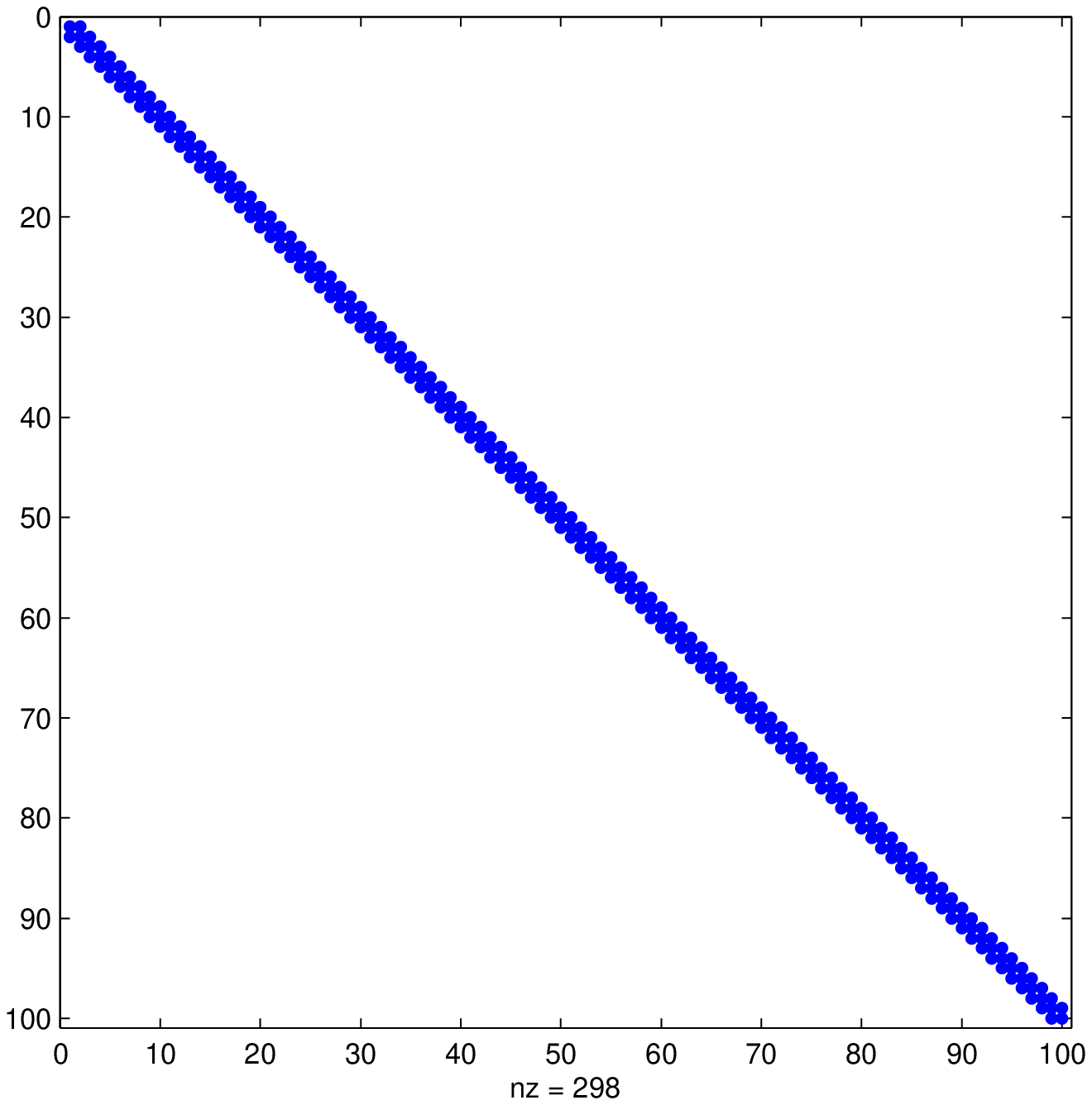} \label{fig: cholesky_RW1_Qq} }
    \caption{ Sparsity patterns for the true precision matrix $\boldsymbol{Q}$ (a) and the approximated precision matrix $\boldsymbol{\widetilde{Q}}$ (b) for the RW$1$ model} \label{fig:cholesky_RW1_QandQq}
    \end{figure}
   
    \begin{figure}[htb]
    \centering
    \subfigure[]{\includegraphics[width=0.3\textwidth,height=0.3\textwidth]{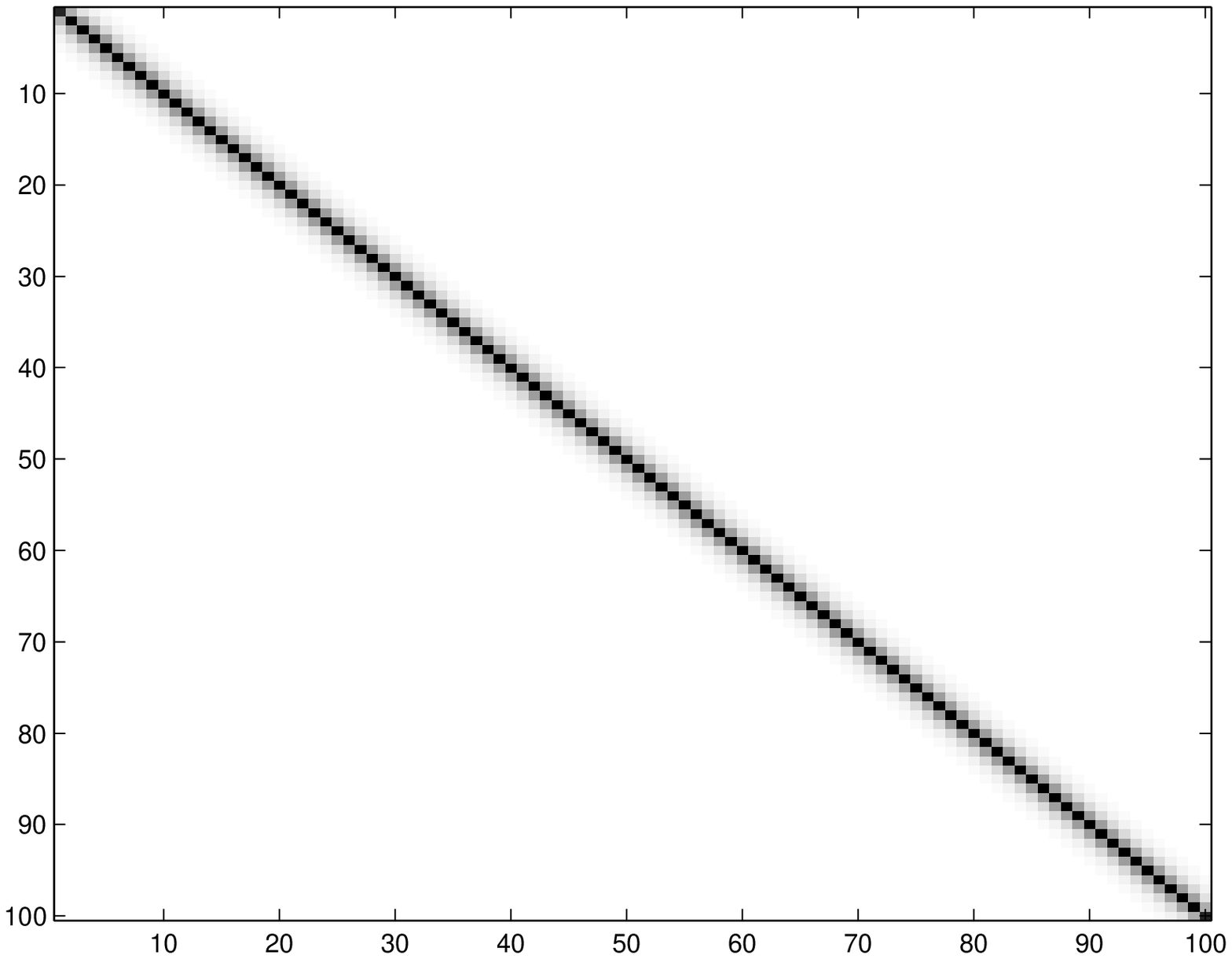} \label{fig: cholesky_RW1_InverseQ} } 
    \subfigure[]{\includegraphics[width=0.3\textwidth,height=0.3\textwidth]{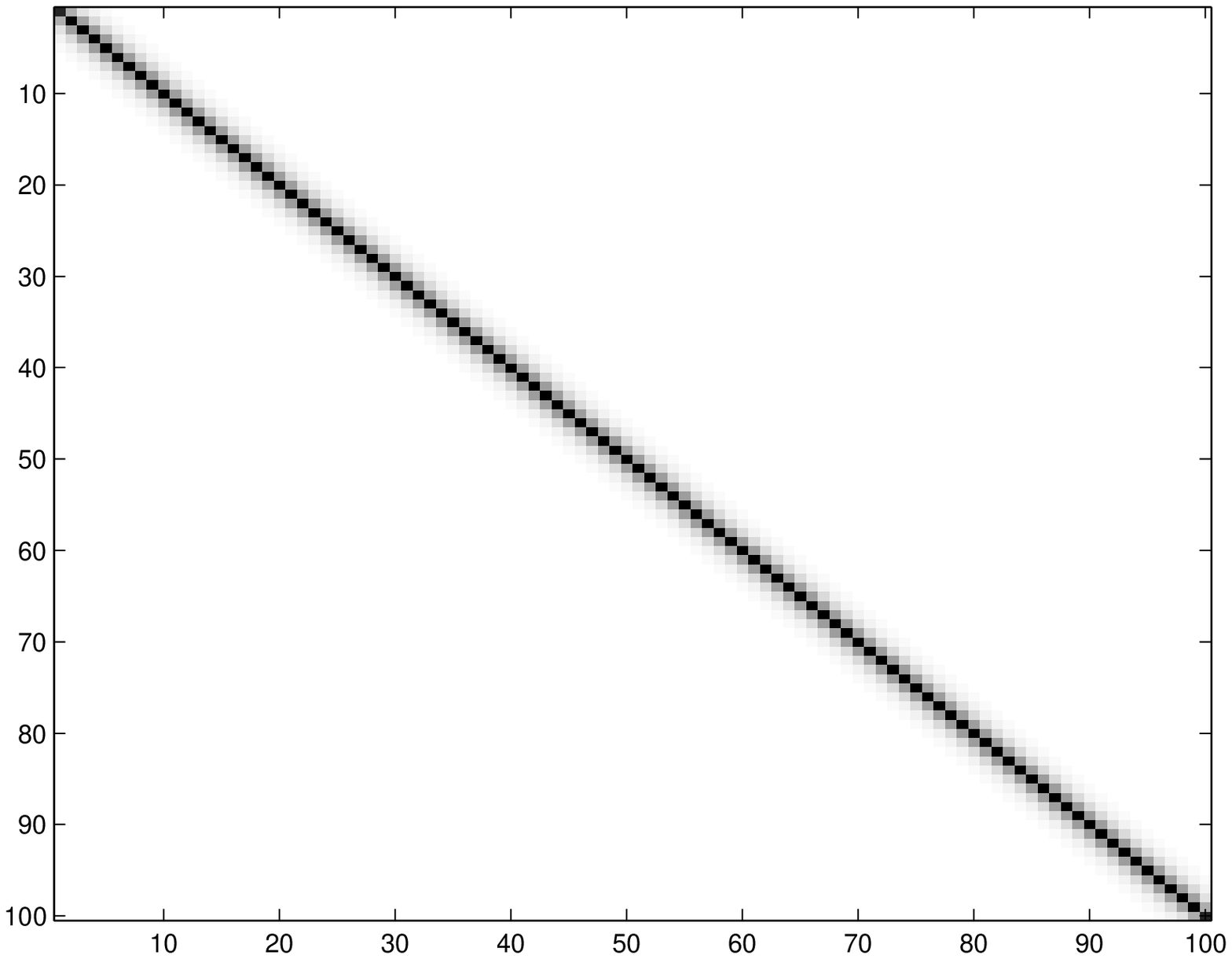} \label{fig: cholesky_RW1_InverseQq} }
    \subfigure[]{\includegraphics[width=0.3\textwidth,height=0.315\textwidth]{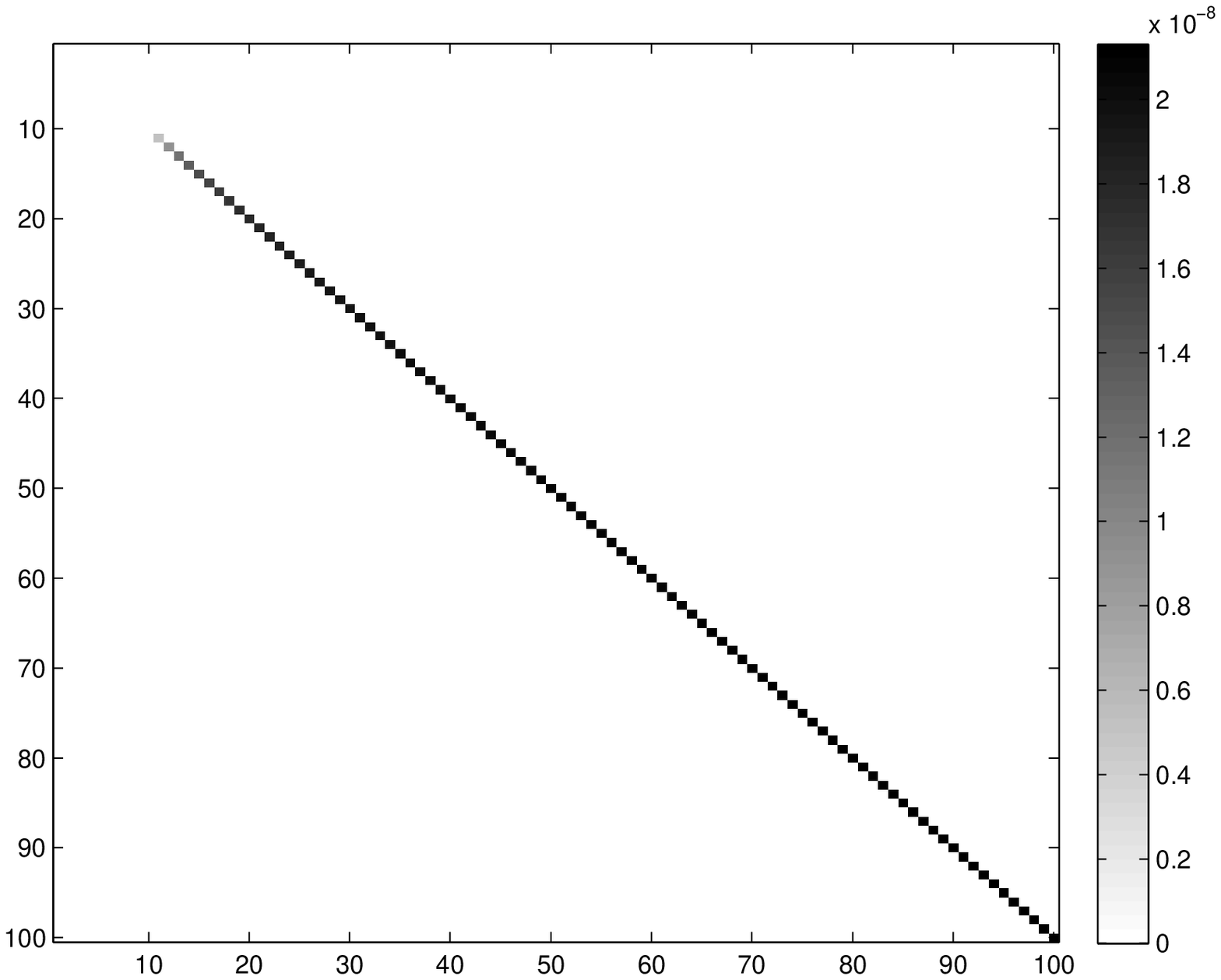} \label{fig: cholesky_RW1_Err} } 
    \caption{Images of true covariance matrix $\boldsymbol{\Sigma}$ (a), the approximated covariance matrix $\boldsymbol{\widetilde{\Sigma}}$ (b) and 
             and the error matrix $\widetilde{\boldsymbol{E}}$ (c) for RW$1$ model} \label{fig: cholesky_RW1ImQ12QqR}
    \end{figure}
      
    \begin{figure}[htb]
    \centering
    \subfigure[]{\includegraphics[width=0.3\textwidth,height=0.3\textwidth]{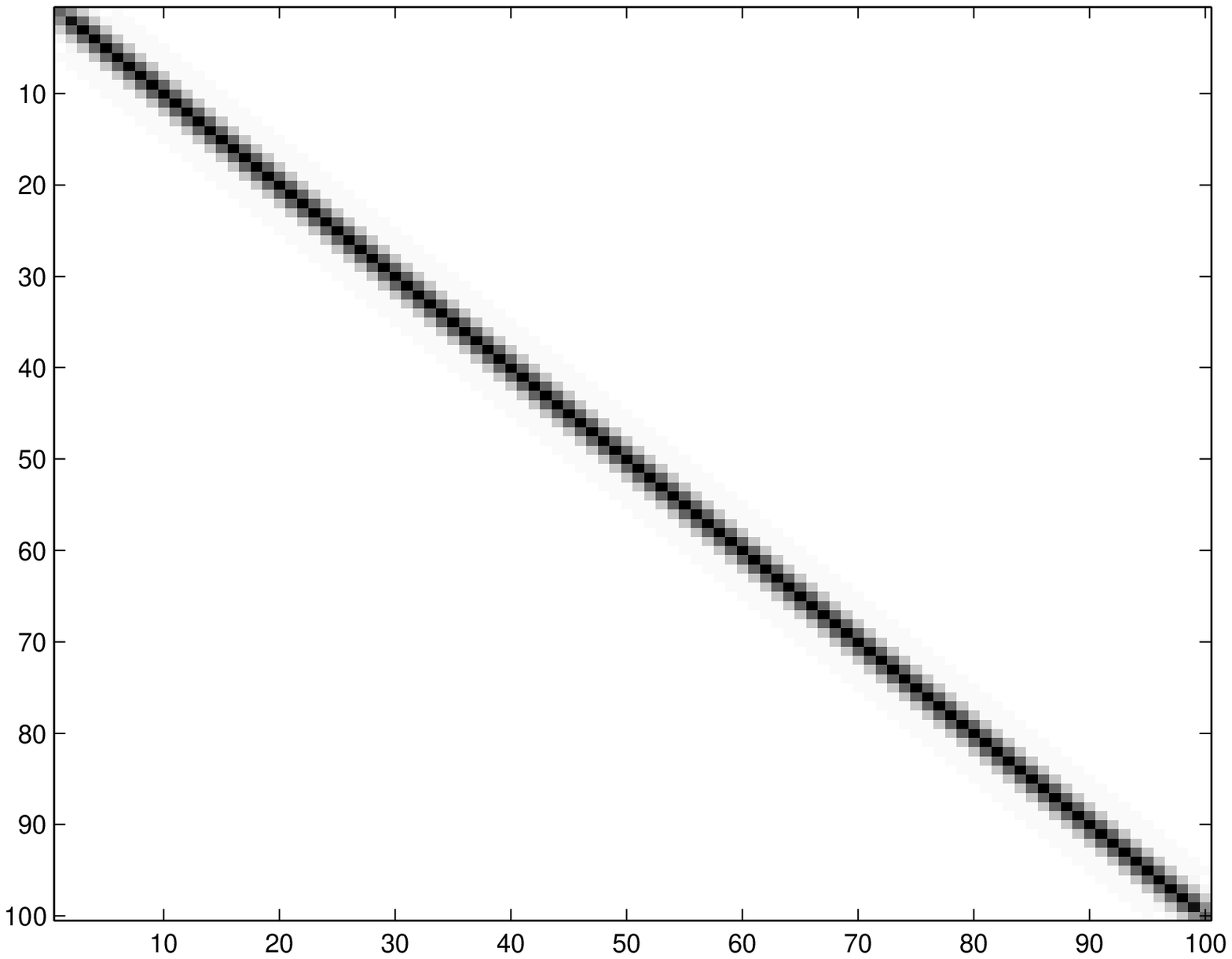} \label{fig: cholesky_RW2_InverseQ} }
    \subfigure[]{\includegraphics[width=0.3\textwidth,height=0.3\textwidth]{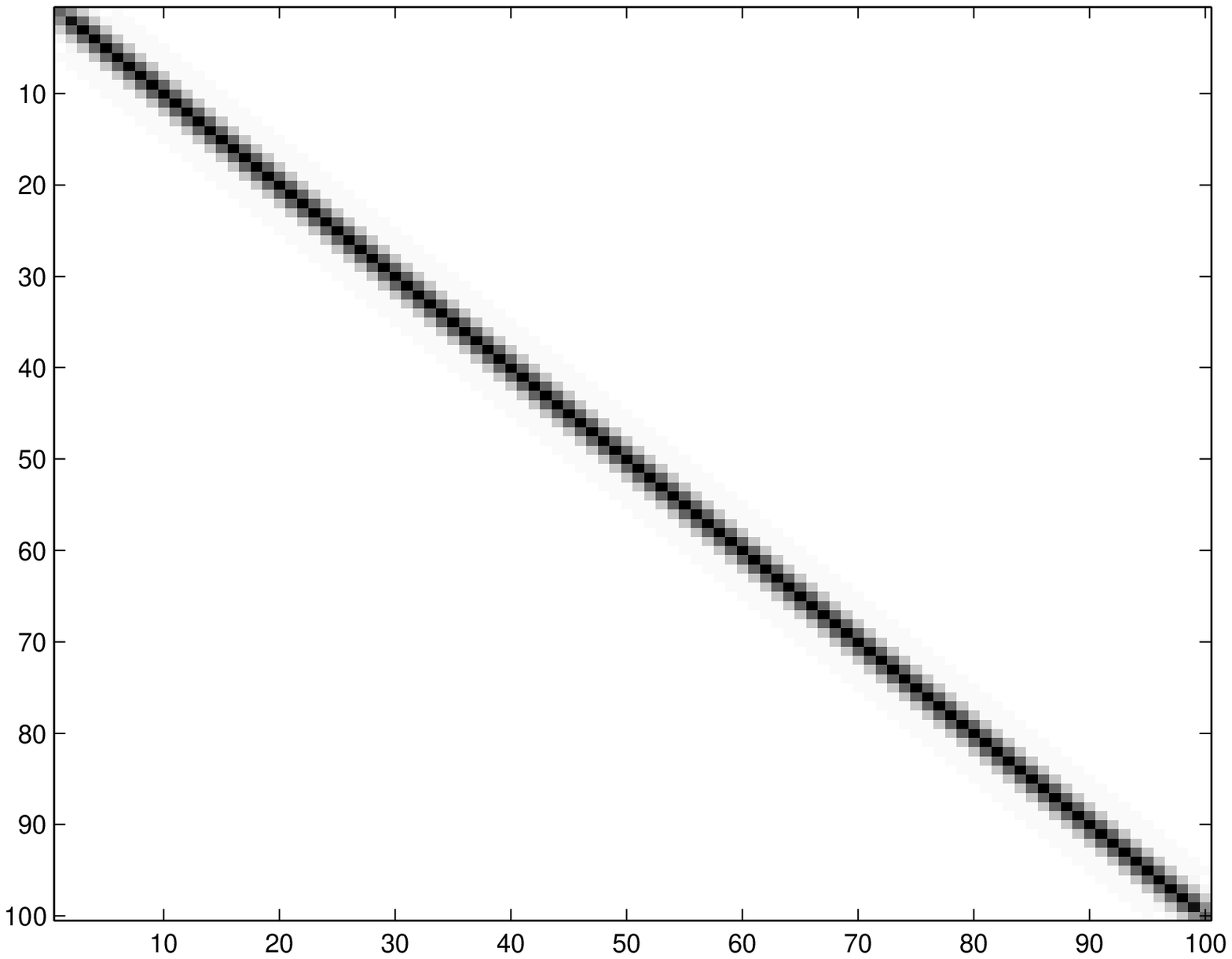} \label{fig: cholesky_RW2_InverseQq} }
    \subfigure[]{\includegraphics[width=0.3\textwidth,height=0.315\textwidth]{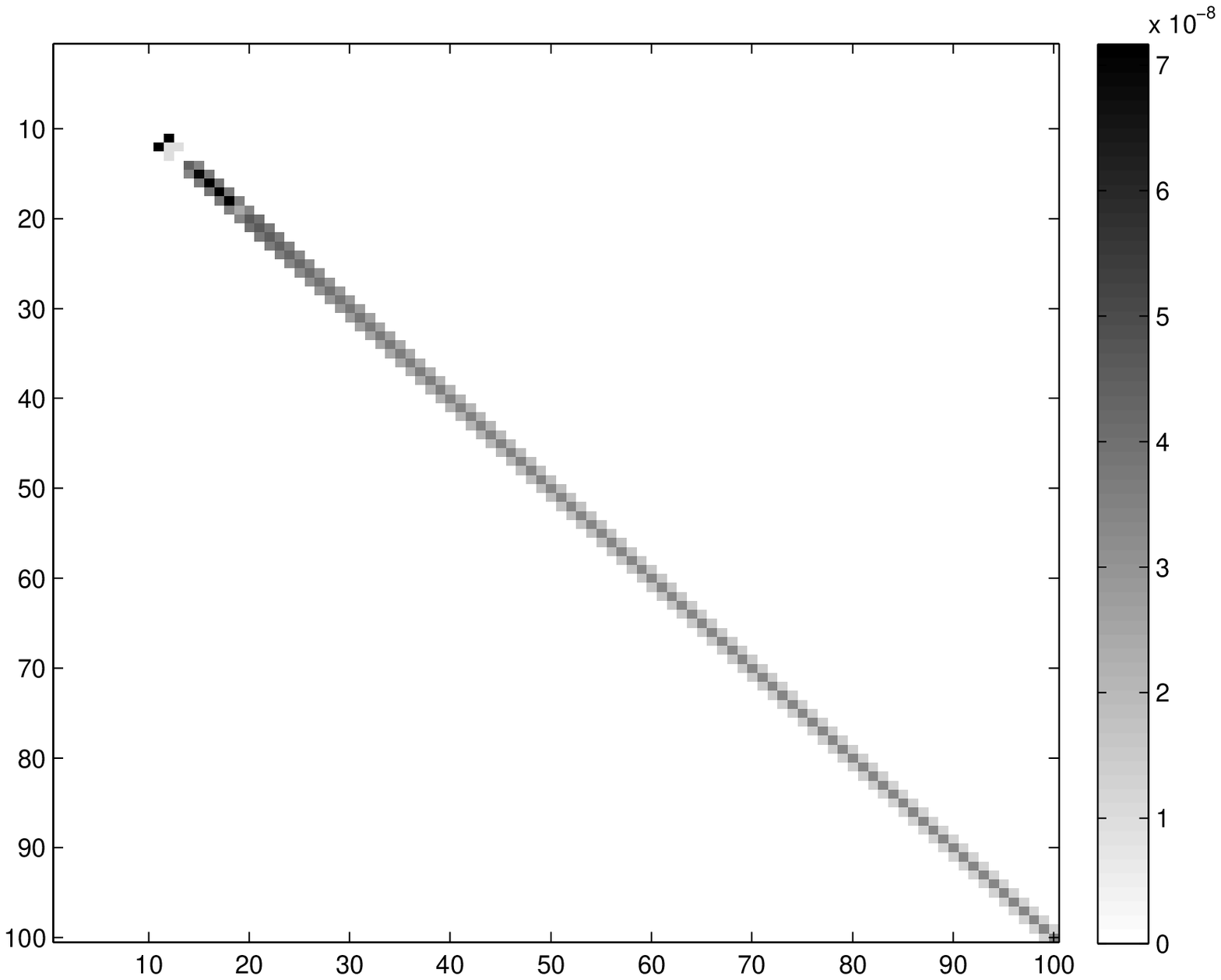} \label{fig: cholesky_RW2_Err} }
    \caption{Images of the true covariance matrix $\boldsymbol{\Sigma}$ (a), the approximated covariance matrix $\boldsymbol{\widetilde{\Sigma}}$ (b)
             and and the error matrix $\widetilde{\boldsymbol{E}}$ (c) for RW$2$ model} \label{fig: cholesky_RW2_ImQ12QqR}
    \end{figure}

    The next example we have chosen is a block tridiagonal matrix of order $n^2$ resulting from
    discretizing Poisson's equation with the $5$-point operator on an $n$-by-$n$ mesh. Thus it is called Poisson matrix in this paper. The sparsity pattern of this matrix
    is given in Figure \ref{fig: cholesky_Possion_Q}. With the Poisson matrix and $\boldsymbol{Q}_2$ as before, we find the Cholesky factors $\boldsymbol{L}_1$ and $\boldsymbol{L}_2$ and form the rectangular matrix $\boldsymbol{A}$.
    We apply the cTIGO algorithm to the matrix $\boldsymbol{A}$ with dropping tolerance $\tau = 0.0001$ to find the sparse upper triangular matrix $\boldsymbol{R}$.
    The sparsity patterns of the matrices $\boldsymbol{L}_1$, $\boldsymbol{L}_2$, $\boldsymbol{L}$, $\boldsymbol{A}$ and $\boldsymbol{R}$ are given in Figure \ref{fig: cholesky_Possion_L1} - Figure \ref{fig: cholesky_Possion_R}, respectively.
    The sparsity patterns of the true precision matrix $\boldsymbol{Q}$ and the approximated precision matrix $\boldsymbol{\widetilde{Q}}$ are given in Figure \ref{fig: Cholsky_Possion_QandQq}.
    We can notice that the upper triangular matrix $\boldsymbol{R}$ is sparser than the Cholesky factor $\boldsymbol{L}$ from the original precision matrix $\boldsymbol{Q}$.
    It can be shown that the sparseness depends on the dropping tolerance $\tau$. 
    The images of the true covariance matrices $\boldsymbol{\Sigma}$, the approximated precision matrix $\boldsymbol{\widetilde{\Sigma}}$, and the error
    matrix $\boldsymbol{\widetilde{E}}$ in this case are shown in Figure \ref{fig: cholesky_Possion_ImQ12QqR}. Note that
    the order of the numerical values in the error matrix $\boldsymbol{\widetilde{E}}$ is $10^{-5}$. This is small for practical use. More results for this band matrix are given in Section \ref{sec: cholesky_result_comparsion}.

    \begin{figure}[htb]
    \centering
    \subfigure[]{\includegraphics[width=0.4\textwidth,height=0.4\textwidth]{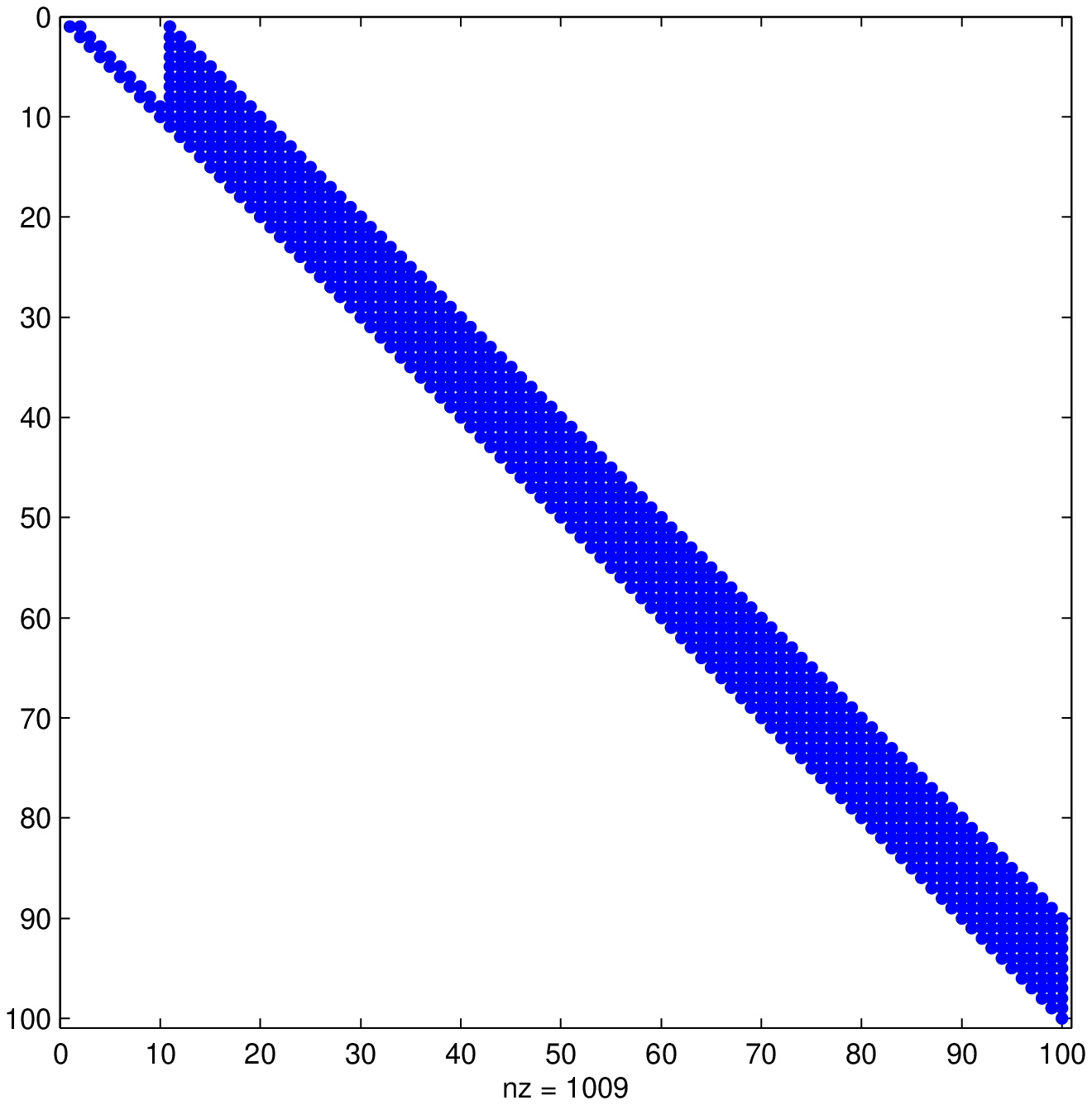} \label{fig: cholesky_Possion_L1} }
    \subfigure[]{\includegraphics[width=0.4\textwidth,height=0.4\textwidth]{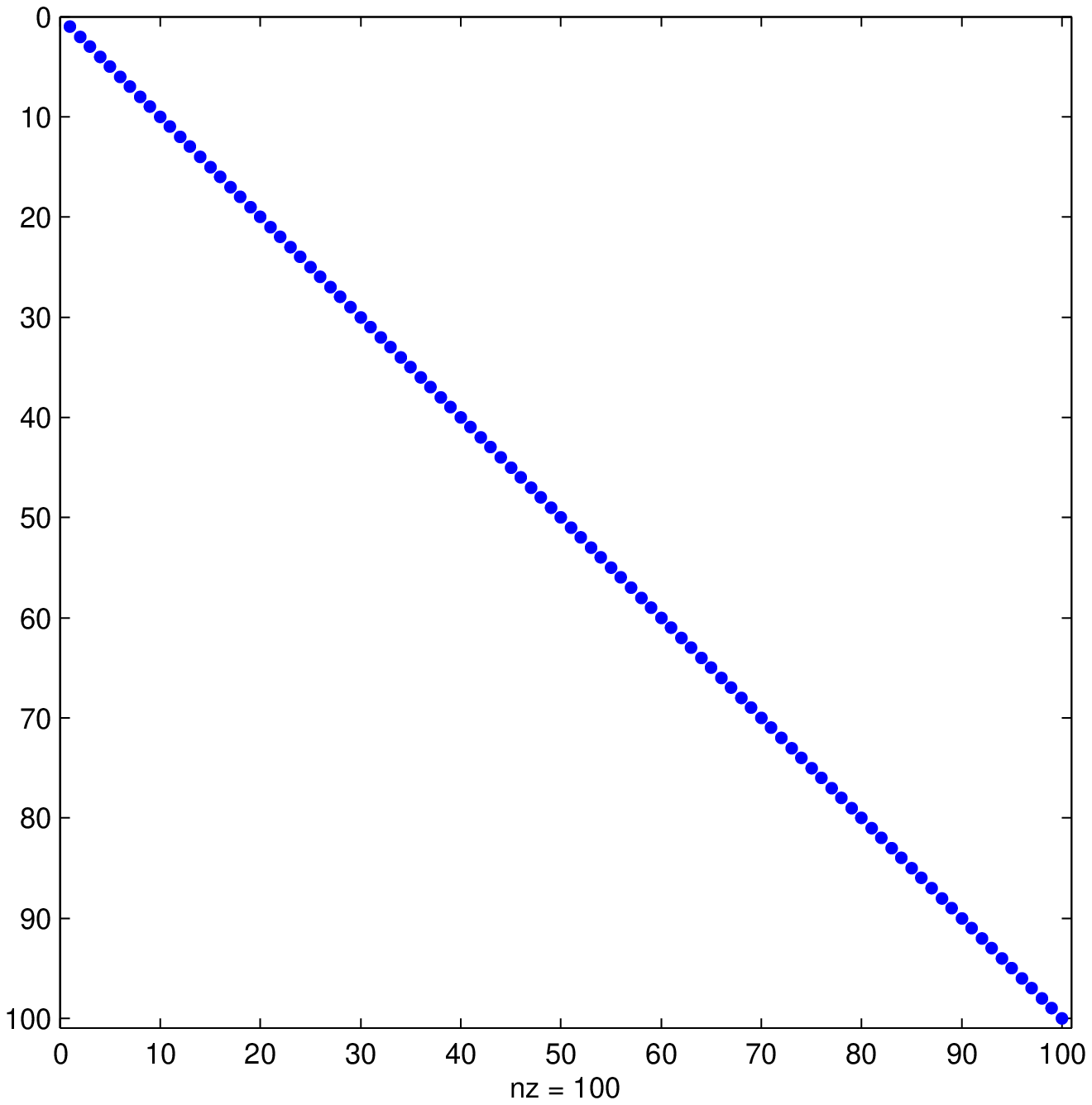} \label{fig: cholesky_Possion_L2} } 
    \subfigure[]{\includegraphics[width=0.4\textwidth,height=0.4\textwidth]{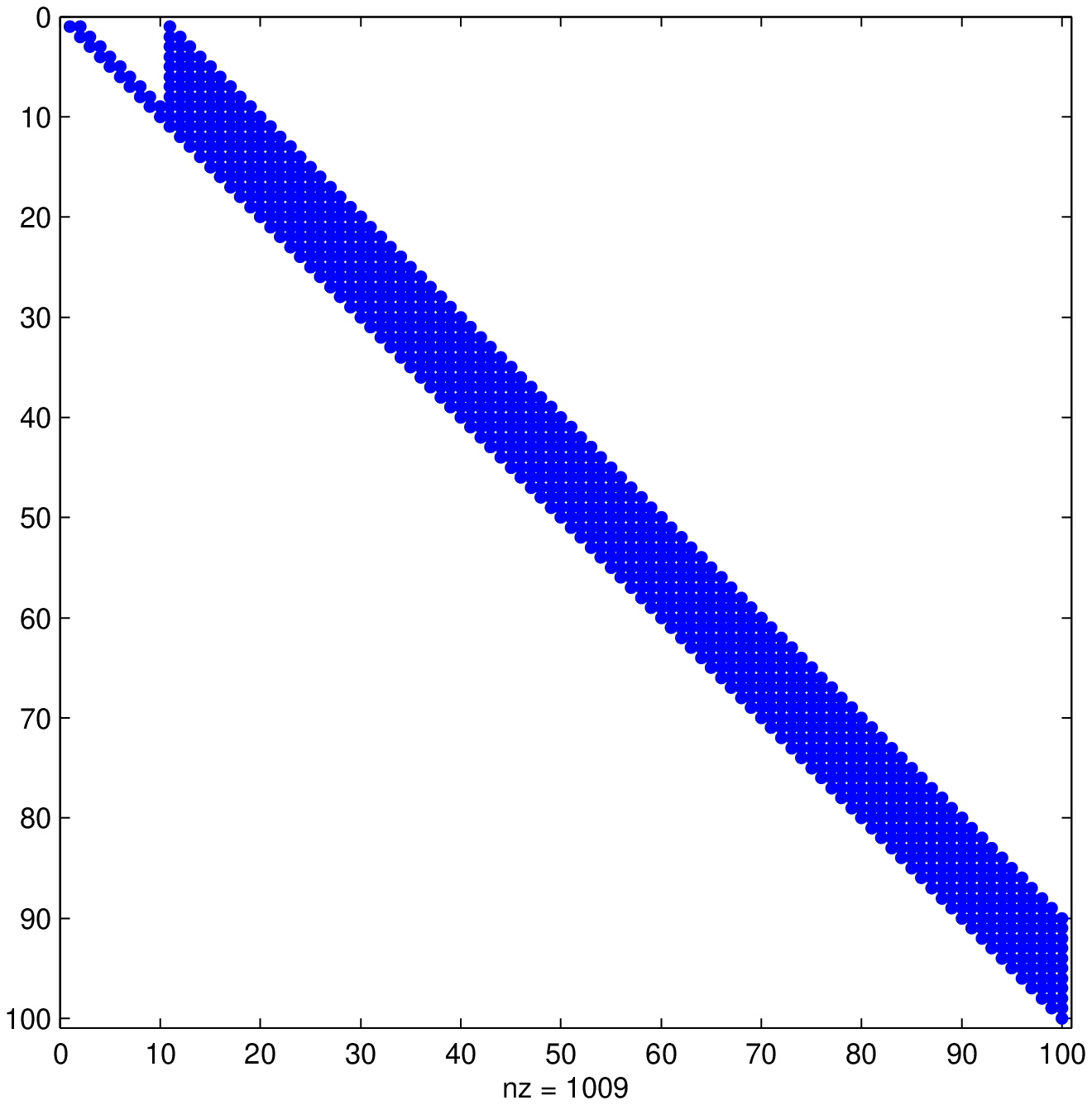} \label{fig: cholesky_Possion_L} }
    \subfigure[]{\includegraphics[width=0.2\textwidth,height=0.4\textwidth]{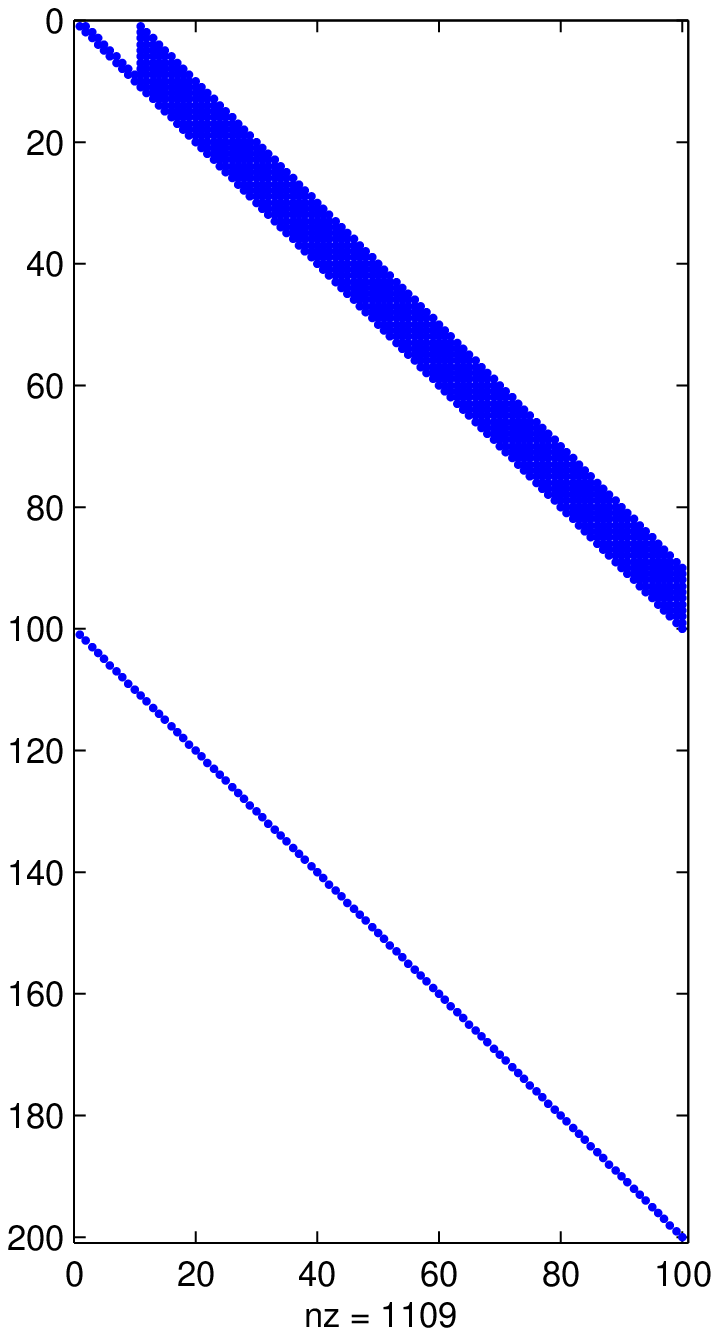} \label{fig: cholesky_Possion_A} }
    \subfigure[]{\includegraphics[width=0.2\textwidth,height=0.4\textwidth]{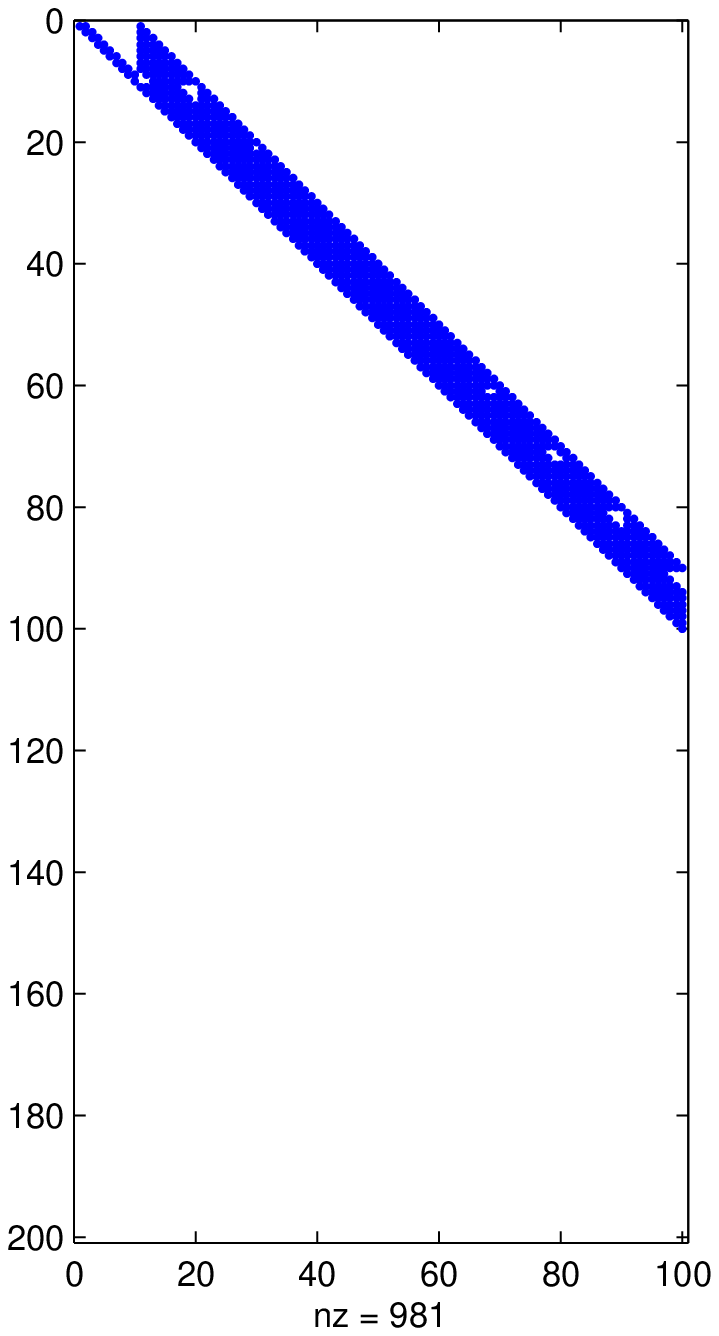} \label{fig: cholesky_Possion_R} }
    \caption{ Sparsity patterns of $\boldsymbol{L}_1$ (a), $\boldsymbol{L}_2$ (b), $\boldsymbol{L}$ (c), $\boldsymbol{A}$ (d) and $\boldsymbol{R}$ (e) with Poisson matrix }\label{fig: cholesky_Possion}
    \end{figure}

    \begin{figure}[htb]
    \centering
    \subfigure[]{\includegraphics[width=0.4\textwidth,height=0.4\textwidth]{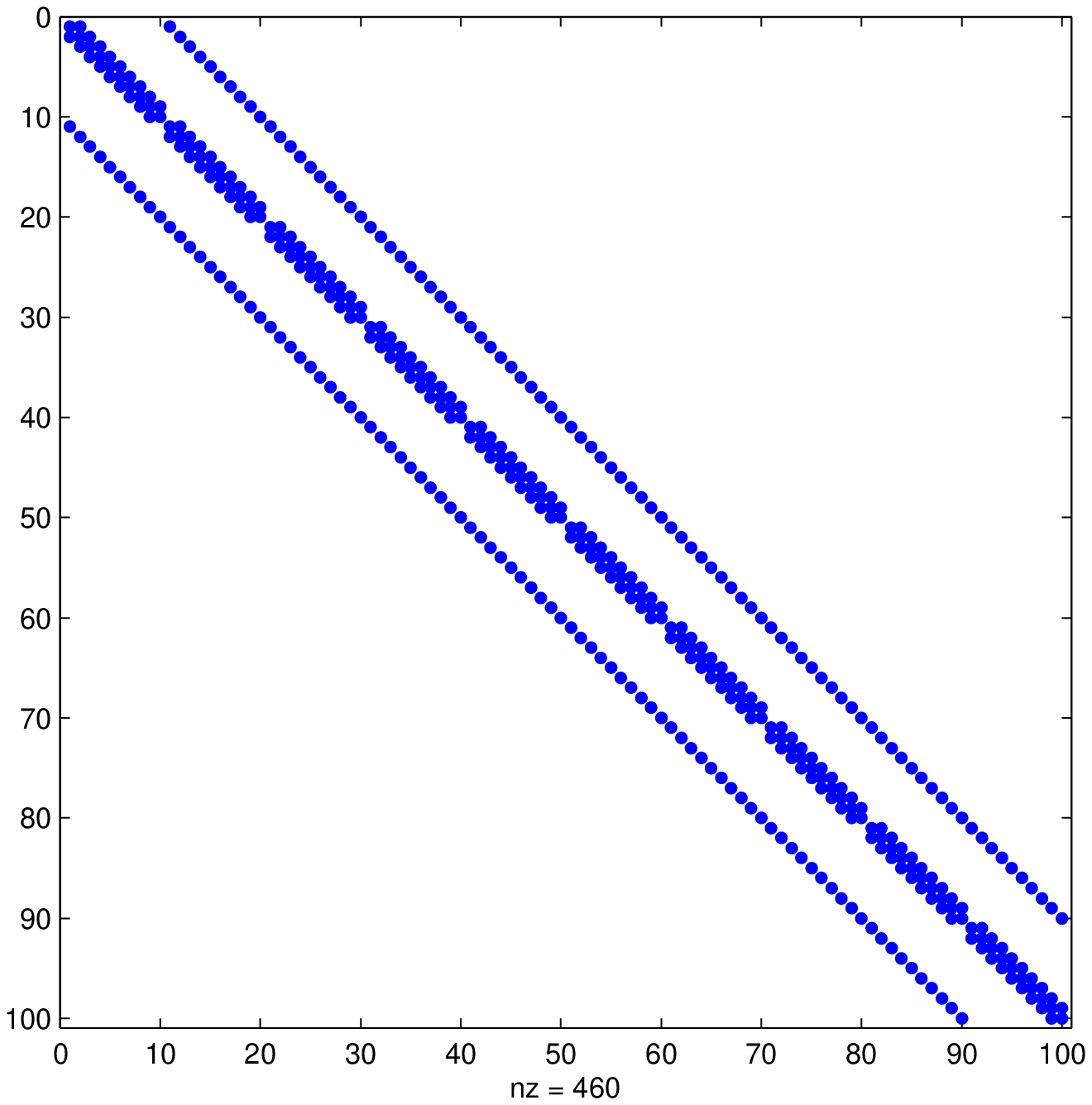} \label{fig: cholesky_Possion_Q} }
    \subfigure[]{\includegraphics[width=0.4\textwidth,height=0.4\textwidth]{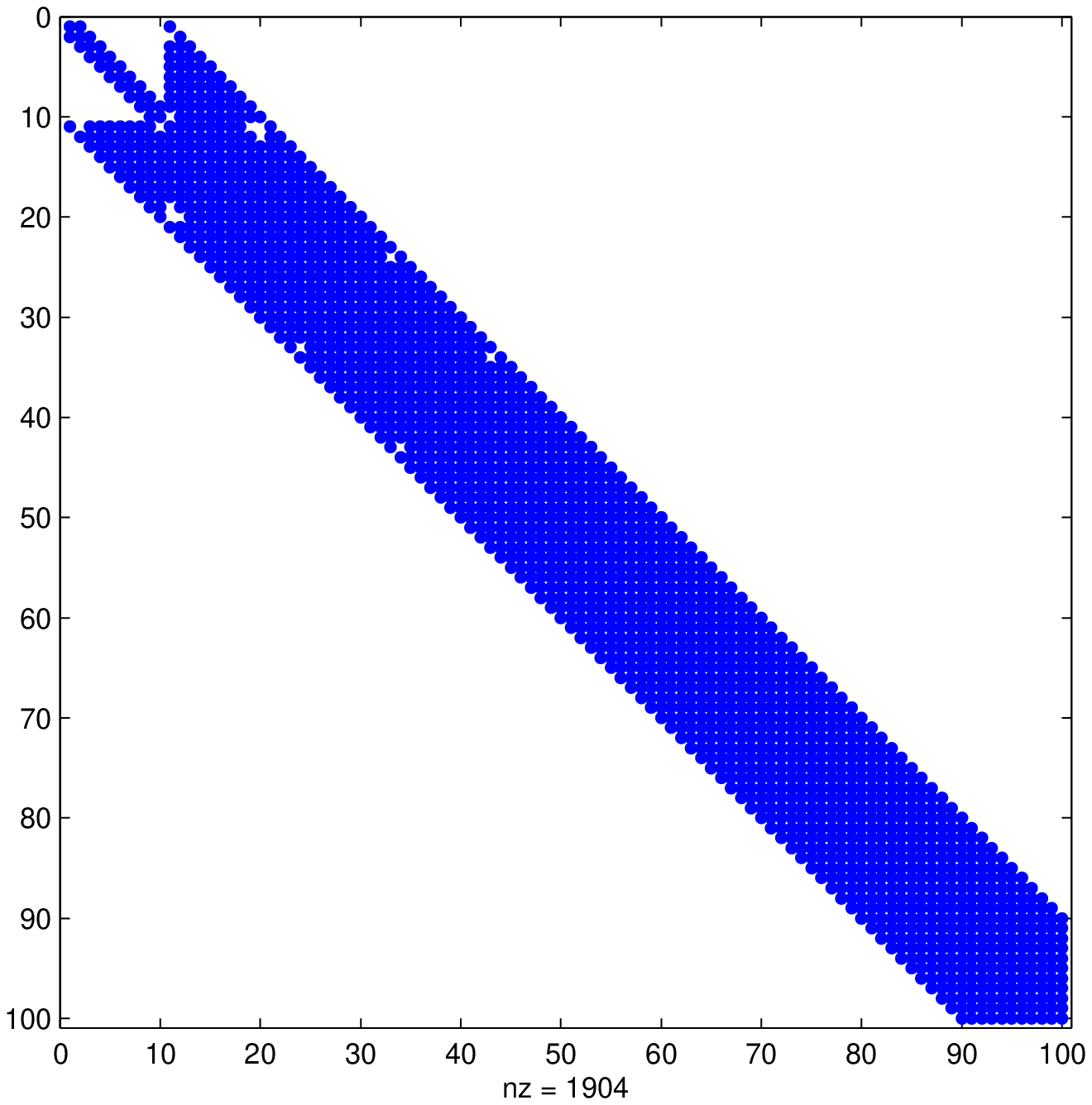} \label{fig: cholesky_Possion_Qq} }
    \caption{Sparsity patterns for the true precision matrix $\boldsymbol{Q}$ (a) and the approximated precision matrix $\boldsymbol{\widetilde{Q}}$ (b) with Poisson matrix} \label{fig: Cholsky_Possion_QandQq}
    \end{figure}

    \begin{figure}[htb]
    \centering
    \subfigure[]{\includegraphics[width=0.3\textwidth,height=0.3\textwidth]{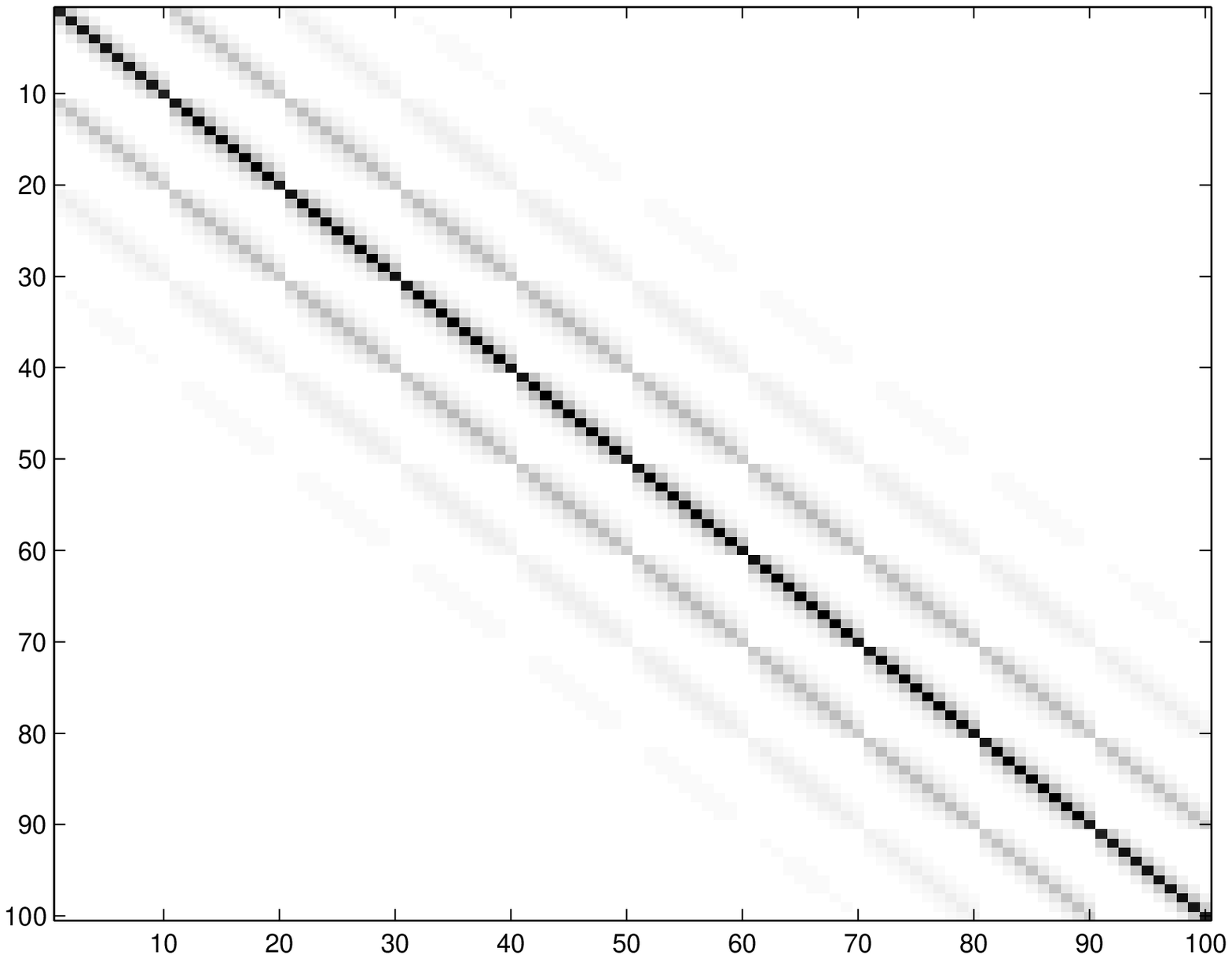} \label{fig: cholesky_Possion_InverseQ} }
    \subfigure[]{\includegraphics[width=0.3\textwidth,height=0.3\textwidth]{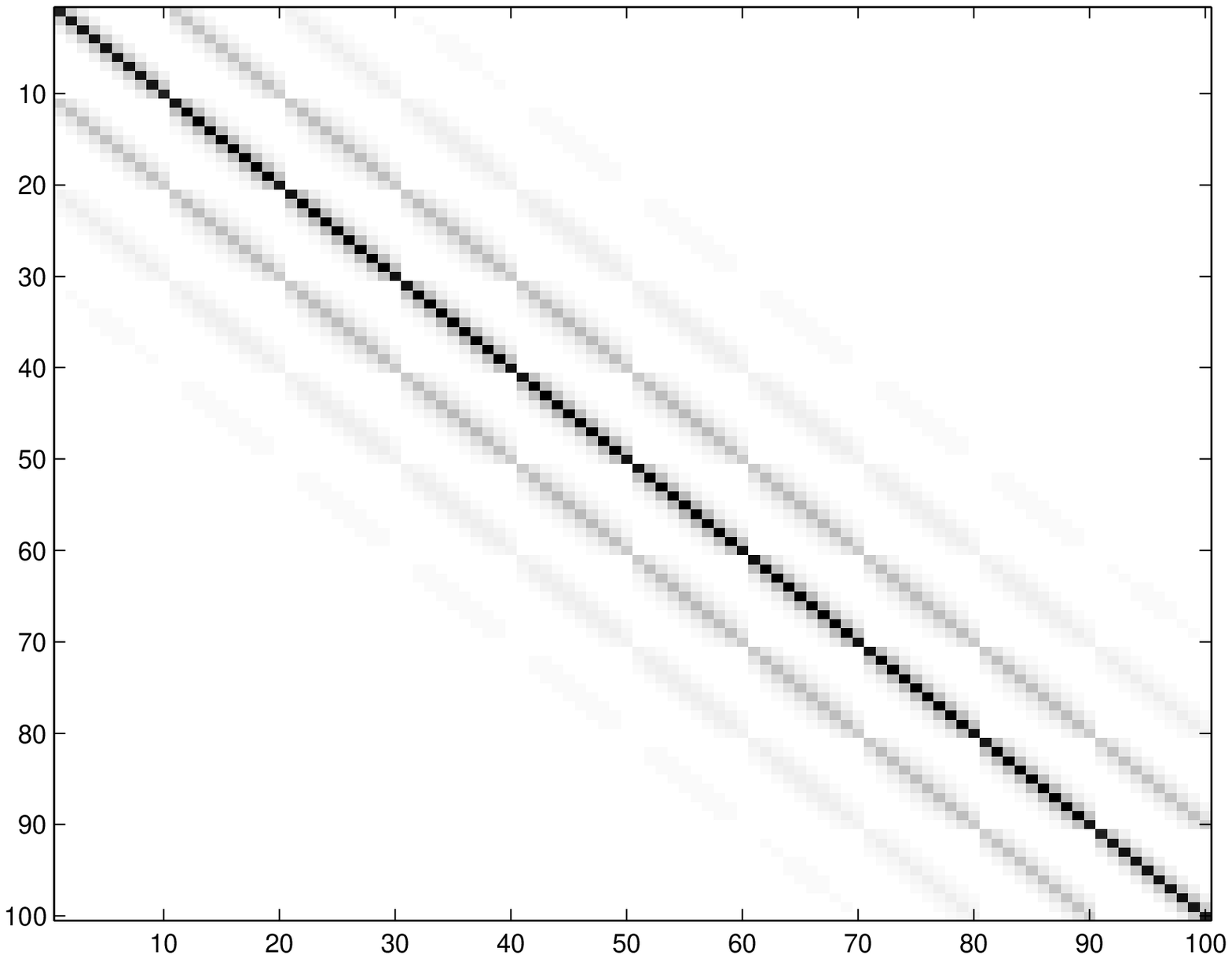} \label{fig: cholesky_Possion_InverseQq}}
    \subfigure[]{\includegraphics[width=0.3\textwidth,height=0.315\textwidth]{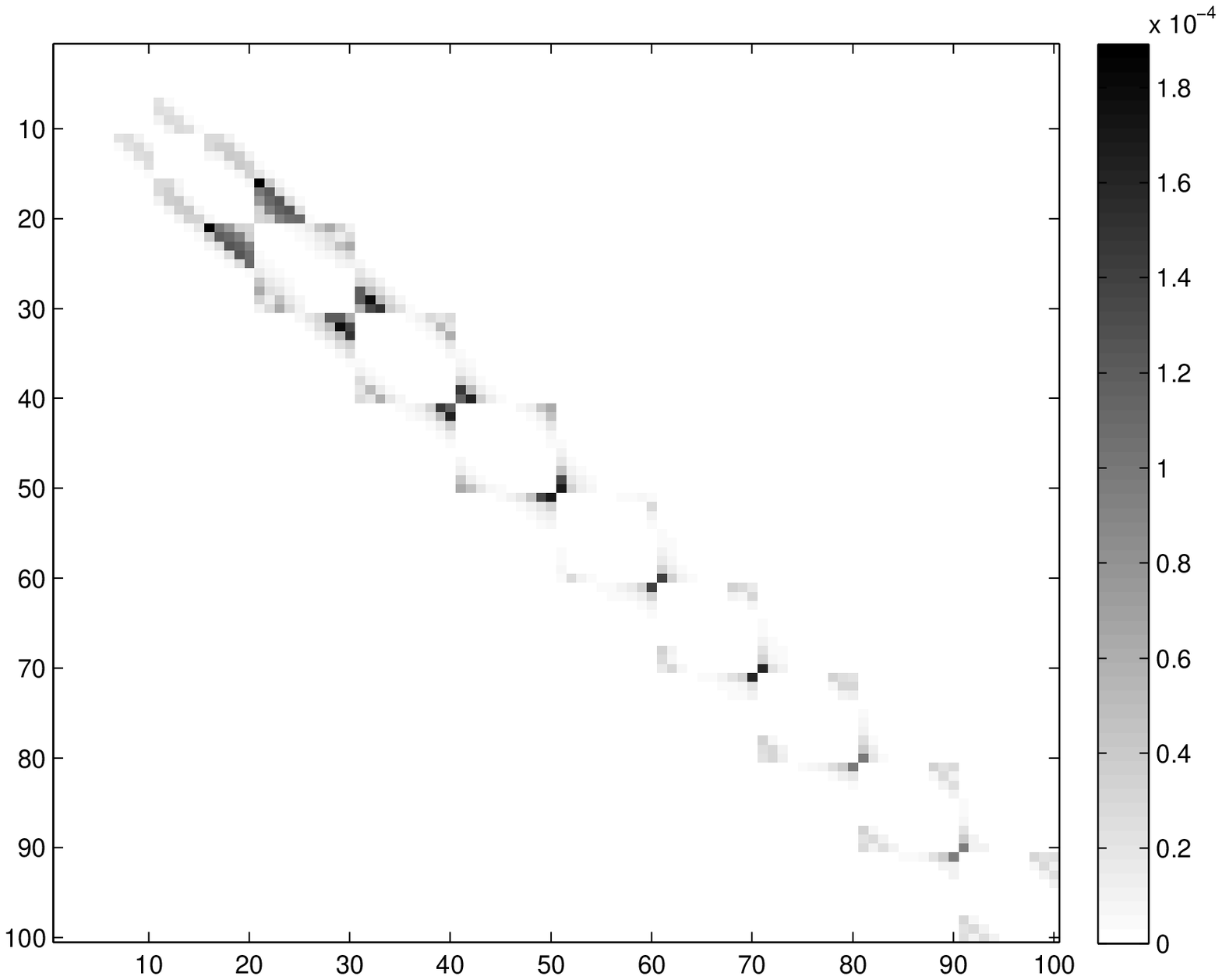} \label{fig: cholesky_Possion_Err}}
    \caption{Images of true covariance matrix $\boldsymbol{\Sigma}$ (a), the approximated covariance matrix $\boldsymbol{\widetilde{\Sigma}}$ (b), 
             and the error matrix $\widetilde{\boldsymbol{E}}$ (c) for Poisson matrix} \label{fig: cholesky_Possion_ImQ12QqR}
    \end{figure}

   The next example is a precision matrix with a nearly band matrix.   
   Assume that $\boldsymbol{Q}_1$ is a nearly banded matrix but with the values $\boldsymbol{Q}_1(1,n) = 1$ and $\boldsymbol{Q}_1(n,1) = 1$. We call this matrix as Toeplitz matrix in this paper.
   The sparsity pattern of this matrix is given in Figure \ref{fig: cholesky_Toeplitz_Q}.
   With the dropping tolerance $\tau = 0.0001$, we apply the cTIGO algorithm to the rectangular matrix $\boldsymbol{A}$.
   The sparsity patterns of $\boldsymbol{L}_1$, $\boldsymbol{L}_2$, $\boldsymbol{L}$, $\boldsymbol{A}$ and $\boldsymbol{R}$ are given in Figure \ref{fig: cholesky_Toeplitz_L1} - Figure \ref{fig: cholesky_Toeplitz_R}, respectively. 
   We notice that the upper triangular matrix $\boldsymbol{R}$ is sparser than the matrix $\boldsymbol{L}$.
   We can also notice that the sparseness of $\boldsymbol{R}$ depends on the tolerance $\tau$. The sparsity pattern of the approximated precision matrix $\boldsymbol{\widetilde{Q}}$ is given in Fig \ref{fig: cholesky_Toeplitz_Qq}.
    The image of the true covariance matrices $\boldsymbol{\Sigma}$, the approximated covariance matrix $\boldsymbol{\widetilde{\Sigma}}$, and the error
    matrix $\boldsymbol{\widetilde{E}}$ are shown in Figure \ref{fig: cholesky_Toeplitz_ImQ12QqR}. Note that the order of the numerical values in the error matrix is $10^{-5}$ with
    the given tolerance. More simulation results for this matrix are found in Section \ref{sec: cholesky_result_comparsion}.

    \begin{figure}[htb]
    \centering
    \subfigure[]{\includegraphics[width=0.4\textwidth,height=0.4\textwidth]{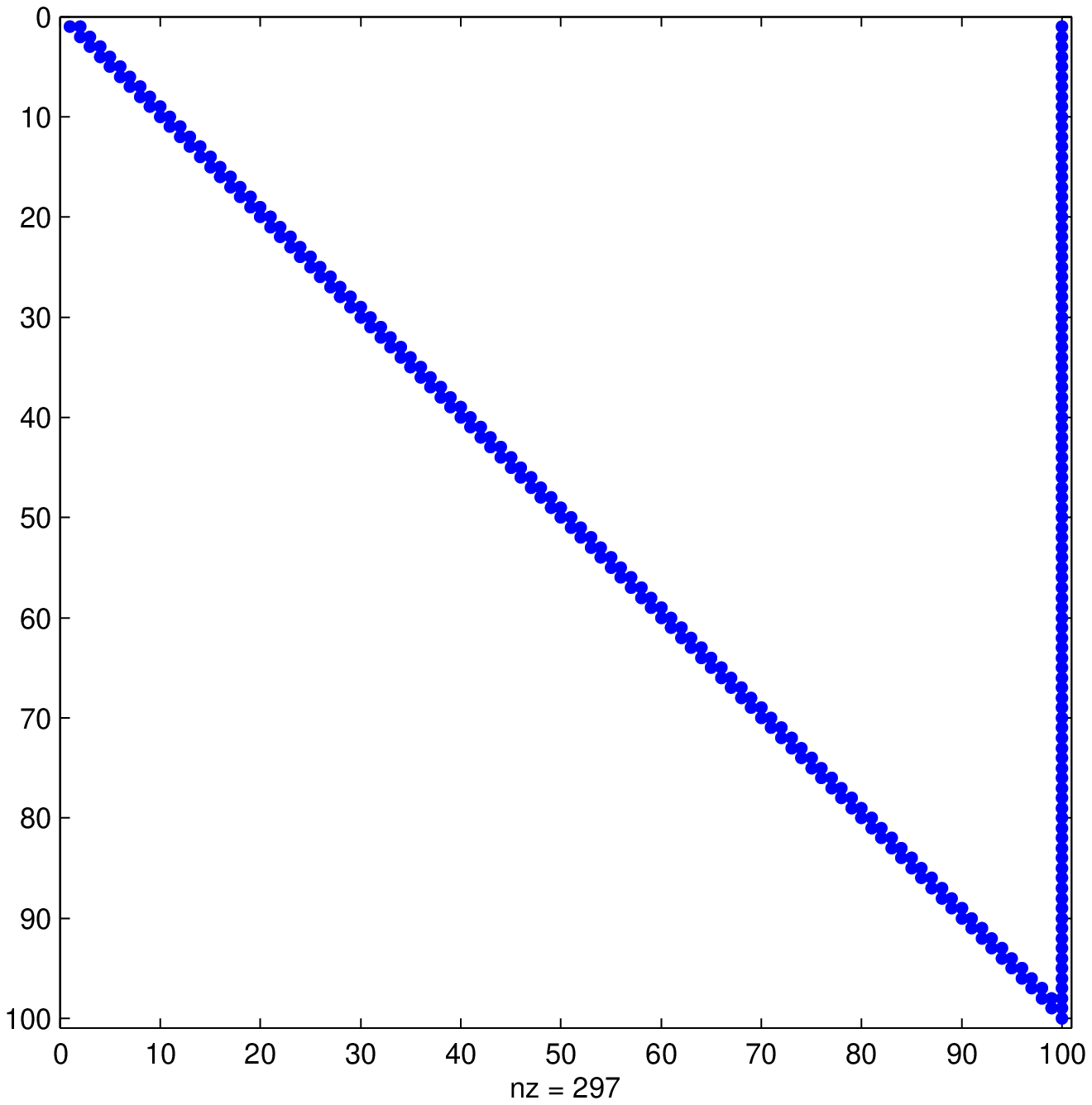} \label{fig: cholesky_Toeplitz_L1} }
    \subfigure[]{\includegraphics[width=0.4\textwidth,height=0.4\textwidth]{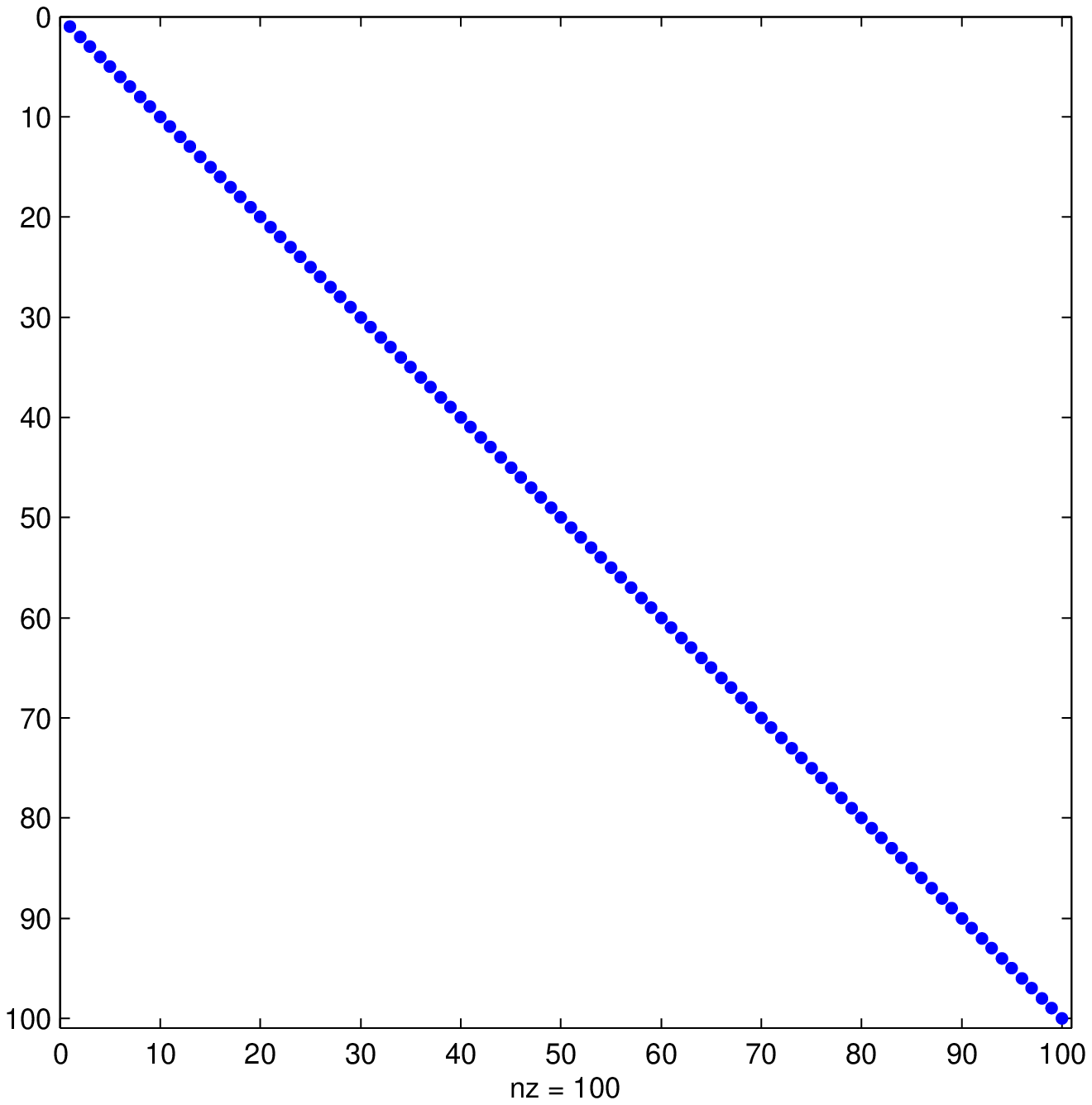} \label{fig: cholesky_Toeplitz_L2} }
    \subfigure[]{\includegraphics[width=0.4\textwidth,height=0.4\textwidth]{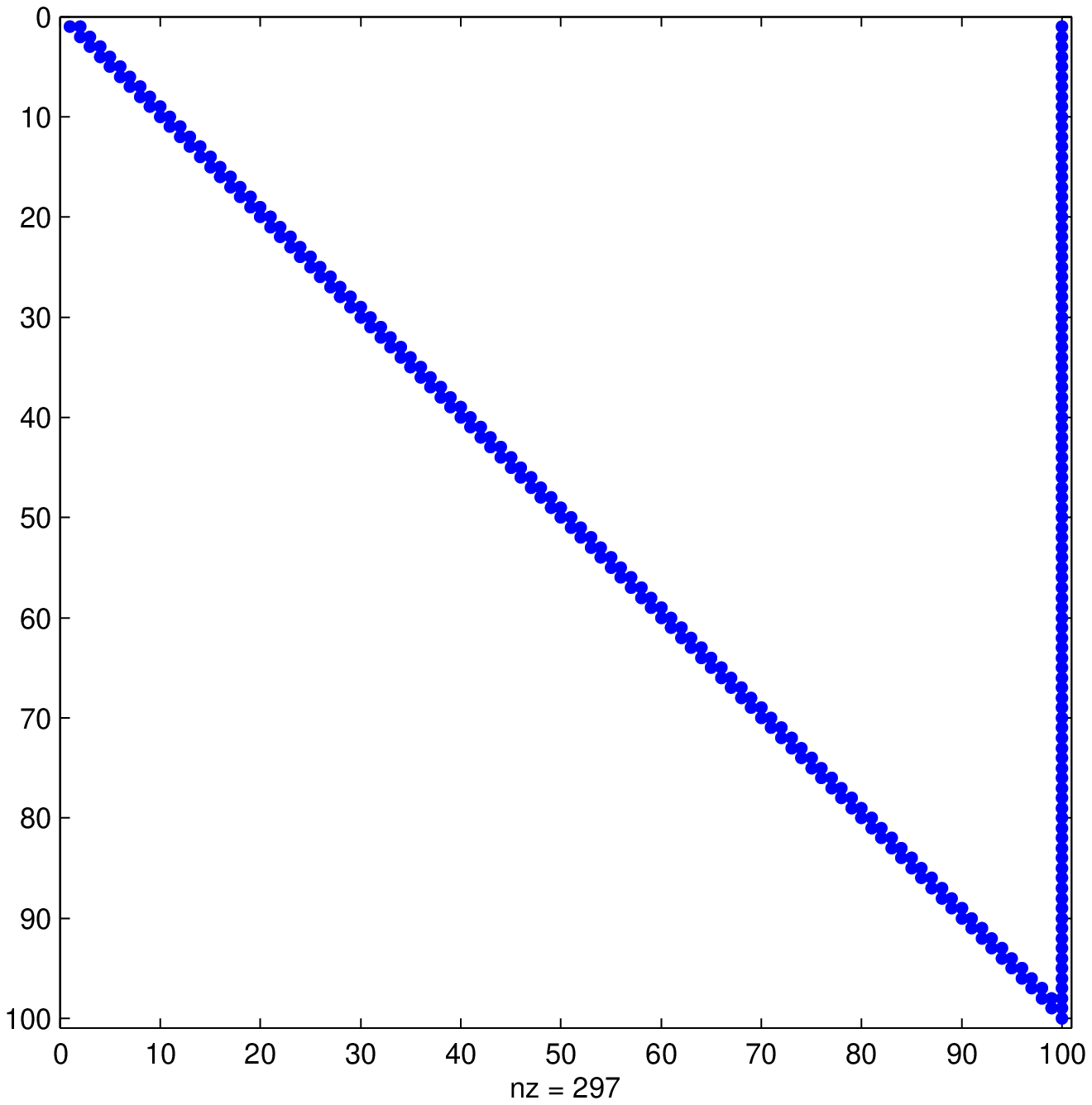} \label{fig: cholesky_Toeplitz_L} }
    \subfigure[]{\includegraphics[width=0.2\textwidth,height=0.4\textwidth]{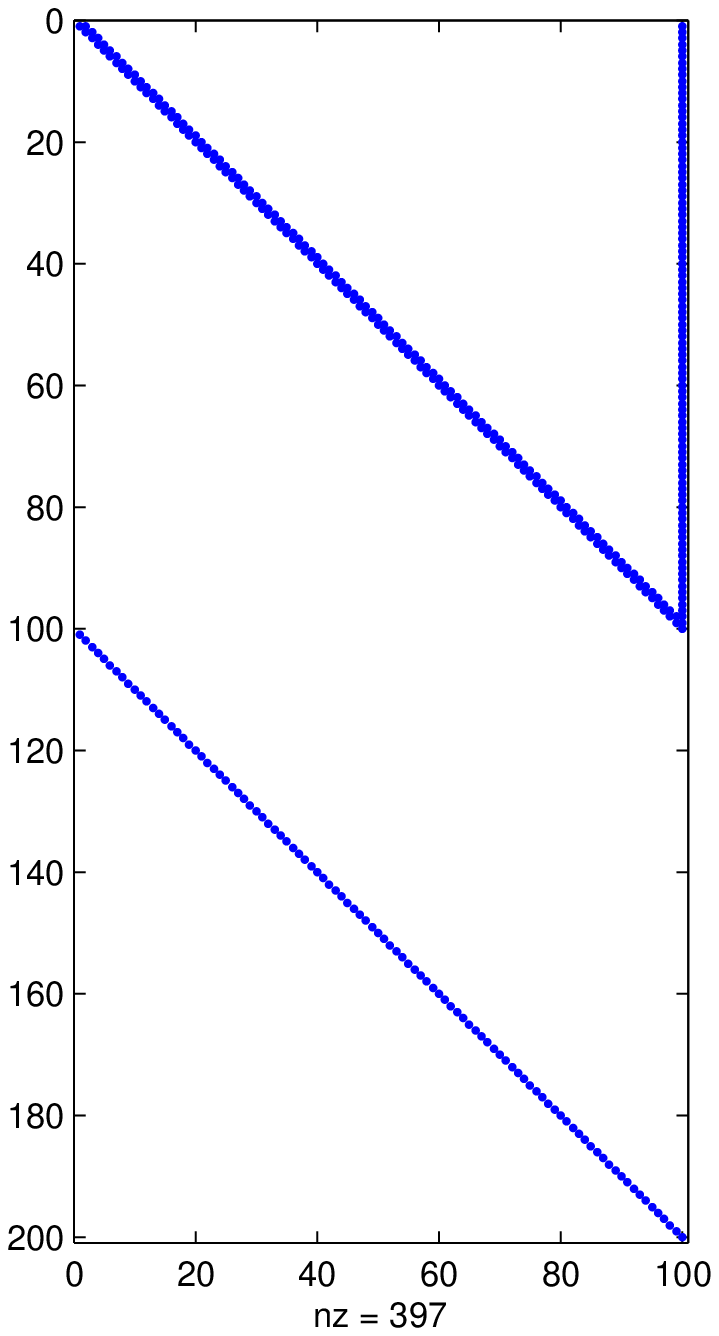} \label{fig: cholesky_Toeplitz_A} }
    \subfigure[]{\includegraphics[width=0.2\textwidth,height=0.4\textwidth]{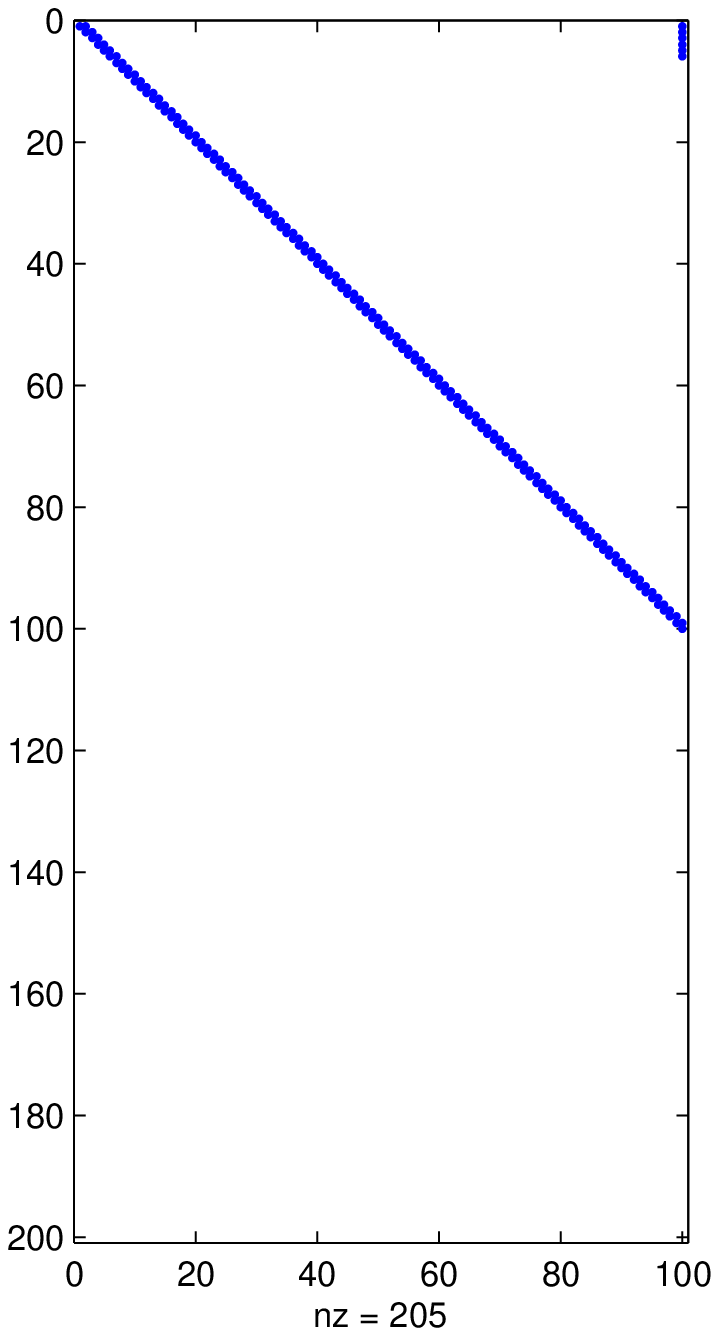} \label{fig: cholesky_Toeplitz_R} }
    \caption{ Sparsity patterns for $\boldsymbol{L}_1$ (a), $\boldsymbol{L}_2$ (b), $\boldsymbol{L}$ (c), $\boldsymbol{A}$ (d) and $\boldsymbol{R}$ for Toeplitz matrix.}\label{fig: cholesky_Toeplitz}
    \end{figure}

    \begin{figure}[htb]
    \centering
    \subfigure[]{\includegraphics[width=0.4\textwidth,height=0.4\textwidth]{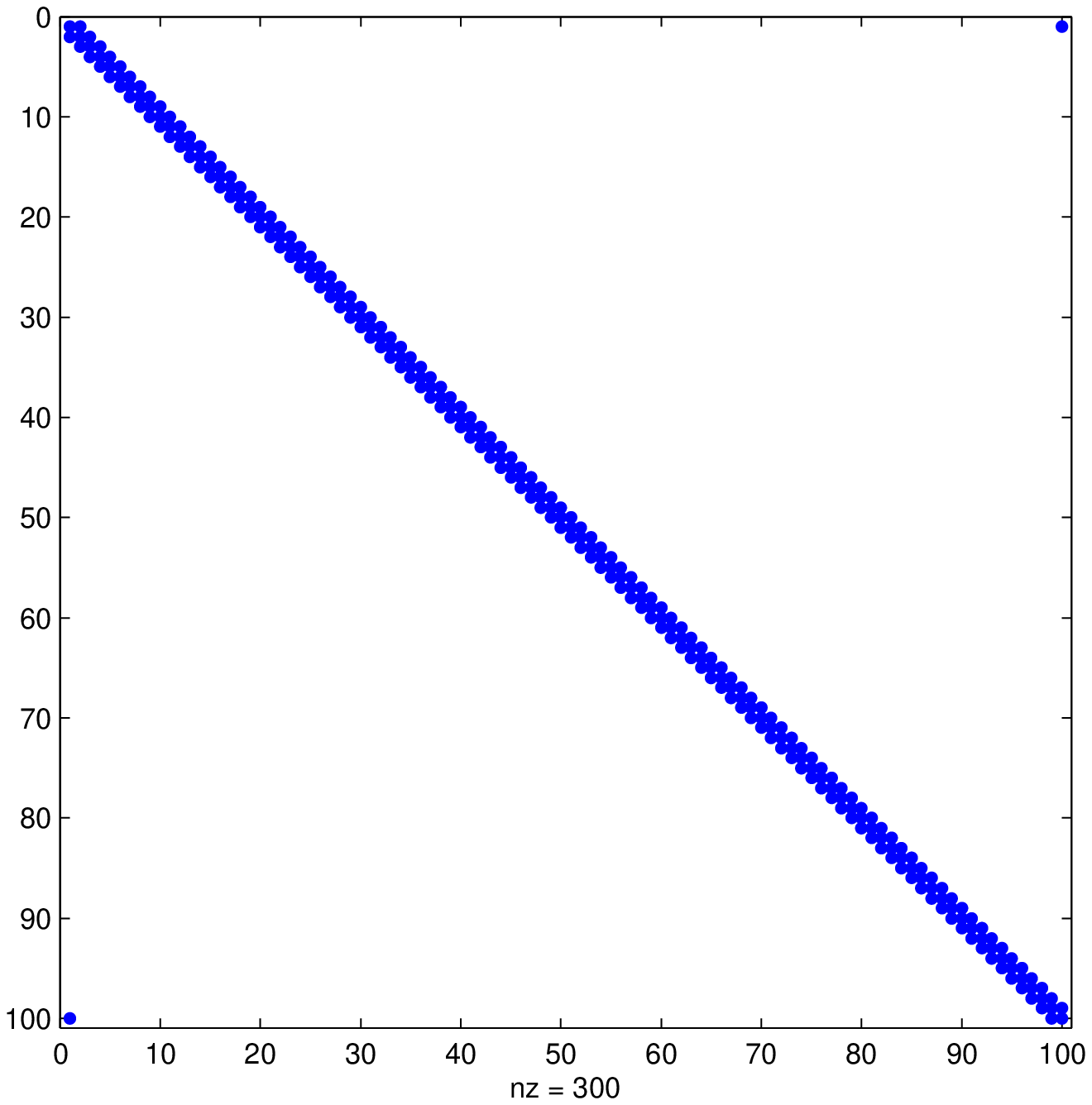} \label{fig: cholesky_Toeplitz_Q} }
    \subfigure[]{\includegraphics[width=0.4\textwidth,height=0.4\textwidth]{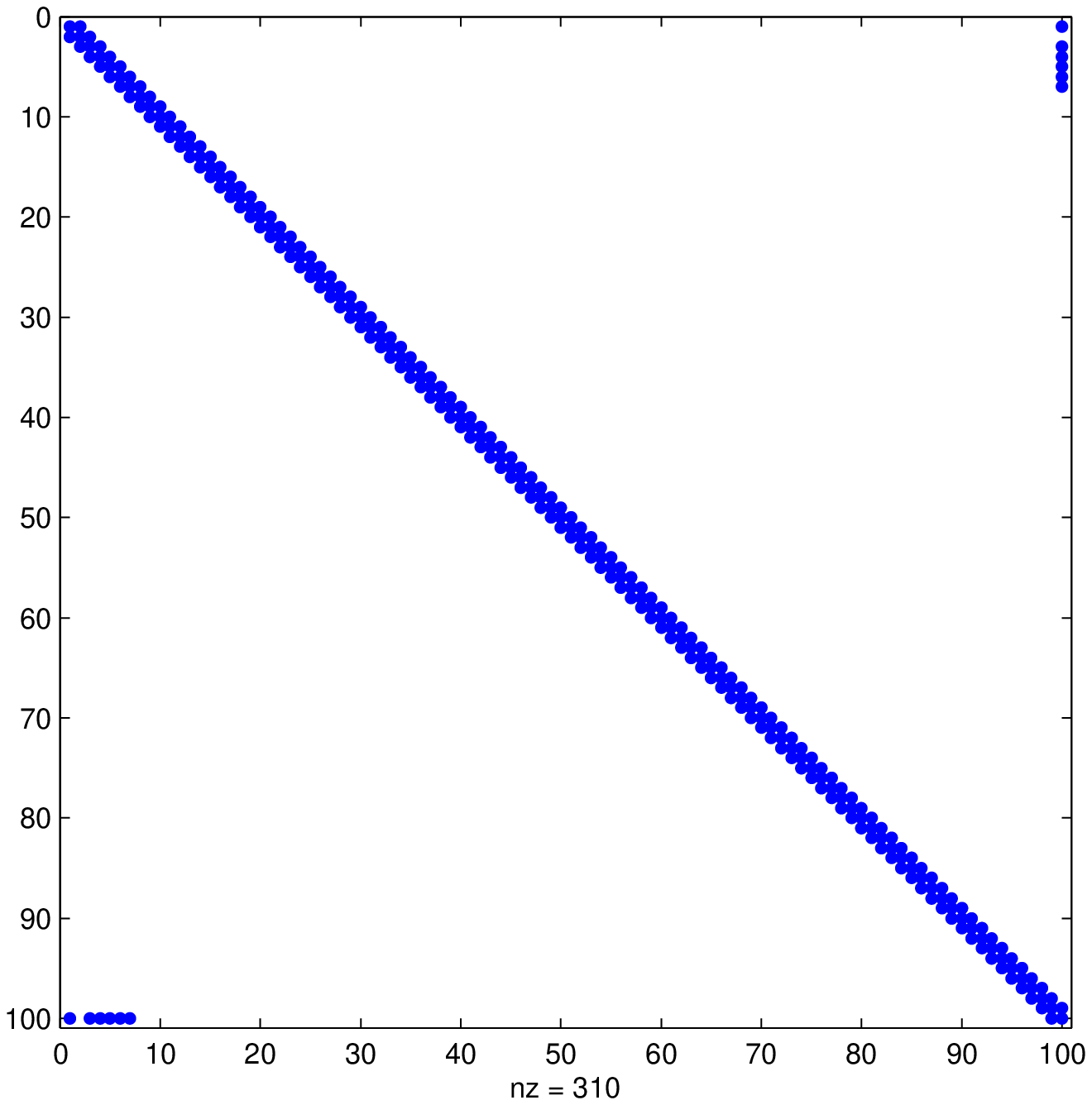} \label{fig: cholesky_Toeplitz_Qq} }
    \caption{Sparsity patterns of $\boldsymbol{Q}$ (a) and $\boldsymbol{Qq}$ (b) for Toeplitz matrix.} \label{fig: cholesky_Toeplitz_QandQq}
    \end{figure}

    \begin{figure}[htb]
    \centering
    \subfigure[]{\includegraphics[width=0.3\textwidth,height=0.3\textwidth]{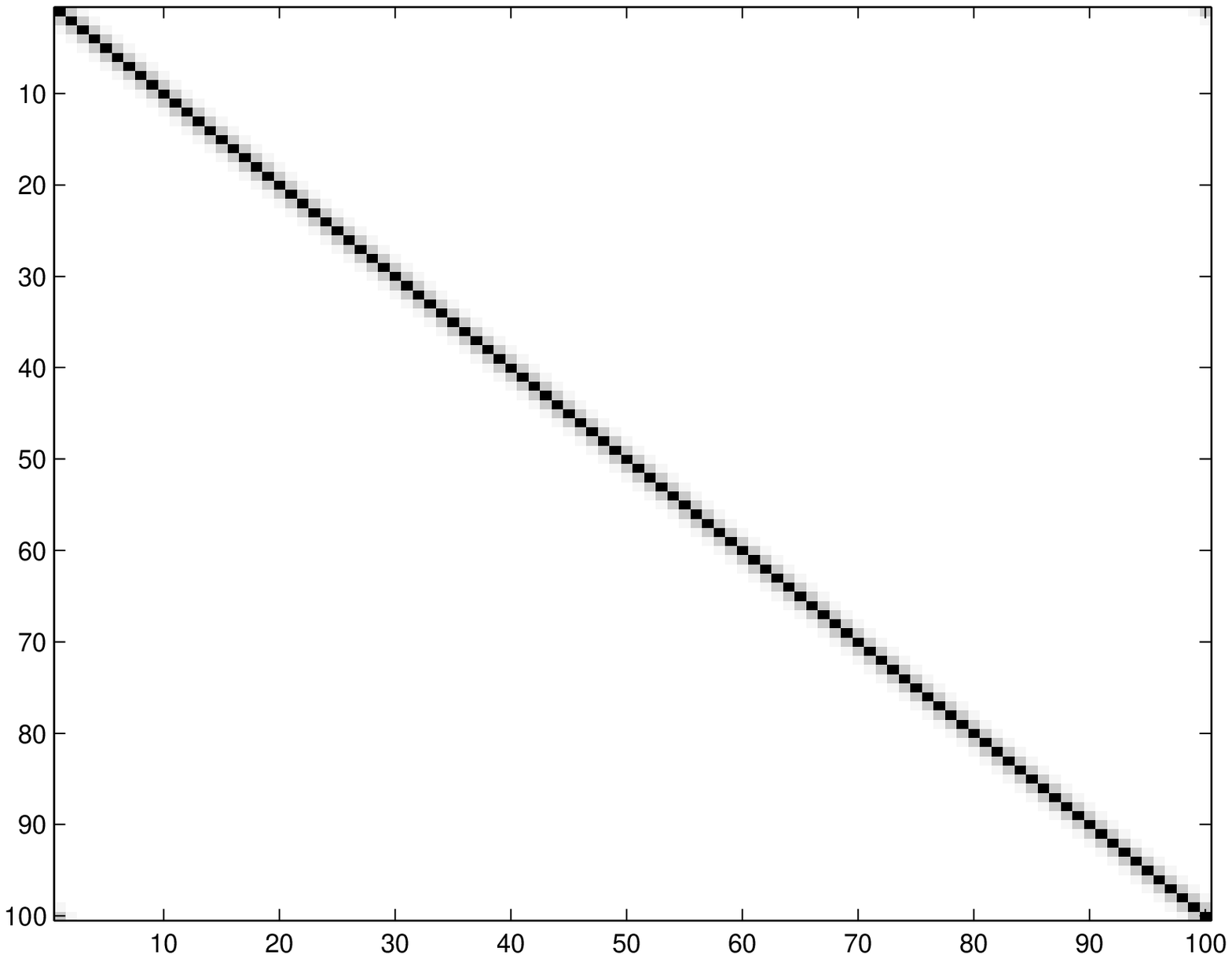} \label{fig: cholesky_Toeplitz_InverseQ} }
    \subfigure[]{\includegraphics[width=0.3\textwidth,height=0.3\textwidth]{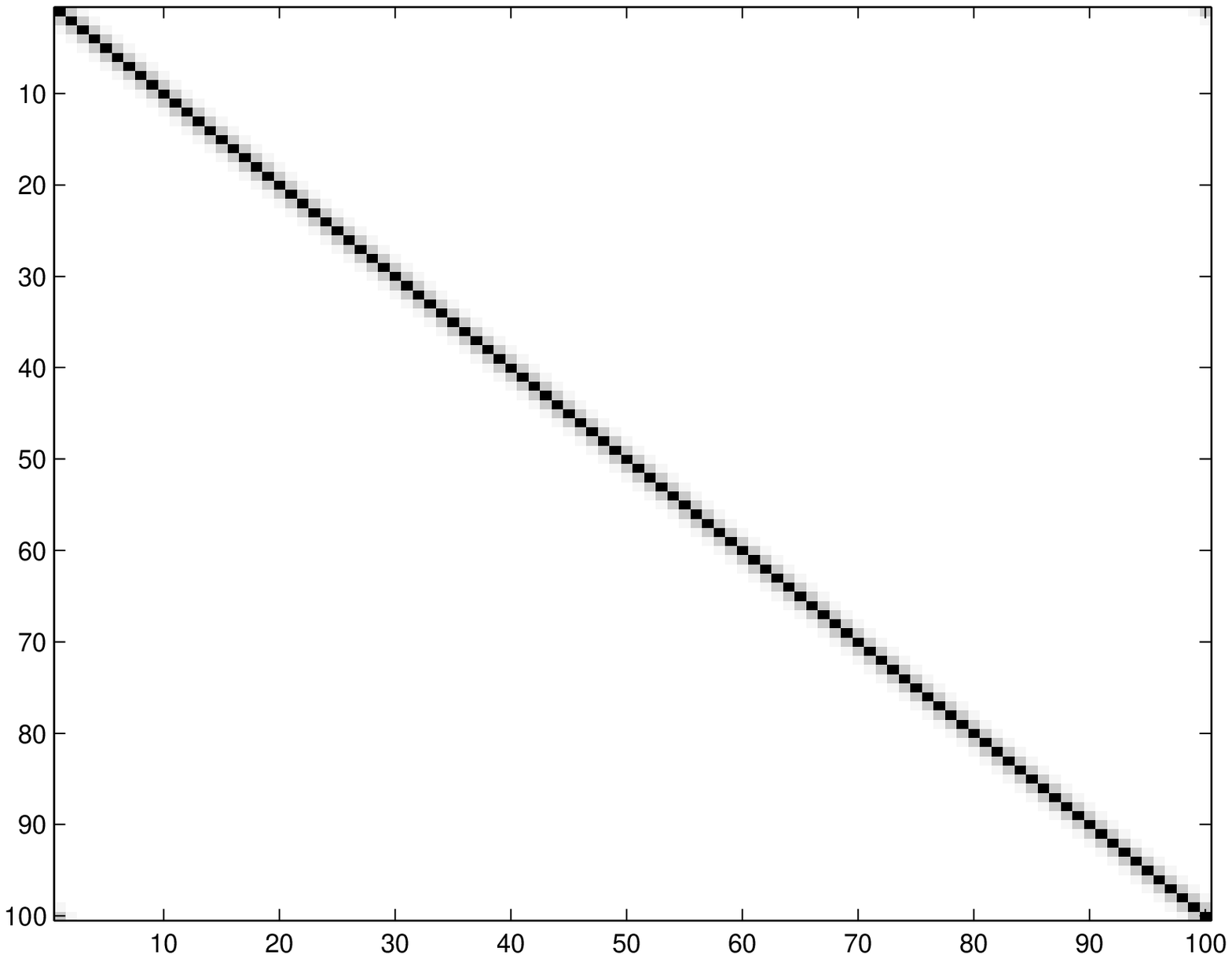} \label{fig: cholesky_Toeplitz_InverseQq} }
    \subfigure[]{\includegraphics[width=0.3\textwidth,height=0.315\textwidth]{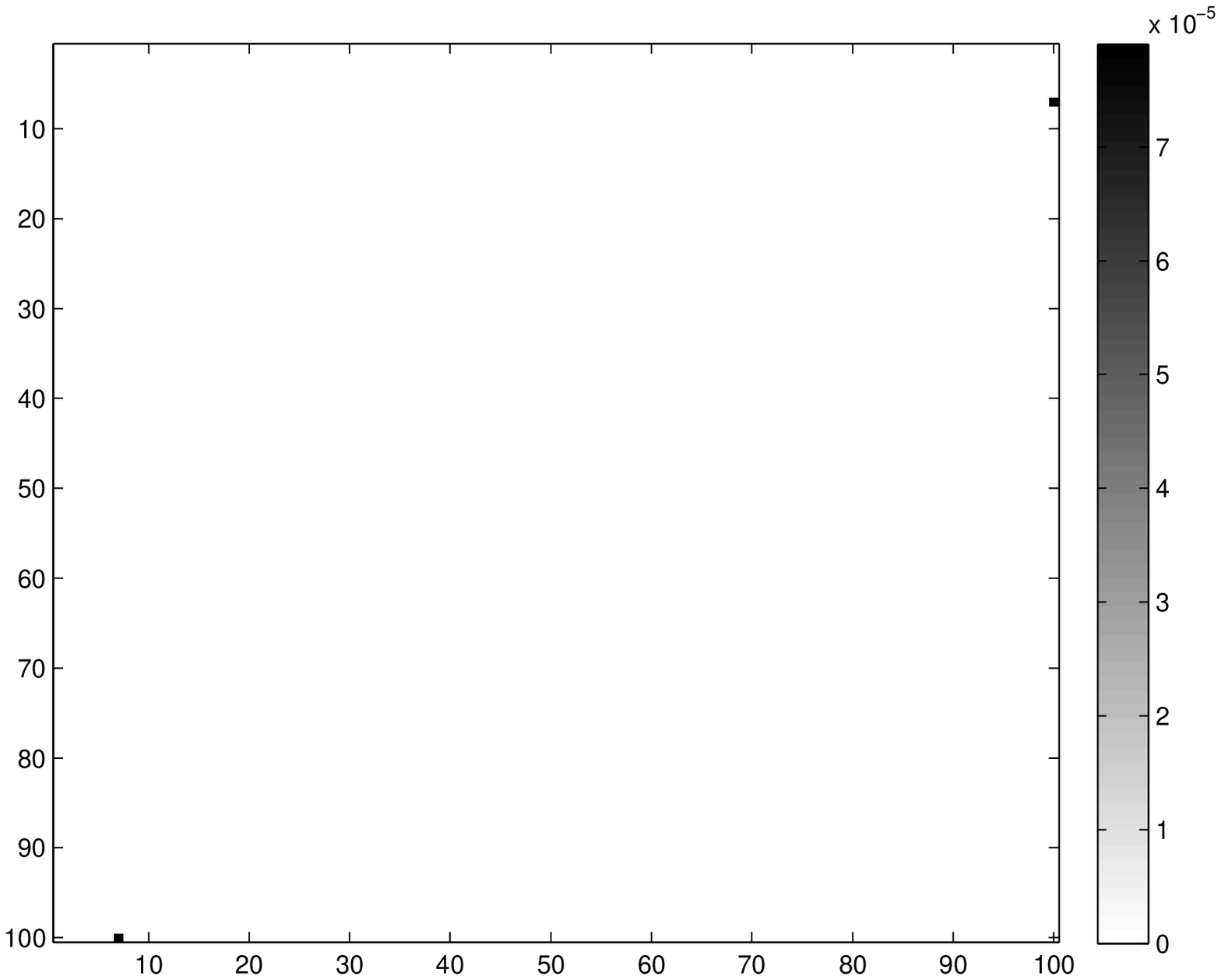} \label{fig: cholesky_Toeplitz_Err} }
    \caption{Images of the true covariance matrix $\boldsymbol{\Sigma}$ (a), the approximated covariance matrix $\boldsymbol{\widetilde{\Sigma}}$ (b) and the error matrix $\boldsymbol{\widetilde{E}}$ (c) 
             for Toeplitz matrix.} \label{fig: cholesky_Toeplitz_ImQ12QqR}
    \end{figure}

\subsection{Simulation results for particular precision matrices} \label{sec: cholesky_result_spde}
   In this section we emphasize on some particular precision matrices, namely the precision matrices from the stochastic partial differential equations (SPDEs) approach discussed by 
   \citet{lindgren2011explicit} and \citet{fuglstad2011spatial}. As pointed out by \citet{lindgren2011explicit} there is an explicit link between GRFs and GMRFs through SPDEs.
   The important relationship which was initially used by \citet{lindgren2011explicit} is that the solution $\boldsymbol{x}(\boldsymbol{u})$  to the following SPDE  
   is a GRF with Mat\'ern covariance function,
   \begin{equation} \label{eq: cholesky_spde1}
    (\kappa^2 - \Delta)^{\alpha/2} x(\boldsymbol{s}) = \mathcal{W}(\boldsymbol{s}), \hspace{2mm} \boldsymbol{s} \in \mathbb{R}^{d}, \hspace{2mm} \alpha = \nu +d/2, \hspace{2mm} \kappa > 0, \hspace{2mm} \nu > 0,
   \end{equation}
    where $\Delta = \sum_{i=1}^d \frac{\partial}{\partial x_i^2}$ is the Laplacian, $(\kappa^2 - \Delta)^{\alpha/2}$ is a differential operator and $d$ is the dimension of the field $x(\boldsymbol{s}) $. 
    \citet{fuglstad2011spatial} extended this approach to construct anisotropic and inhomogeneous fields with the SPDE 
    \begin{equation} \label{eq: cholesky_spde2}
      \kappa^2(\boldsymbol{u}) x(\boldsymbol{u}) - \nabla \cdot \boldsymbol{H}(\boldsymbol{u}) \nabla x(\boldsymbol{u}) = \mathcal{W} (\boldsymbol{u}), \
    \end{equation}
    where $\kappa$ and $\boldsymbol{H}$ control the local range and anisotropy, and $\nabla =\left( \frac{\partial}{\partial{x}}, \frac{\partial}{\partial{y}} \right)$.
   One important difference between \citet{lindgren2011explicit} and \citet{fuglstad2011spatial} is that \citet{lindgren2011explicit} have chosen the Neumann boundary condition but \citet{fuglstad2011spatial} has chosen the 
    periodic boundary condition. With Neumann boundary condition the precision matrix $\boldsymbol{Q}_1$ is a band matrix. However, the periodic boundary condition gives elements `` in the corners''  of the precision matrix.
   \citet{hu2012multivariate} extended the approach to multivariate settings by using systems of SPDEs.
    For more information about the SPDE approach, We refer to \citet{lindgren2011explicit}, \citet{fuglstad2011spatial} and \citet{hu2012multivariate}. 

    First,  choose the precision matrix for $\boldsymbol{Q}_1$ that results from the discretization of the SPDE \eqref{eq: cholesky_spde1} with $\alpha = 2, \, d = 2$ and $\kappa = 0.3$. 
    The sparsity pattern of $\boldsymbol{Q}_1$ is given in Figure \ref{fig: cholesky_SPDE1_Q}. We still assume $\boldsymbol{Q}_2 = \boldsymbol{I}$. 
    The sparsity patterns of $\boldsymbol{L}_1$, $\boldsymbol{L}_2$, $\boldsymbol{L}$, $\boldsymbol{A}$ and $\boldsymbol{R}$ are given in Figure \ref{fig: cholesky_SPDE1_L1} -\ref{fig: cholesky_SPDE1_R}, respectively. 
    We notice that the upper triangular matrix $\boldsymbol{R}$ is sparser than the matrix $\boldsymbol{L}$.
    The sparsity pattern of the approximated precision matrix $\boldsymbol{\widetilde{Q}}$ is given in Figure \ref{fig: cholesky_SPDE1_Qq}.
    The images of the true covariance matrix $\boldsymbol{\Sigma}$, the approximated covariance matrix $\widetilde{\boldsymbol{\Sigma}}$
    and the error matrix $\widetilde{\boldsymbol{E}}$ are shown in Figure \ref{fig: cholesky_result_spde1_comparation}. We can notice that the elements of in the error matrix $\widetilde{\boldsymbol{E}}$ 
    are reasonably small.

    The second precision matrix for $\boldsymbol{Q}_1$ is generated from the SPDE \eqref{eq: cholesky_spde2} with $\kappa = 0.1$ and 
    \begin{displaymath}
     \boldsymbol{H} = 0.1 \times
     \begin{pmatrix}
       1   &     0.5   \\
       0.5 &      1
     \end{pmatrix}.
    \end{displaymath}
    The sparsity pattern of the precision matrix $\boldsymbol{Q}_1$ is given in 
    Figure \ref{fig: cholesky_SPDE2_Q}. We use the same $\boldsymbol{Q}_2$ as previous examples.  The sparsity patterns of $\boldsymbol{L}_1$, $\boldsymbol{L}_2$, $\boldsymbol{L}$, $\boldsymbol{A}$ and $\boldsymbol{R}$ 
    are given in Figure \ref{fig: cholesky_SPDE2_L1} - Figure \ref{fig: cholesky_SPDE2_R}, respectively. We can notice that the upper triangular matrix $\boldsymbol{R}$ is sparser than the matrix $\boldsymbol{L}$. 
    The sparsity pattern of the approximated precision matrix $\boldsymbol{\widetilde{Q}}$ is given in Figure \ref{fig: cholesky_SPDE2_Qq}.
    The images of the true covariance matrix $\boldsymbol{\Sigma}$, the approximated covariance matrix $\widetilde{\boldsymbol{\Sigma}}$
    and the error matrix $\widetilde{\boldsymbol{E}}$ are illustrated in Figure \ref{fig: cholesky_result_spde2_comparation}. We could notice that the order of the numerical values in the error matrix
    $\widetilde{\boldsymbol{E}}$ are also reasonably small in this case.
    
    \begin{figure}[htbp]
    \centering
    \subfigure[]{\includegraphics[width=0.4\textwidth,height=0.4\textwidth]{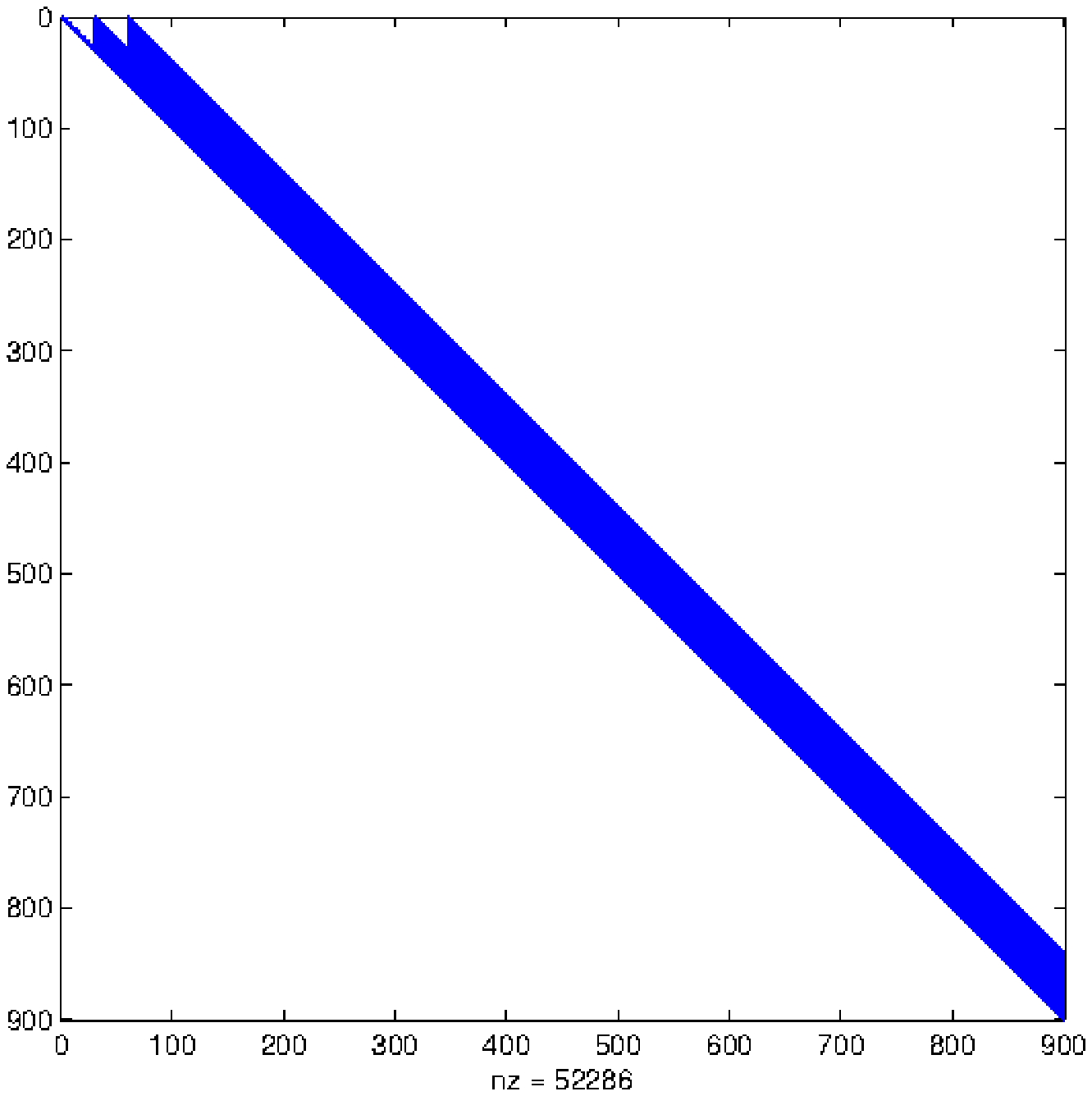} \label{fig: cholesky_SPDE1_L1}}
    \subfigure[]{\includegraphics[width=0.4\textwidth,height=0.4\textwidth]{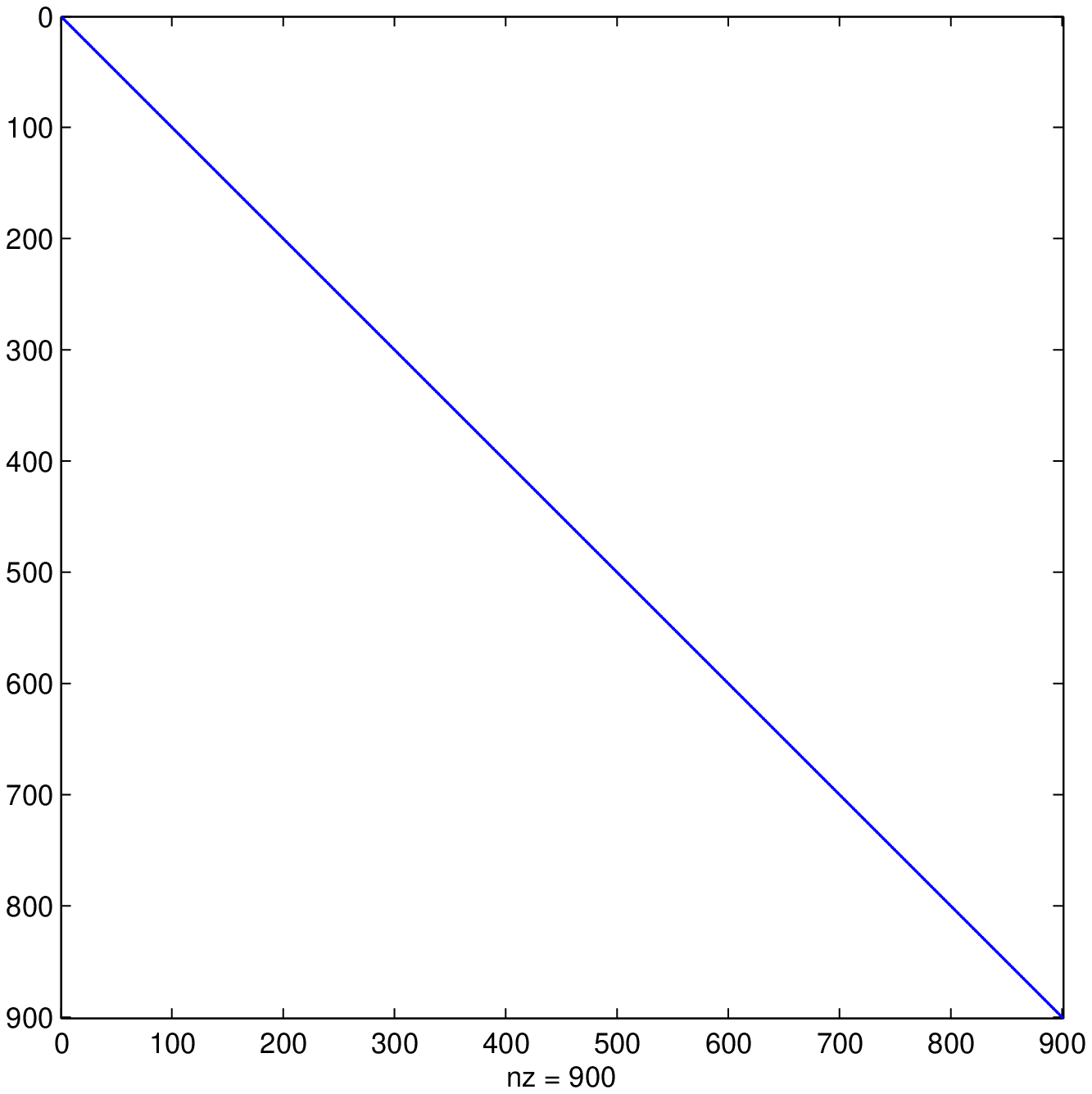} \label{fig: cholesky_SPDE1_L2}} 
    \subfigure[]{\includegraphics[width=0.4\textwidth,height=0.4\textwidth]{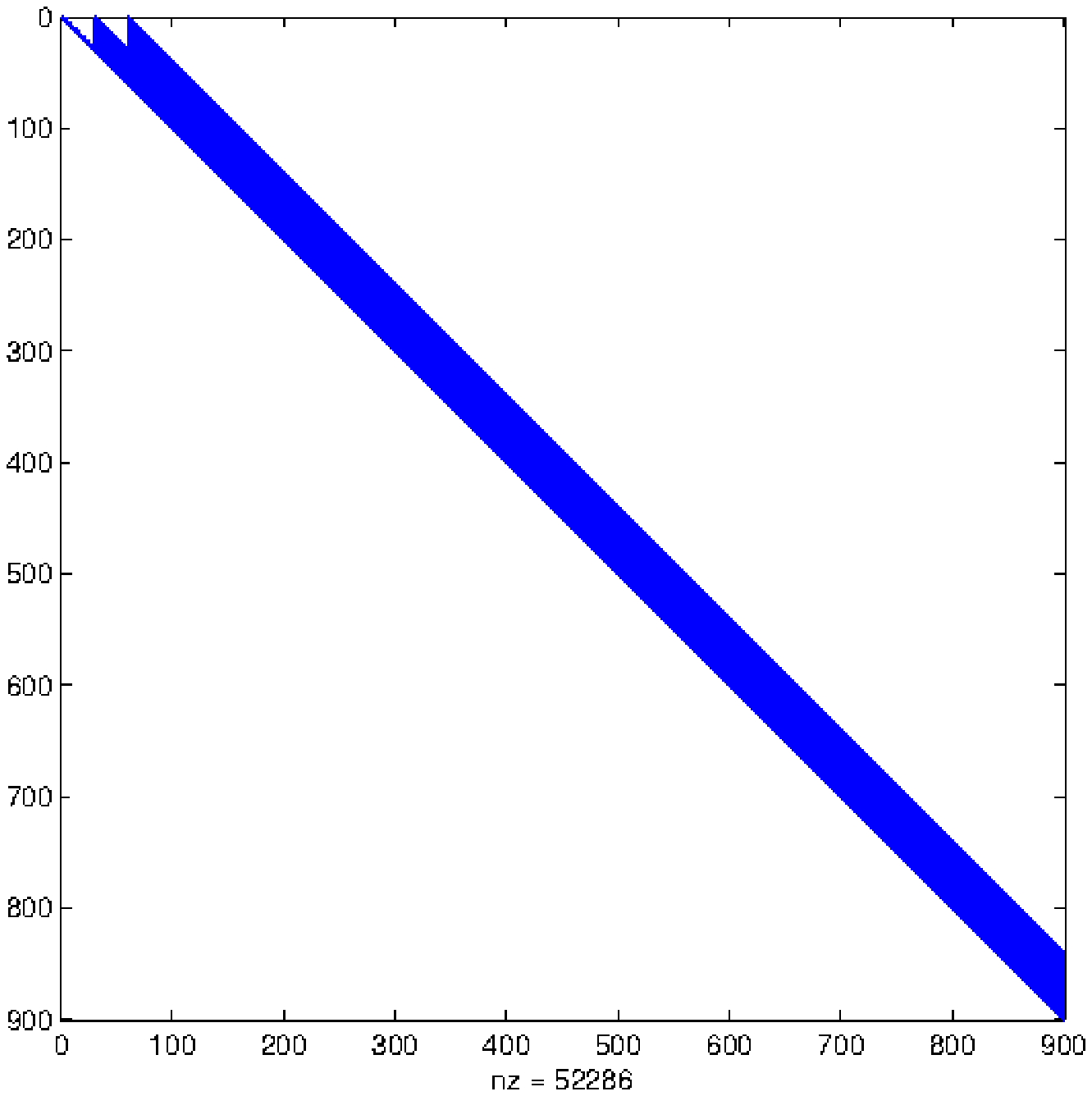} \label{fig: cholesky_SPDE1_L}}
    \subfigure[]{\includegraphics[width=0.2\textwidth,height=0.4\textwidth]{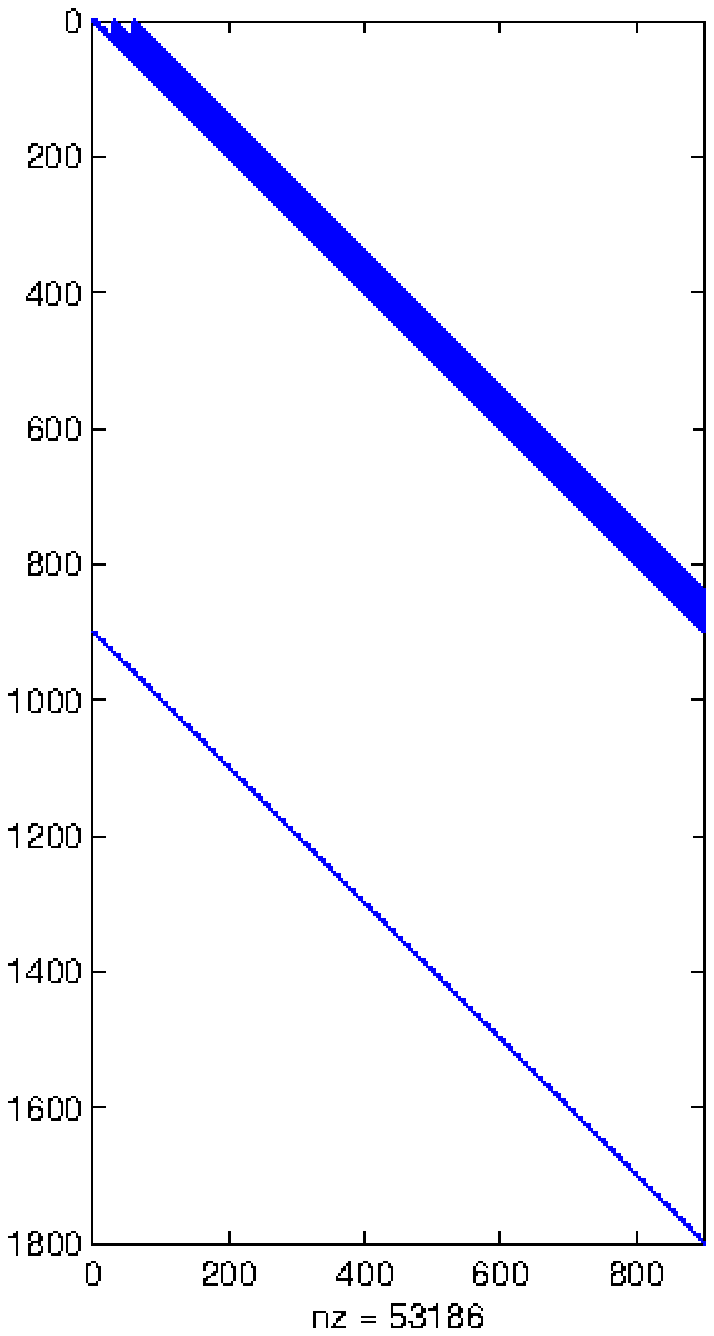} \label{fig: cholesky_SPDE1_A}}
    \subfigure[]{\includegraphics[width=0.2\textwidth,height=0.4\textwidth]{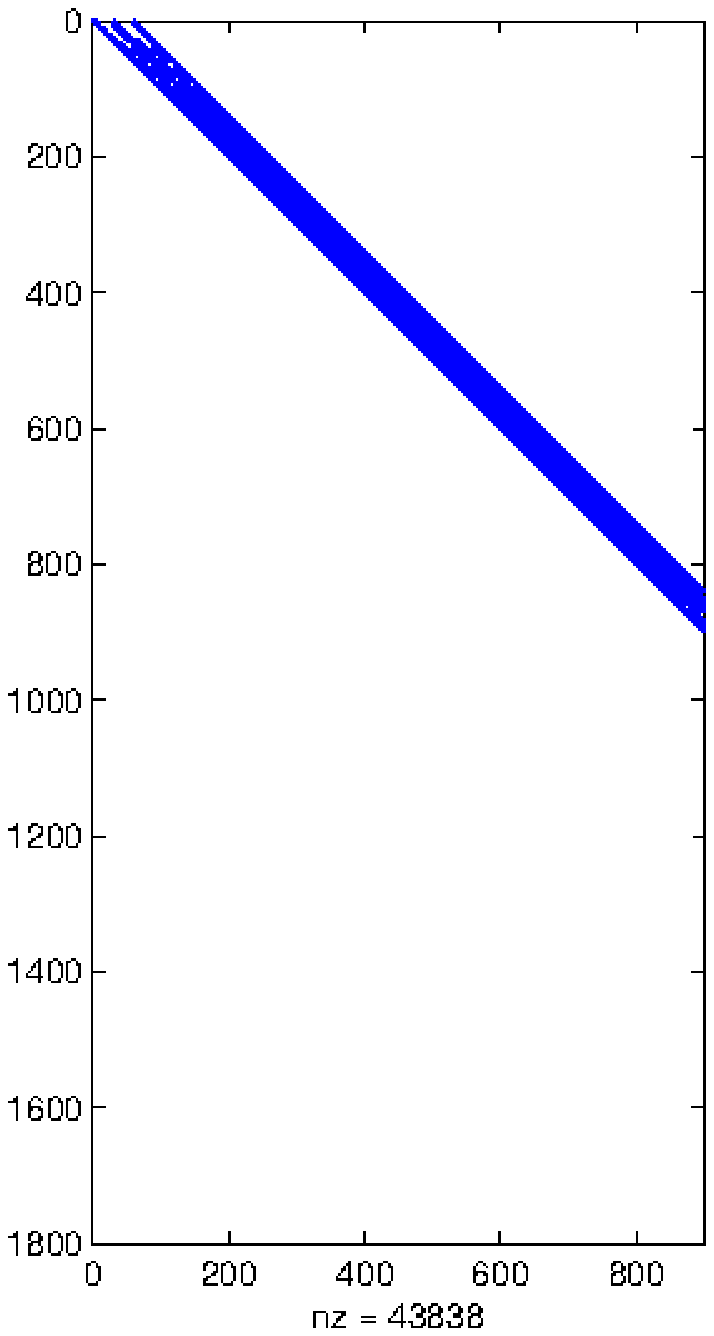} \label{fig: cholesky_SPDE1_R}}
    \caption{ Sparsity pattern for $\boldsymbol{L}_1$(a), $\boldsymbol{L}_1$(b), $\boldsymbol{L}_1$(c), $\boldsymbol{A}$ (d) and $\boldsymbol{R}$(e) for the random field from the SPDE \eqref{eq: cholesky_spde1}.}\label{fig: cholesly_SPDE1}
    \end{figure}

    \begin{figure}[htbp]
    \centering
    \subfigure[]{\includegraphics[width=0.4\textwidth,height=0.4\textwidth]{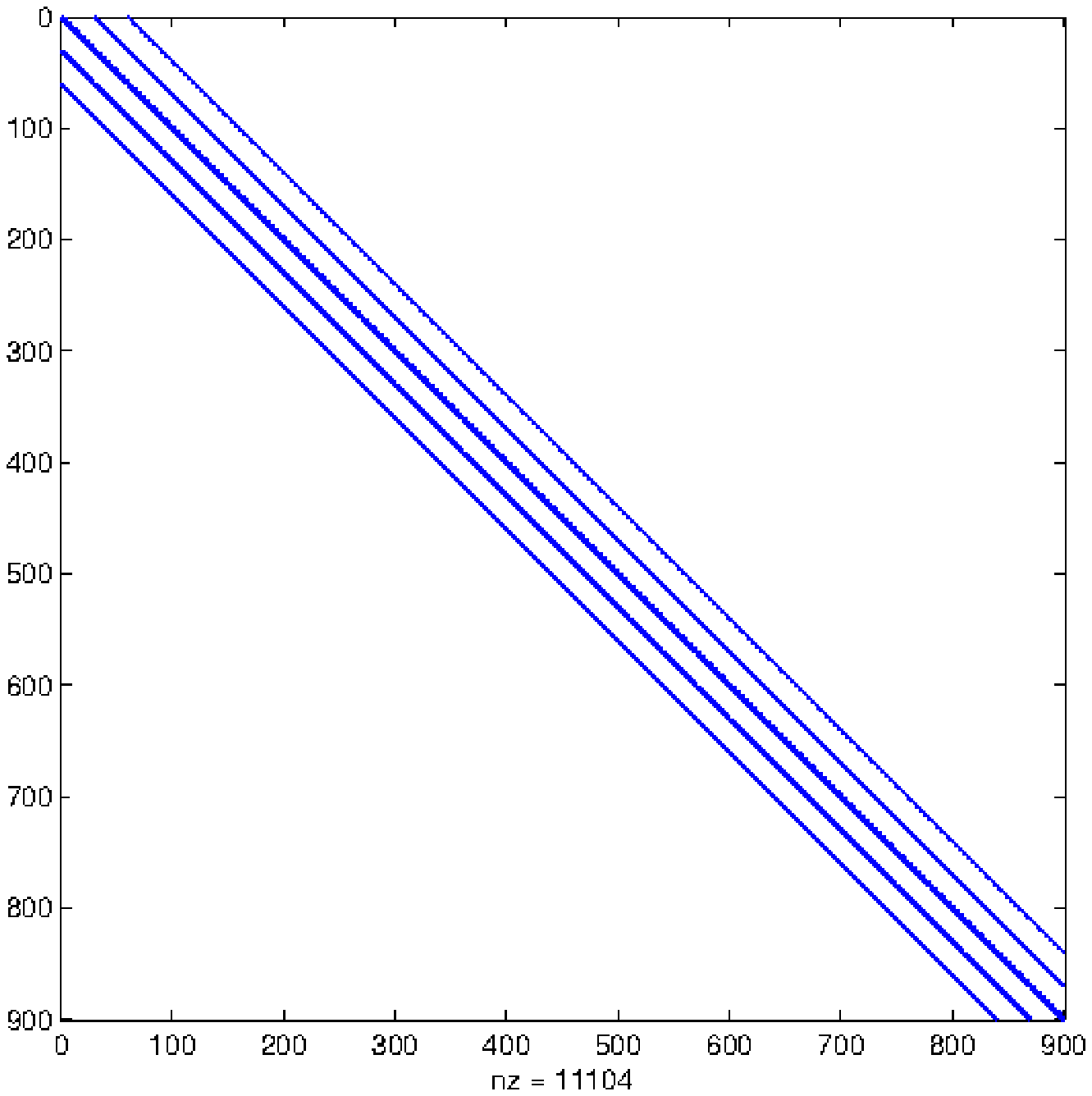} \label{fig: cholesky_SPDE1_Q} }
    \subfigure[]{\includegraphics[width=0.4\textwidth,height=0.4\textwidth]{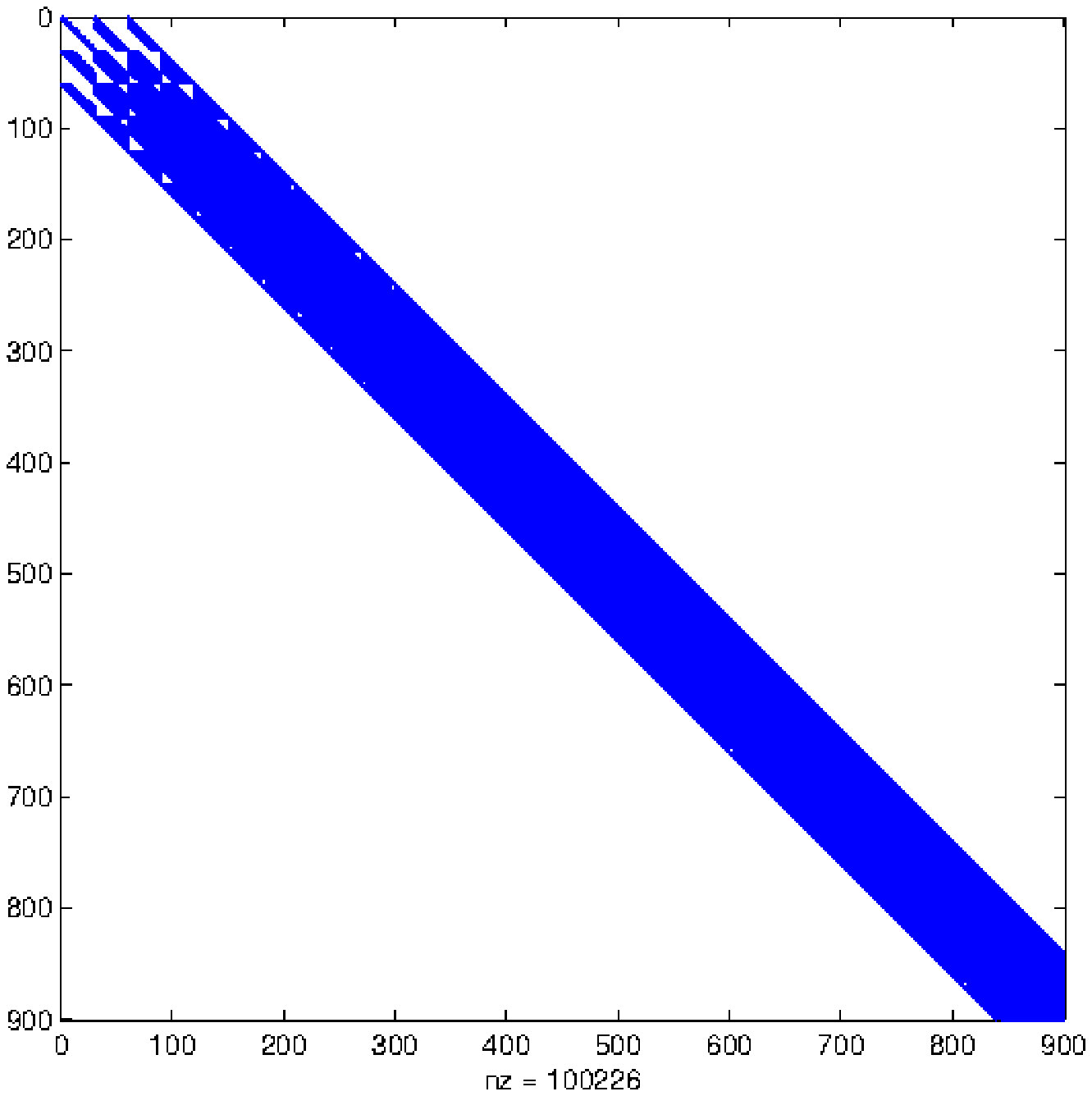} \label{fig: cholesky_SPDE1_Qq} }
    \caption{Sparsity patterns of $\boldsymbol{Q}$ (a) and $\boldsymbol{Qq}$ (b) for the random field generated from the SPDE \eqref{eq: cholesky_spde1}.} \label{fig: cholesky_spde1_QandQq}
    \end{figure}

    \begin{figure}[htbp]
    \centering
    \subfigure[]{\includegraphics[width=0.3\textwidth,height=0.3\textwidth]{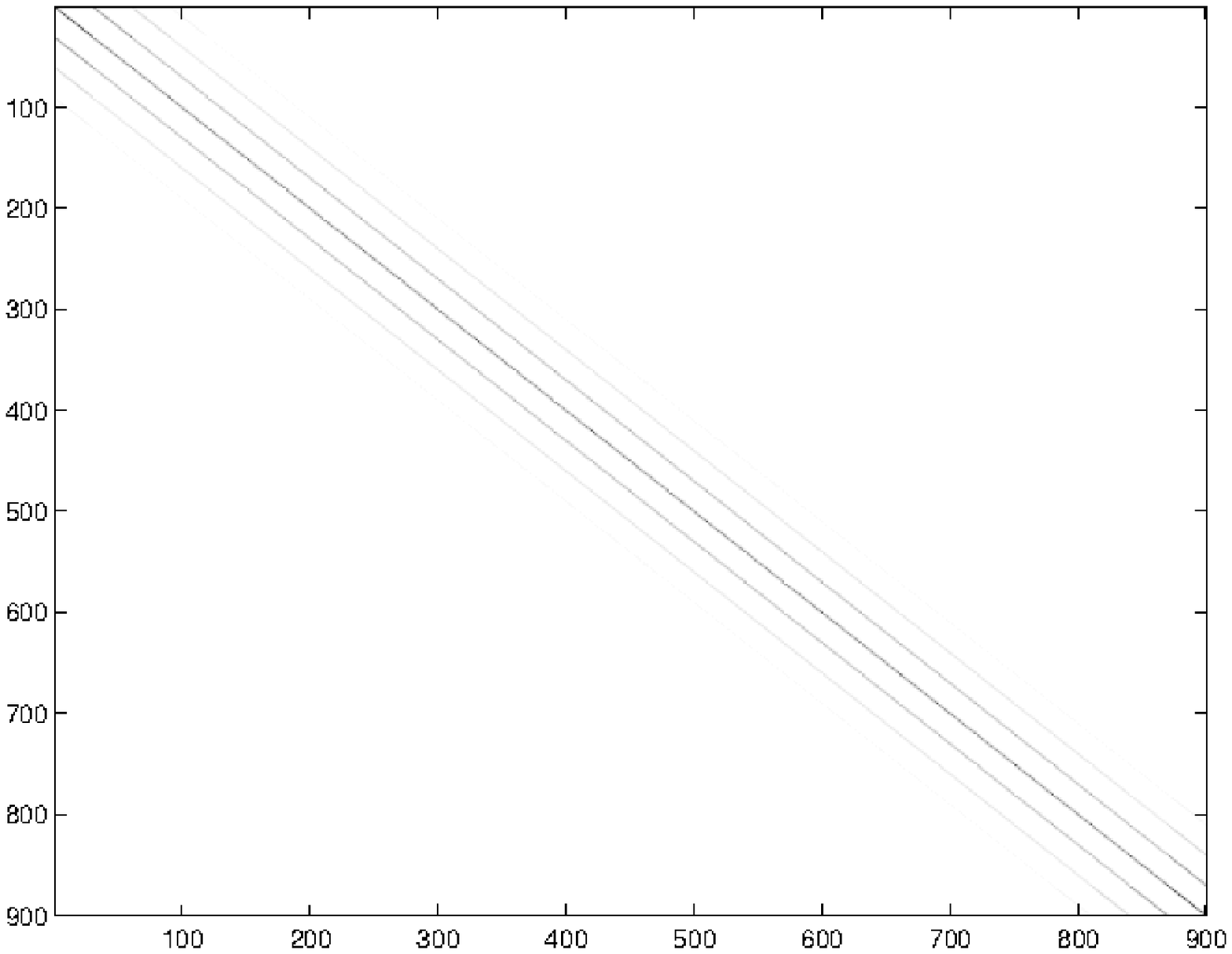} \label{fig: cholesky_SPDE1_InverseQ}}
    \subfigure[]{\includegraphics[width=0.3\textwidth,height=0.3\textwidth]{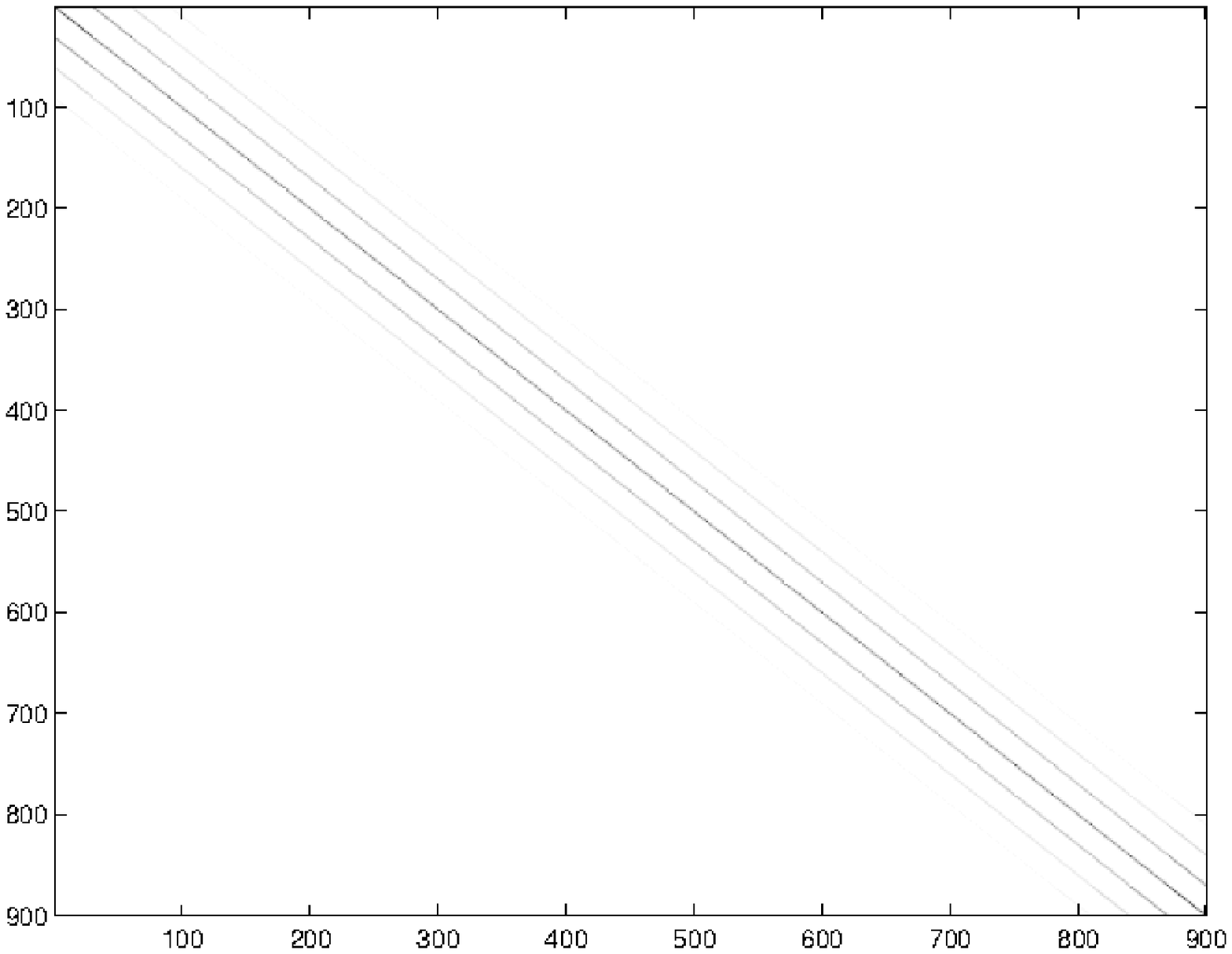} \label{fig: cholesky_SPDE1_InverseQq}}
    \subfigure[]{\includegraphics[width=0.3\textwidth,height=0.315\textwidth]{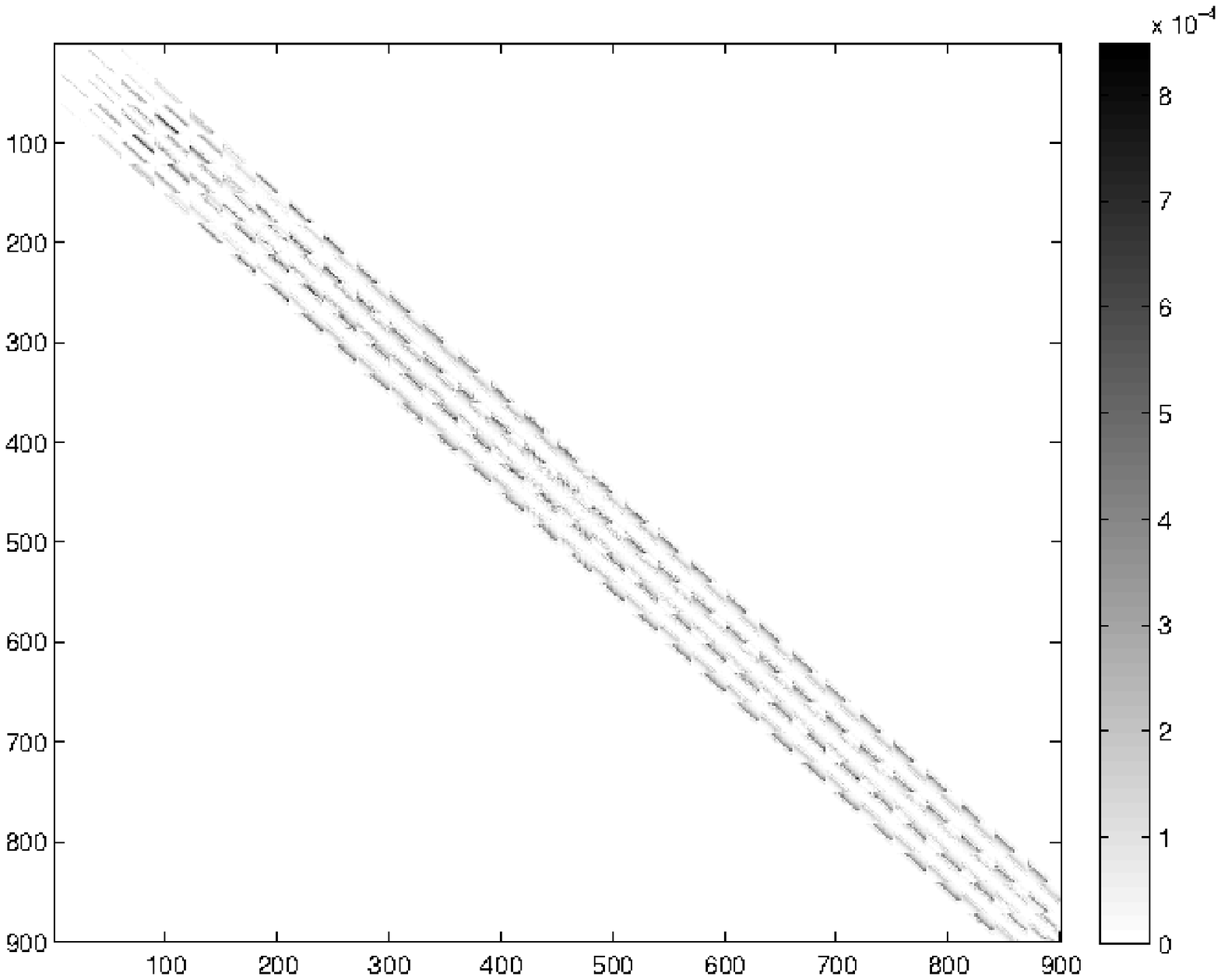} \label{fig: cholesky_SPDE1_Err}}
    \caption{Images of the true covariance matrix $\boldsymbol{\Sigma}$ (a), the approxiamted covariance matrix$\widetilde{\boldsymbol{\Sigma}}$ (b) 
            and the error matrix $\widetilde{\boldsymbol{E}}$ (c) for the random field generated from the SPDE \eqref{eq: cholesky_spde1}.} \label{fig: cholesky_result_spde1_comparation}
    \end{figure}

   \begin{figure}[htbp]
    \centering
    \subfigure[]{\includegraphics[width=0.4\textwidth,height=0.4\textwidth]{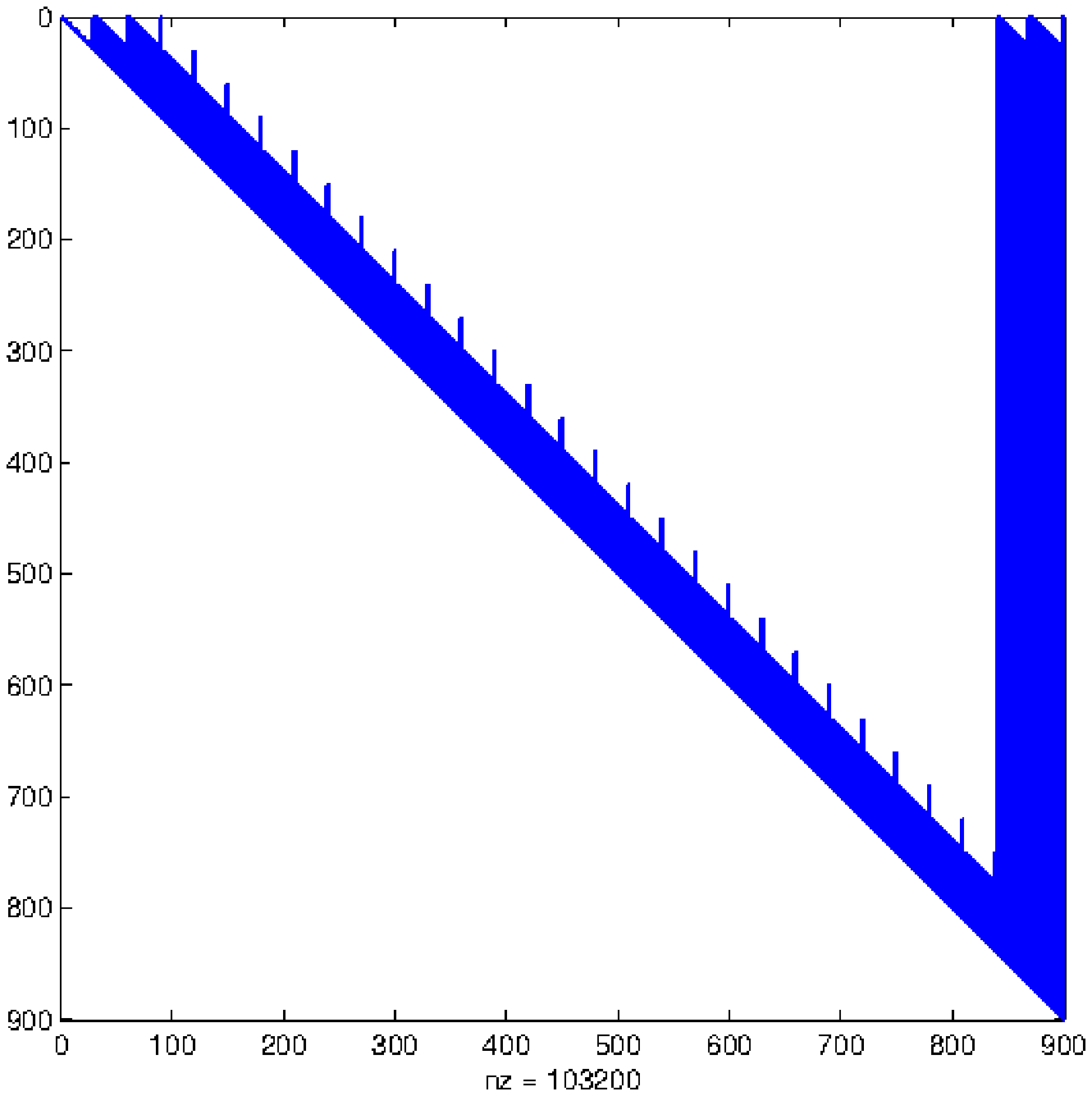} \label{fig: cholesky_SPDE2_L1}}
    \subfigure[]{\includegraphics[width=0.4\textwidth,height=0.4\textwidth]{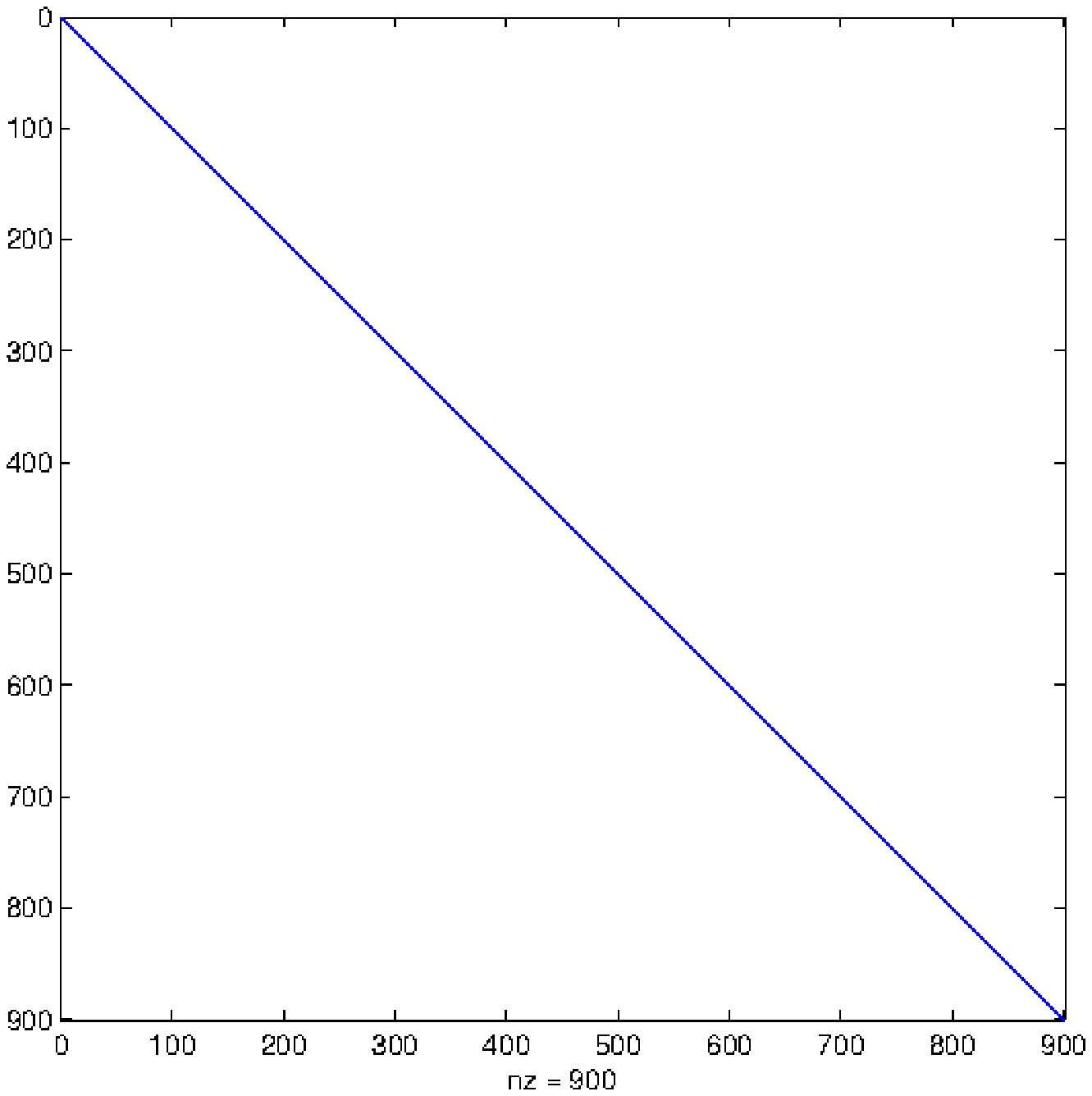} \label{fig: cholesky_SPDE2_L2}} 
    \subfigure[]{\includegraphics[width=0.4\textwidth,height=0.4\textwidth]{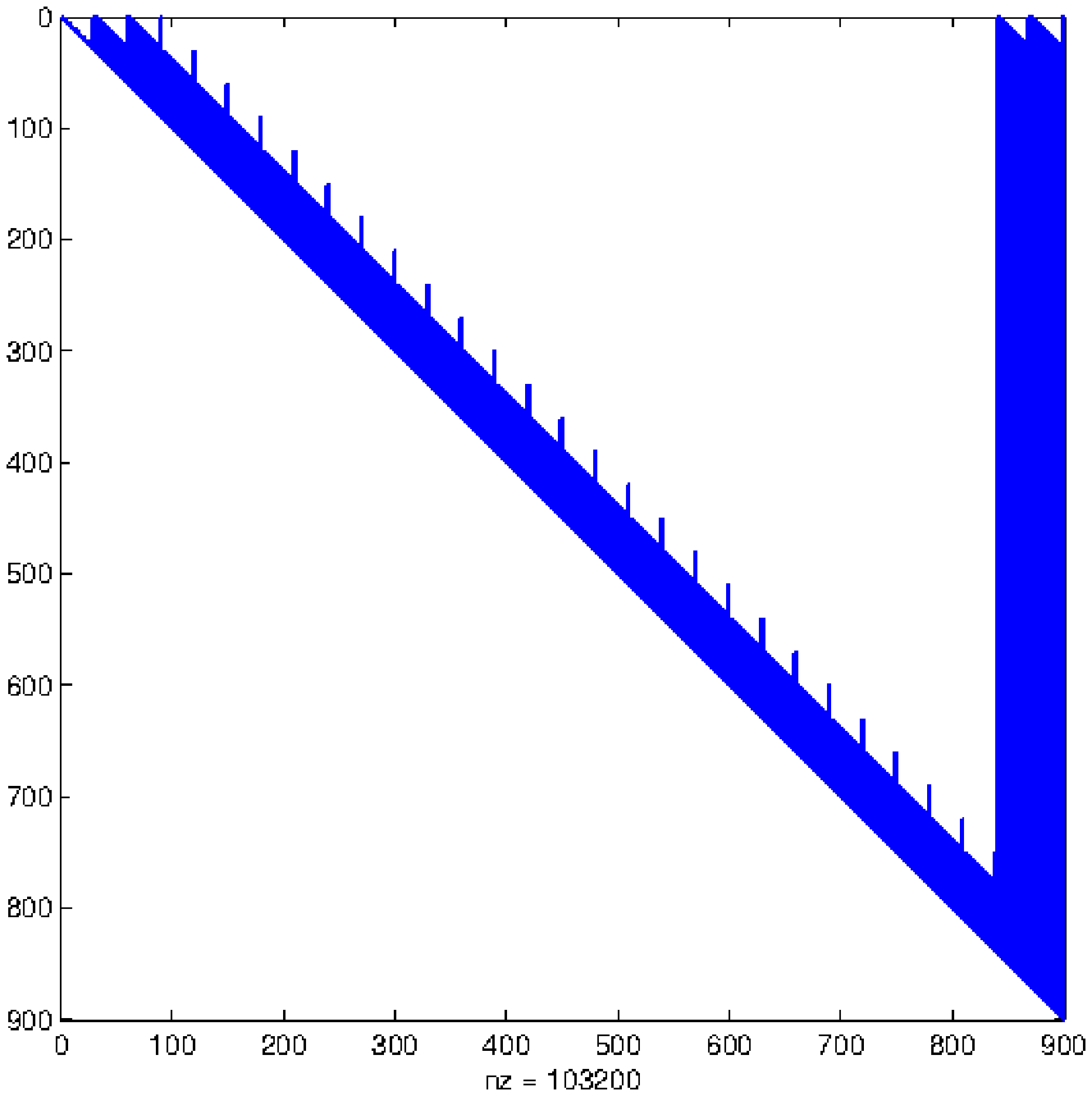} \label{fig: cholesky_SPDE2_L}}
    \subfigure[]{\includegraphics[width=0.2\textwidth,height=0.4\textwidth]{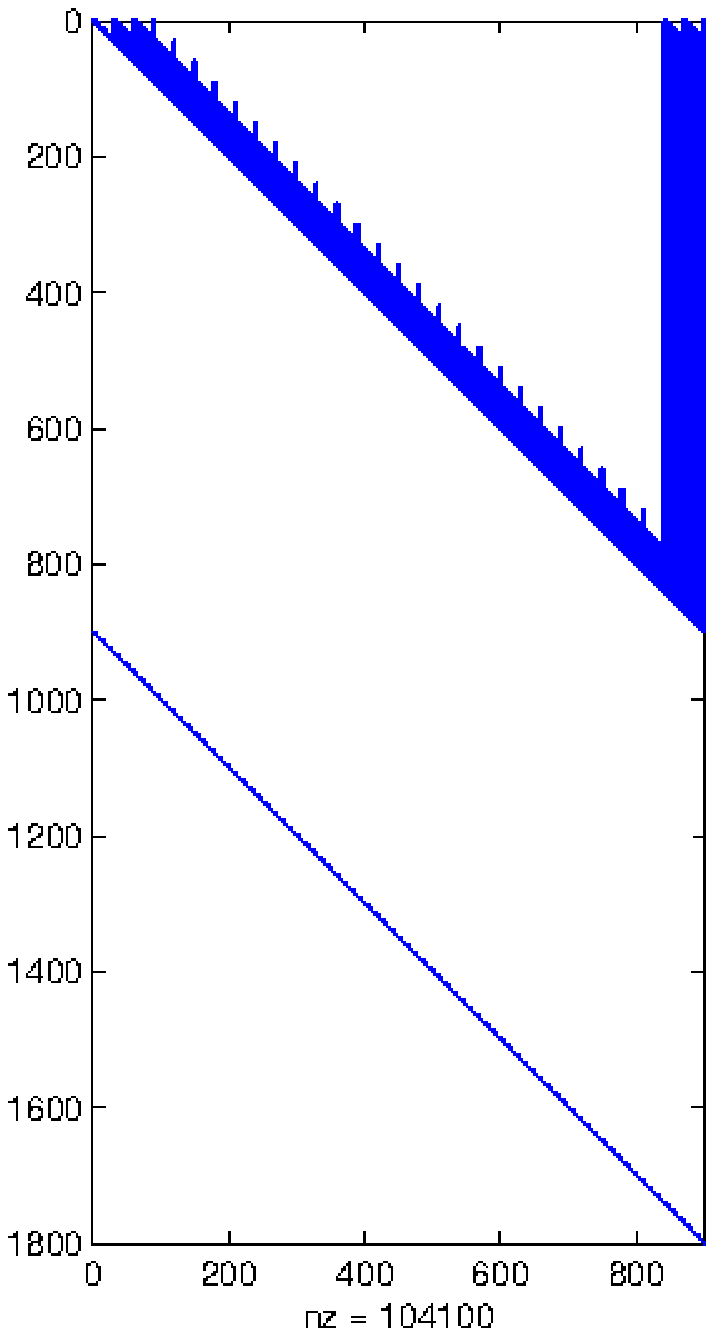} \label{fig: cholesky_SPDE2_A}}
    \subfigure[]{\includegraphics[width=0.2\textwidth,height=0.4\textwidth]{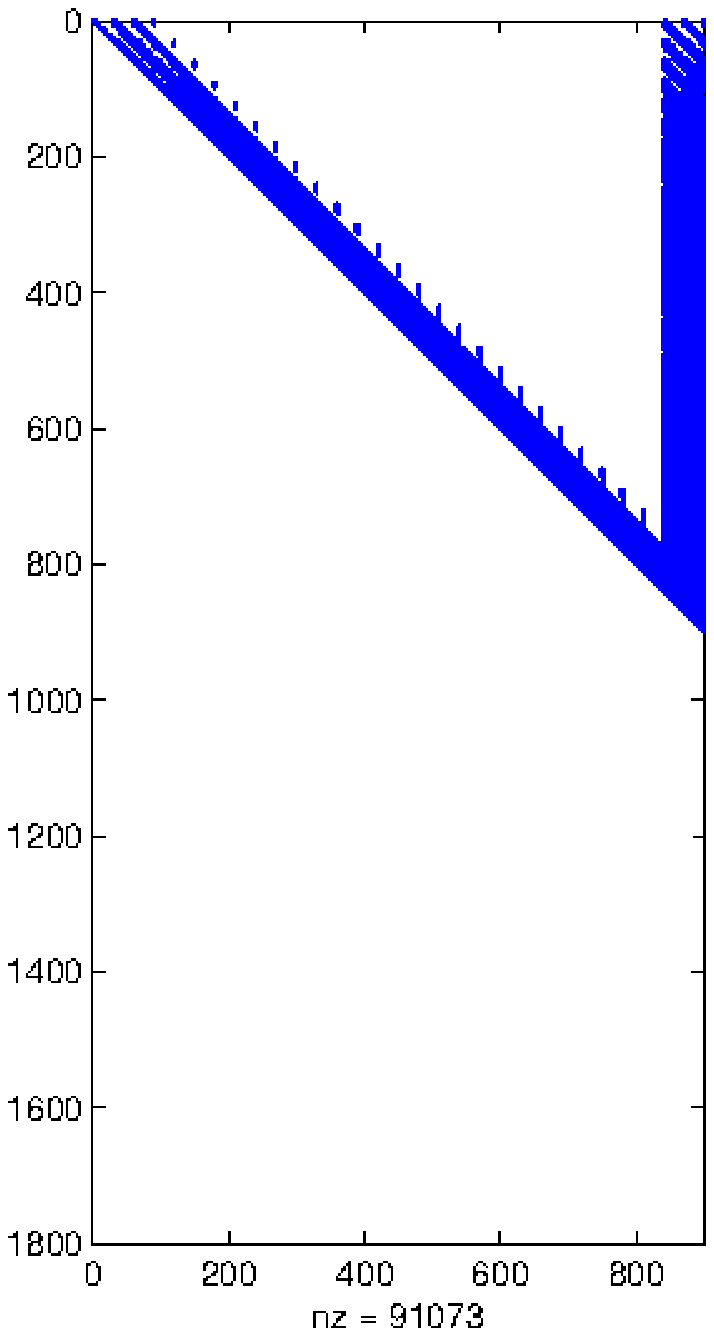} \label{fig: cholesky_SPDE2_R}}
    \caption{ Sparsity pattern for $\boldsymbol{L}_1$(a), $\boldsymbol{L}_1$(b), $\boldsymbol{L}_1$(c), $\boldsymbol{A}$ (d) and 
              $\boldsymbol{R}$(e) for the random field generated from SPDE given in \eqref{eq: cholesky_spde2}. }\label{fig: cholesly_SPDE2}
    \end{figure}

    \begin{figure}[htbp]
    \centering
    \subfigure[]{\includegraphics[width=0.4\textwidth,height=0.4\textwidth]{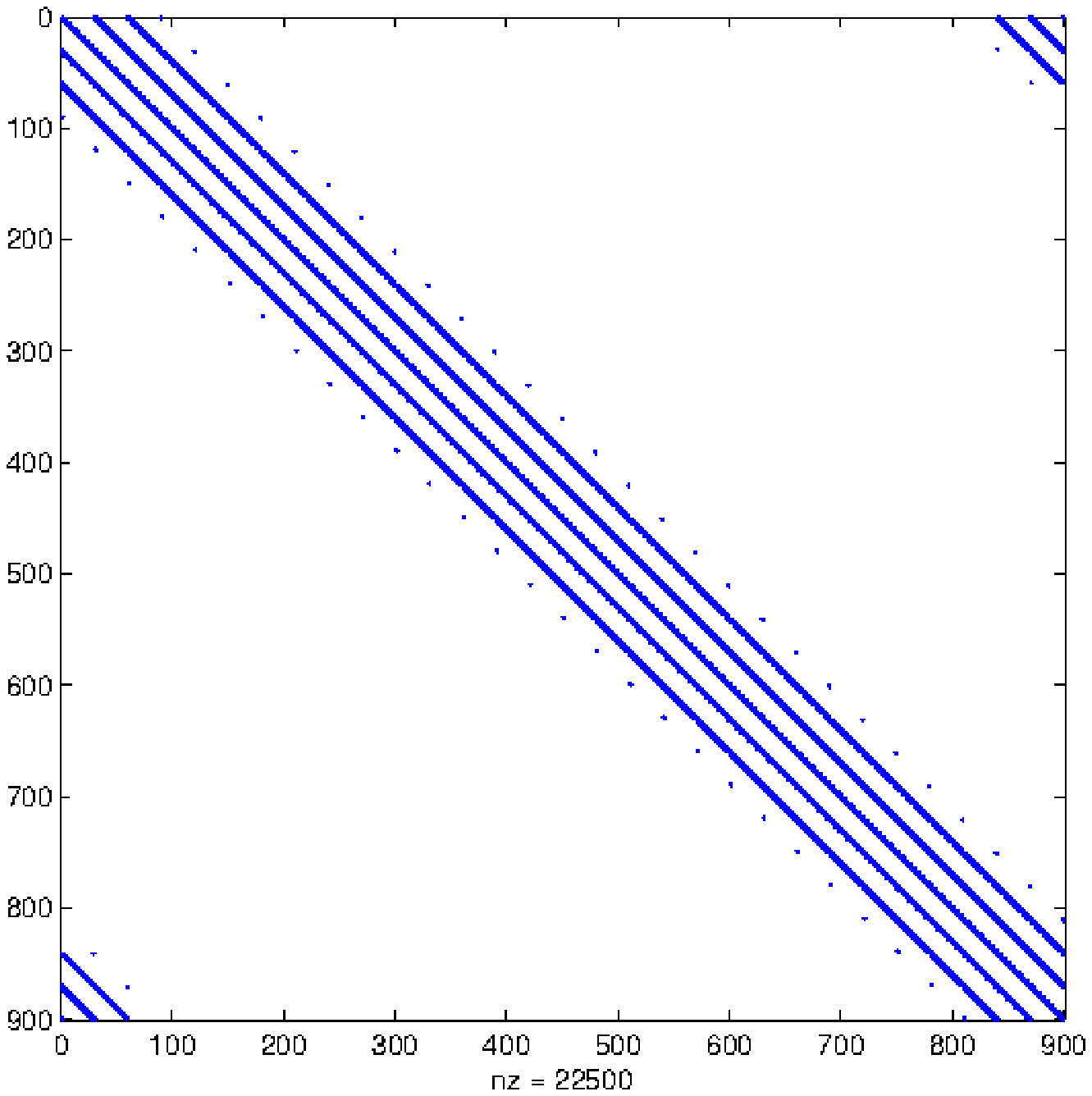} \label{fig: cholesky_SPDE2_Q} }
    \subfigure[]{\includegraphics[width=0.4\textwidth,height=0.4\textwidth]{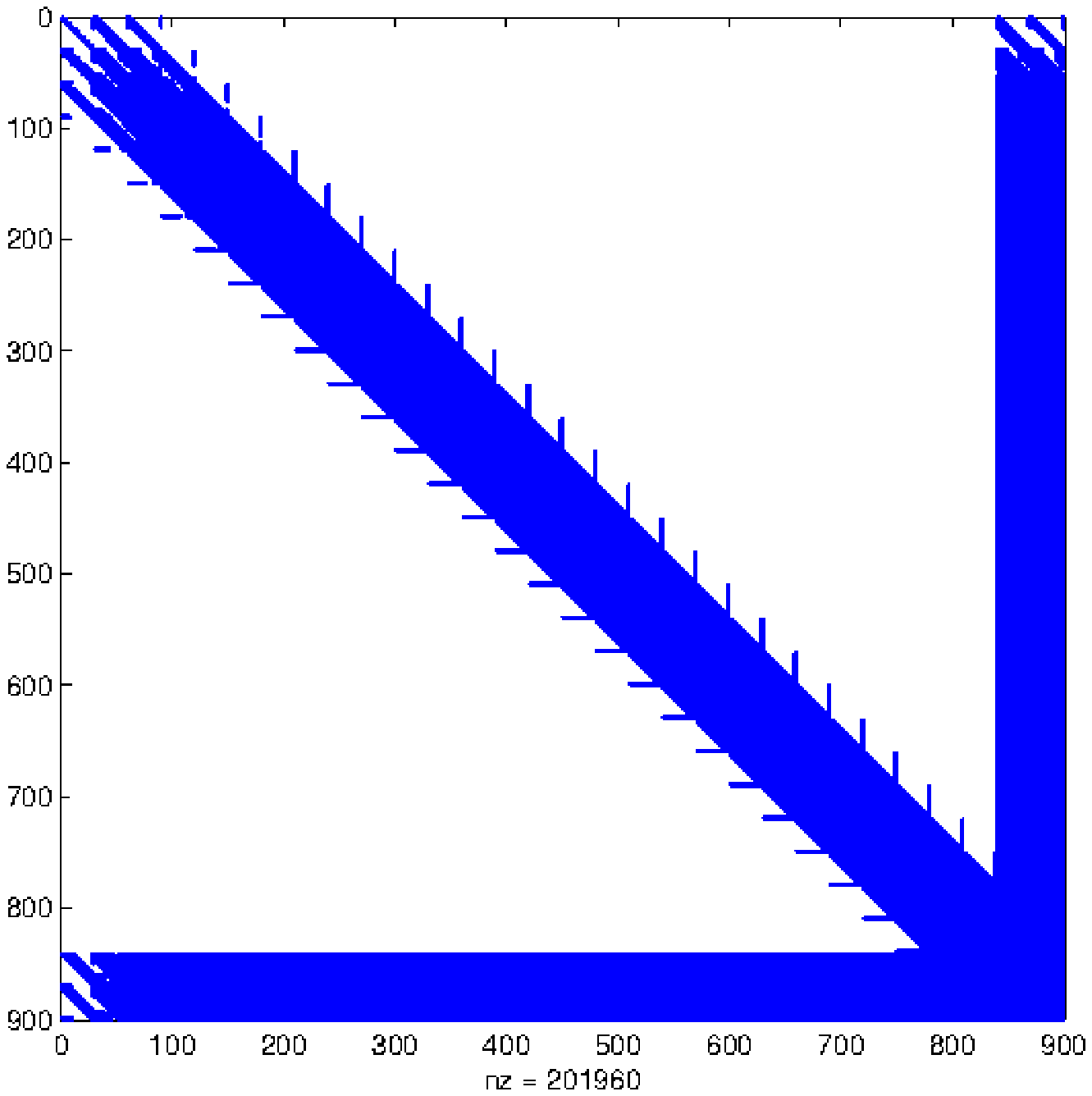} \label{fig: cholesky_SPDE2_Qq} }
    \caption{Sparsity patterns of $\boldsymbol{Q}$ (a) and $\boldsymbol{Qq}$ (b) for the random field generated from SPDE given in \eqref{eq: cholesky_spde2}.} \label{fig: cholesky_spde2_QandQq}
    \end{figure}

    \begin{figure}[htbp]
    \centering
    \subfigure[]{\includegraphics[width=0.3\textwidth,height=0.3\textwidth]{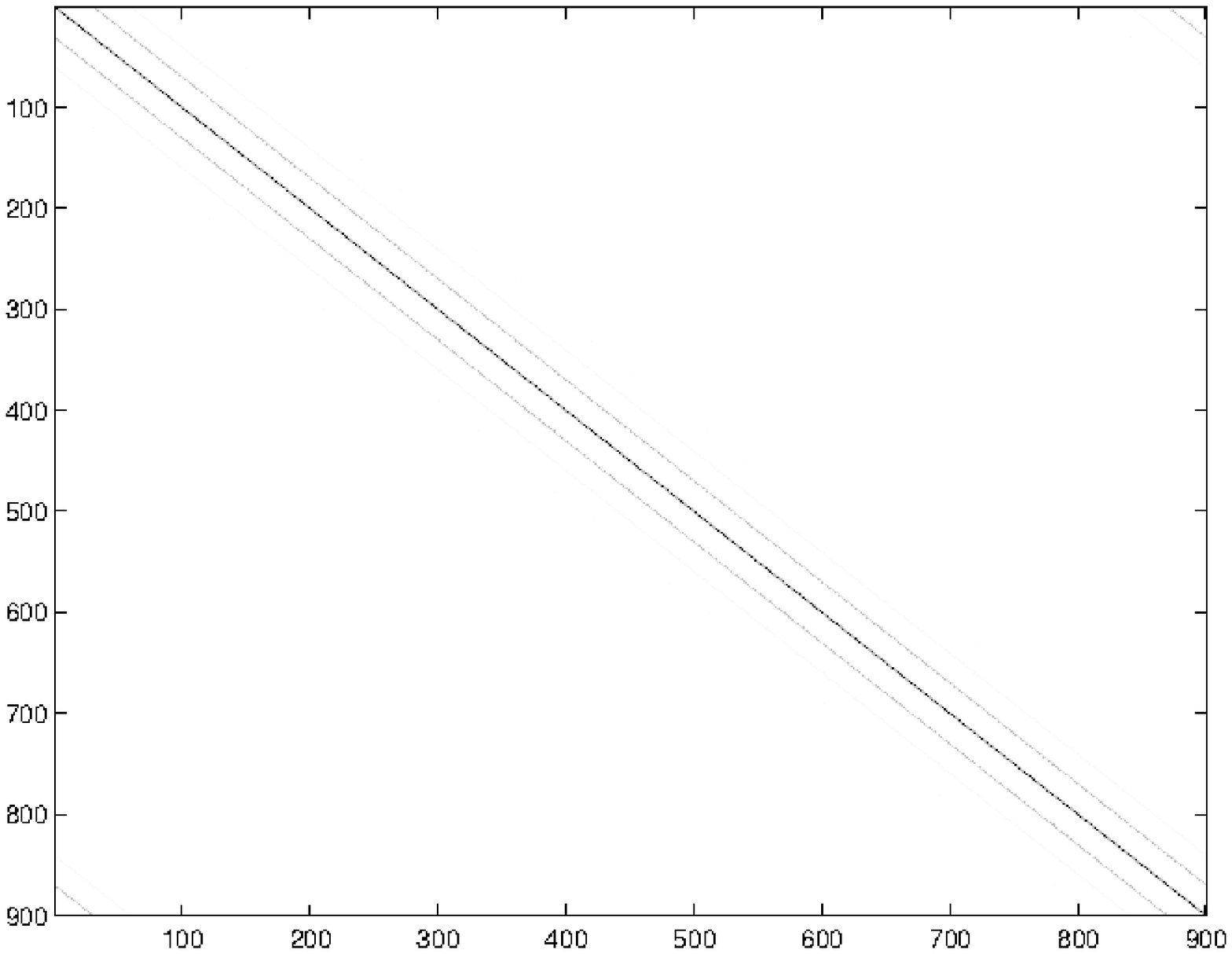} \label{fig: cholesky_SPDE2_InverseQ}}
    \subfigure[]{\includegraphics[width=0.3\textwidth,height=0.3\textwidth]{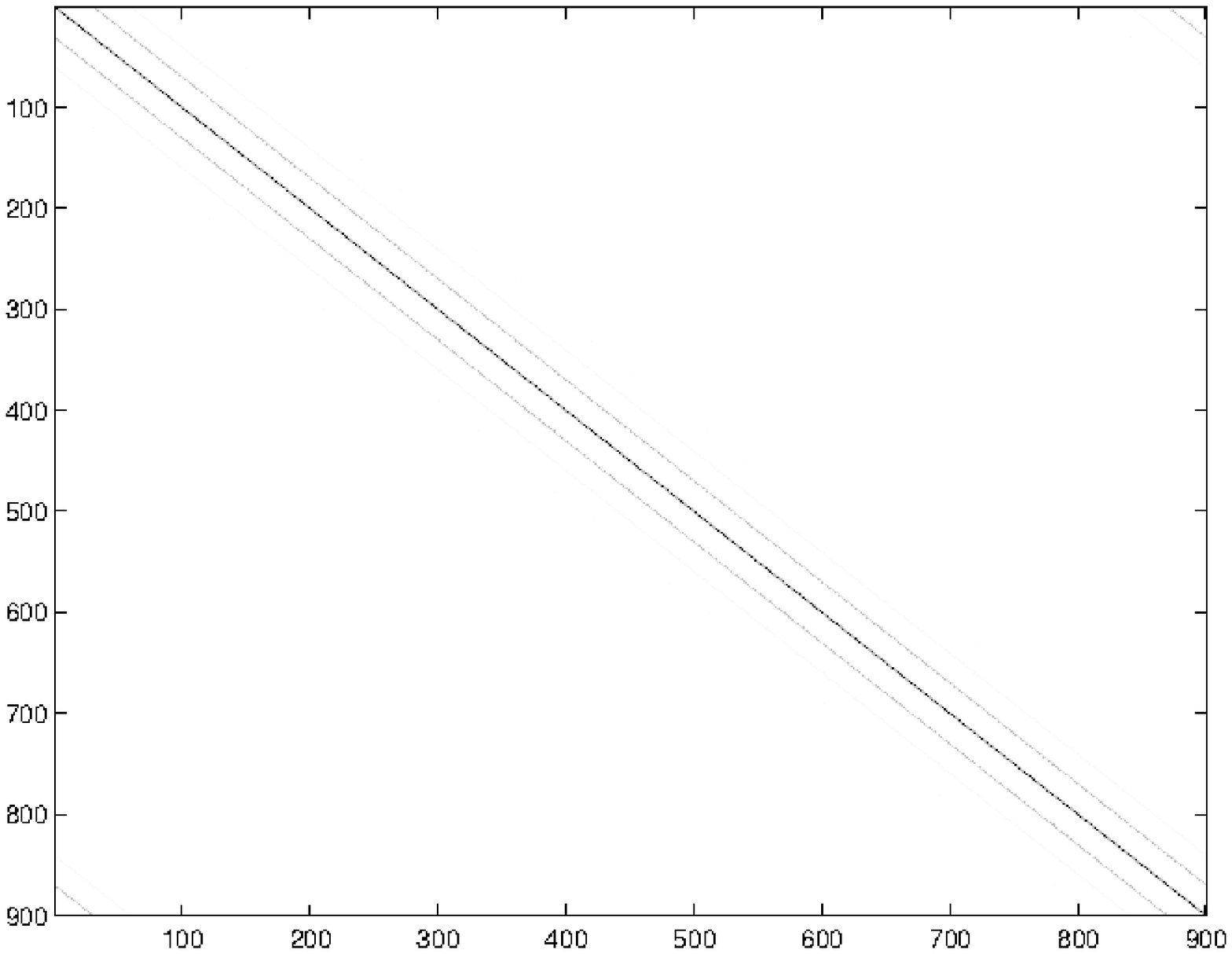} \label{fig: cholesky_SPDE2_InverseQq}}
    \subfigure[]{\includegraphics[width=0.3\textwidth,height=0.315\textwidth]{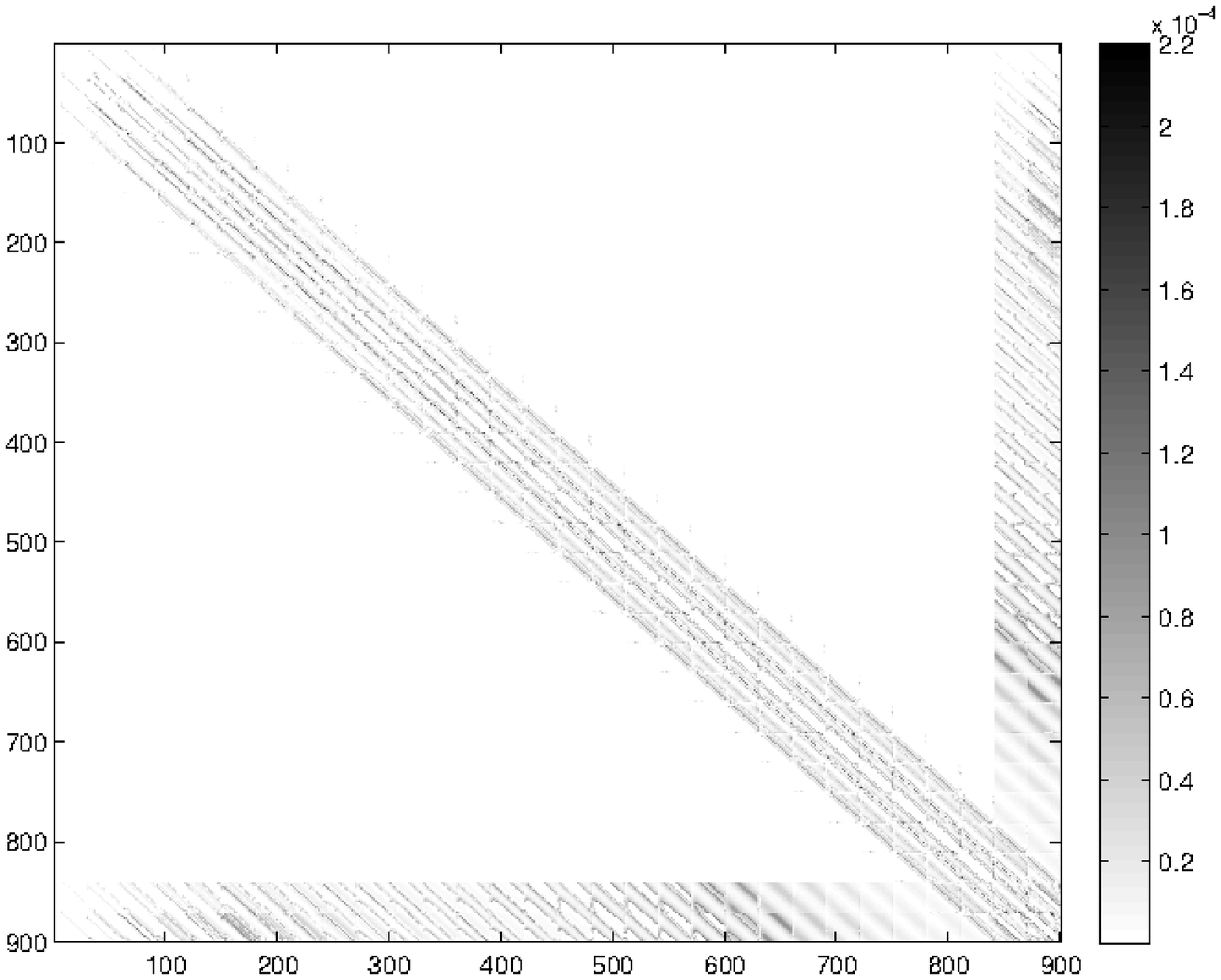} \label{fig: cholesky_SPDE2_Err}}
    \caption{Images of the true covariance matrix $\boldsymbol{\Sigma}$ (a), the approxiamted covariance matrix$\widetilde{\boldsymbol{\Sigma}}$ (b) 
            and the error matrix $\widetilde{\boldsymbol{E}}$ (c) for the random field generated from SPDE given in \eqref{eq: cholesky_spde2}.} \label{fig: cholesky_result_spde2_comparation}
    \end{figure}

\subsection{Sampling from GMRFs} \label{sec: cholesky_sampling}
   In this section samples from a GMRF are obtained using the sparse upper triangular matrices $\boldsymbol{R}$ and the Cholesky factors $\boldsymbol{L}$ for the precision matrices $\boldsymbol{Q}$.
   Let the precision matrix $\boldsymbol{Q} = \boldsymbol{Q}_1 + \boldsymbol{Q}_2$, where $\boldsymbol{Q}_1$ is from the SPDE \eqref{eq: cholesky_spde1} or \eqref{eq: cholesky_spde2},
   and $\boldsymbol{Q}_2$ is a diagonal matrix. The sampling is done as follows.
  \begin{itemize}
   \item Compute the Cholesky factor $\boldsymbol{L}$ with a Cholesky factorization or compute the sparse upper triangular matrix $\boldsymbol{R}$ from the cTIGO algorithm;
   \item Sample $\boldsymbol{z} \sim \mathcal{N}(\boldsymbol{0}, \boldsymbol{I})$;
   \item Solve the equation $\boldsymbol{L}^{\mbox{T}}\boldsymbol{x} = \boldsymbol{z}$ or $\boldsymbol{R}\boldsymbol{x} = \boldsymbol{z}$;
   \item $\boldsymbol{x}$ is the sample of the GMRF with precision matrix $\boldsymbol{Q}$ or $\widetilde{\boldsymbol{Q}}$.
  \end{itemize}
  If the mean $\boldsymbol{\mu}$ of the field is not zero, then we just need a last step $\boldsymbol{x} = \boldsymbol{\mu} + \boldsymbol{x}$ to correct the mean. With $\boldsymbol{L}$ the field $\boldsymbol{x}$ has the true covariance matrix 
  $\boldsymbol{Q}$ because
  \begin{displaymath}
   \Cov(\boldsymbol{x}) = \Cov(\boldsymbol{L^{-T}\boldsymbol{z}}) = (\boldsymbol{L}\boldsymbol{L}^{T})^{-1} = \boldsymbol{Q}^{-1}.
  \end{displaymath}
  Similarly, with $\boldsymbol{R}$ the field $\boldsymbol{x}$ has the approximated covariance matrix $\tilde{\boldsymbol{Q}}$.
   Many other sampling algorithms are provided by \citet[Chapter 2]{rue2005gaussian} for different parametrization of the GMRF. 
    We cannot notice any large differences between the samples using the Cholesky factor $\boldsymbol{L}$ and the samples using the sparse matrix $\boldsymbol{R}$ 
    based on Figure \ref{fig: cholesky_sampling1} and Figure \ref{fig: cholesky_sampling2}. 
    
    \begin{figure}[htbp]
    \centering
    \subfigure[]{\includegraphics[width=0.4\textwidth,height=0.4\textwidth]{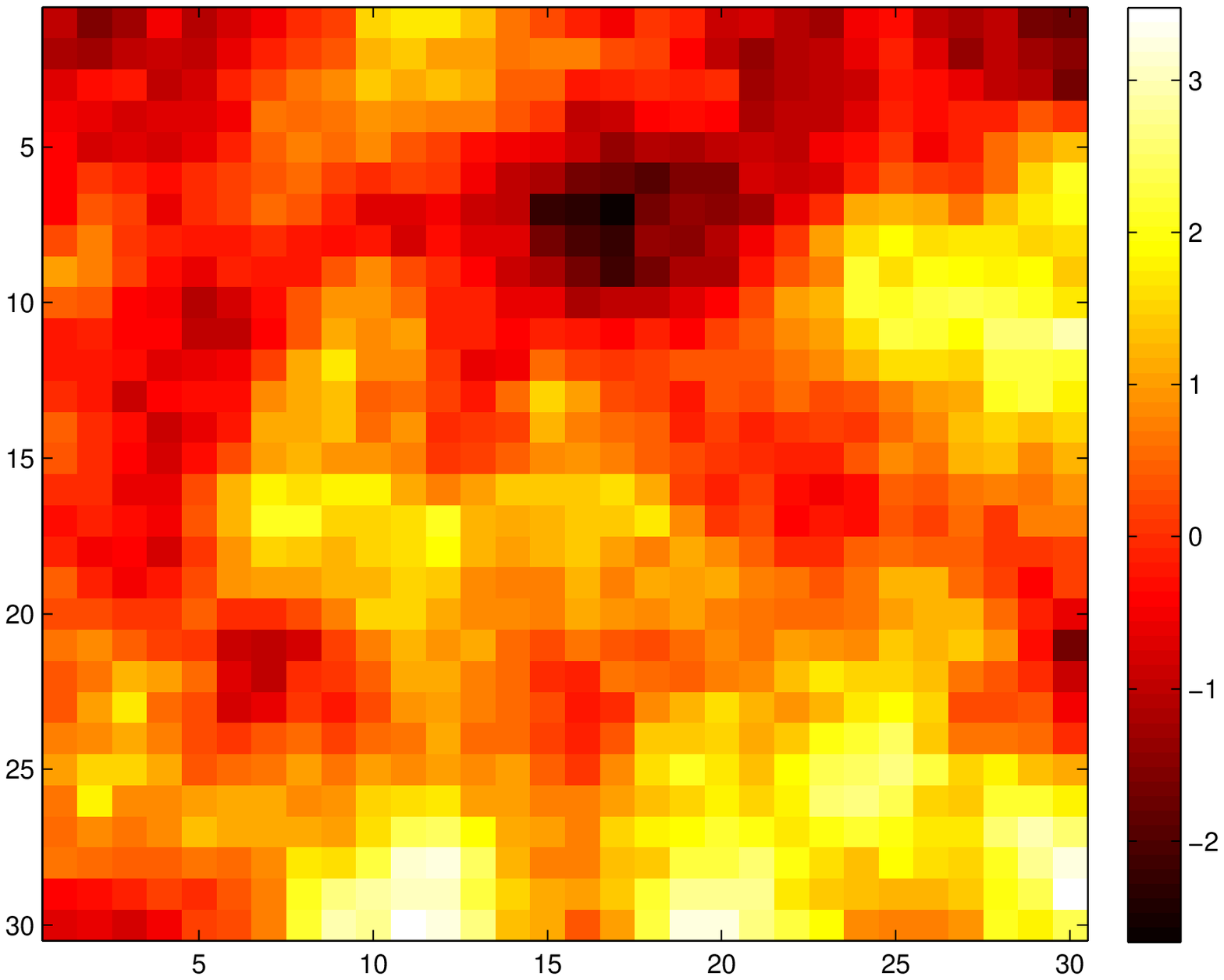}}
    \subfigure[]{\includegraphics[width=0.4\textwidth,height=0.4\textwidth]{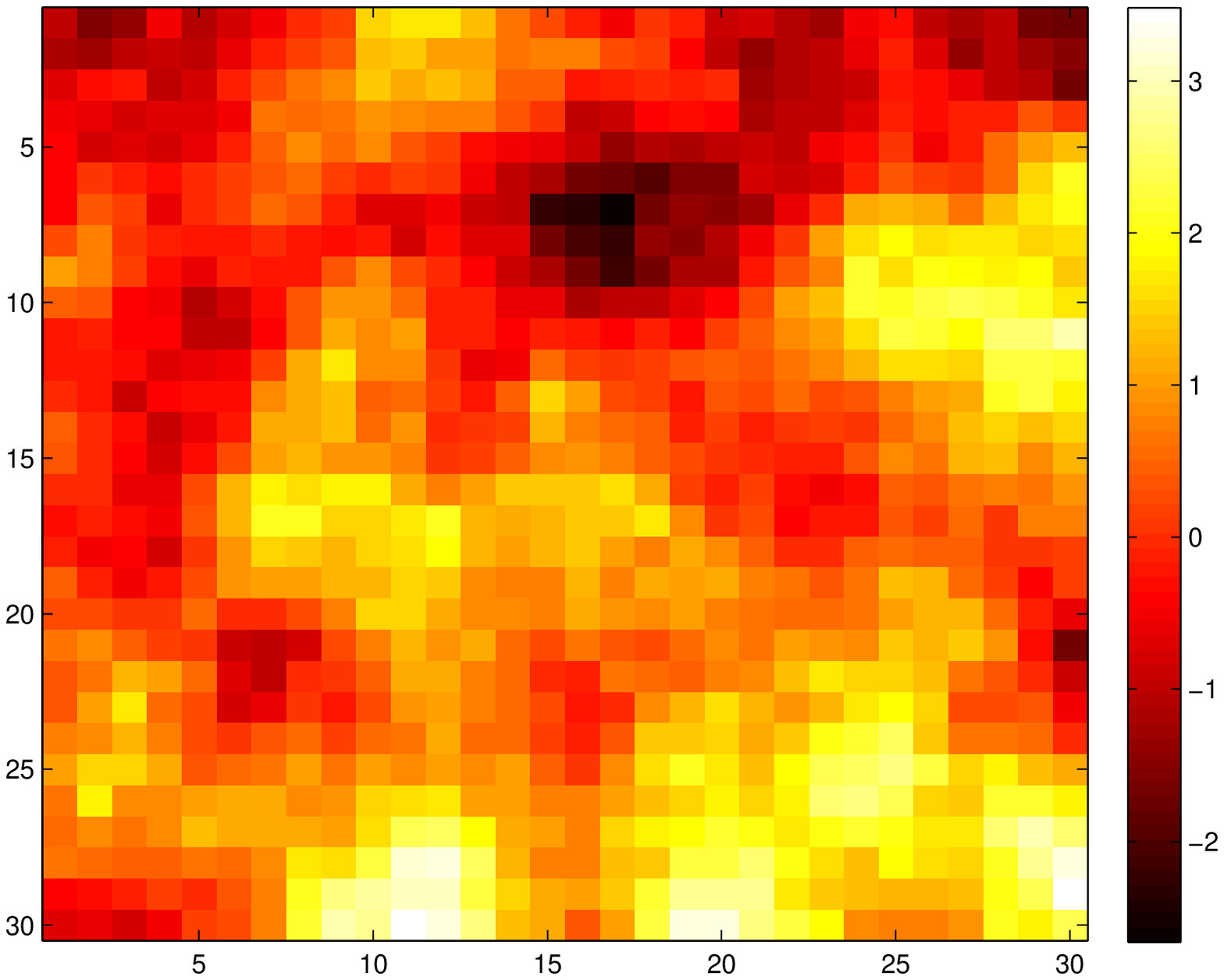}}
    \caption{Samples from the GMRF with the common Cholesky factor $\boldsymbol{L}$ with \eqref{eq: cholesky_spde1}
             (a) and the upper triangular matrix $\boldsymbol{R}$ (b) from the cTIGO algorithm.} \label{fig: cholesky_sampling1}
    \end{figure}

    \begin{figure}[htbp]
    \centering
    \subfigure[]{\includegraphics[width=0.4\textwidth,height=0.4\textwidth]{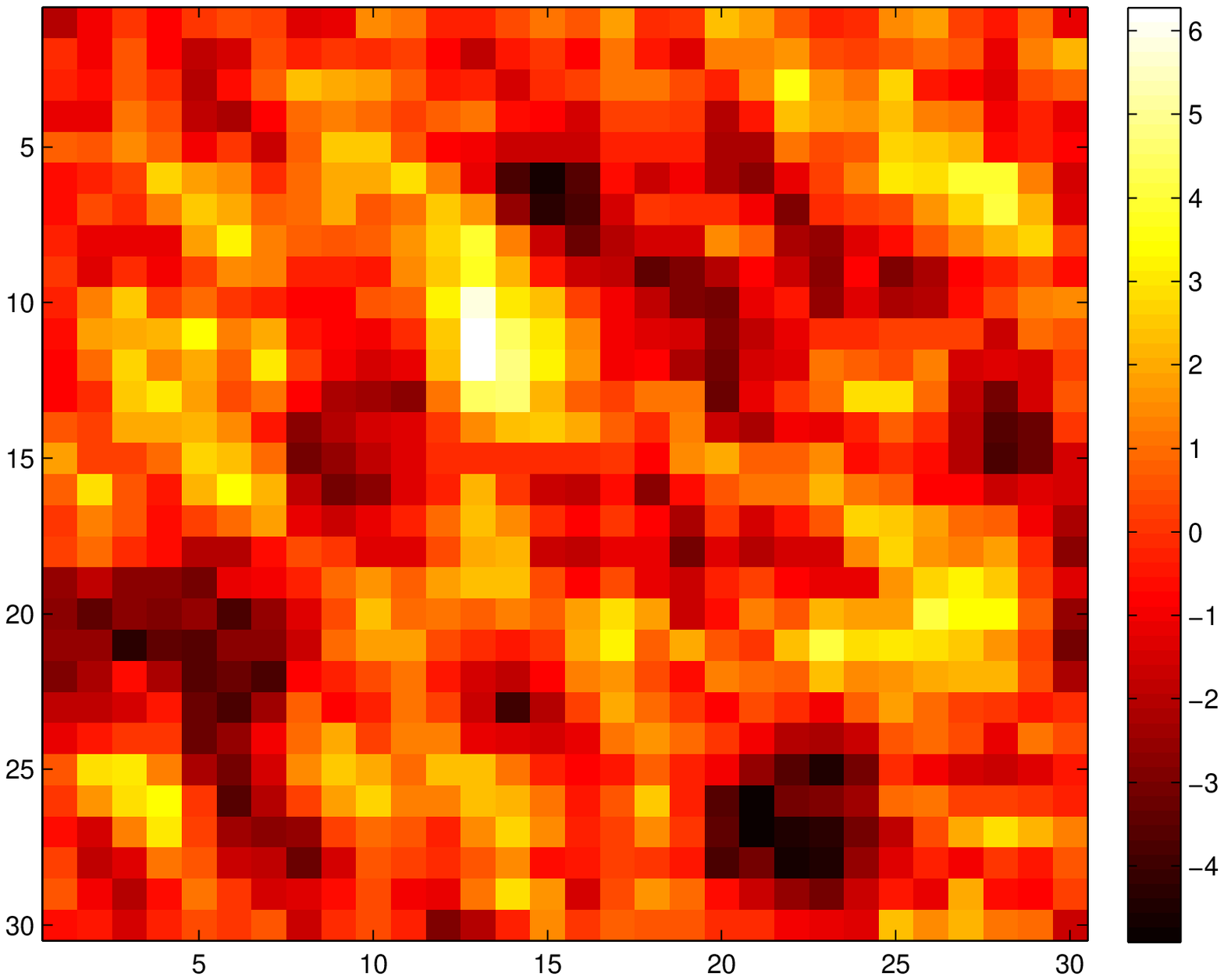}}
    \subfigure[]{\includegraphics[width=0.4\textwidth,height=0.4\textwidth]{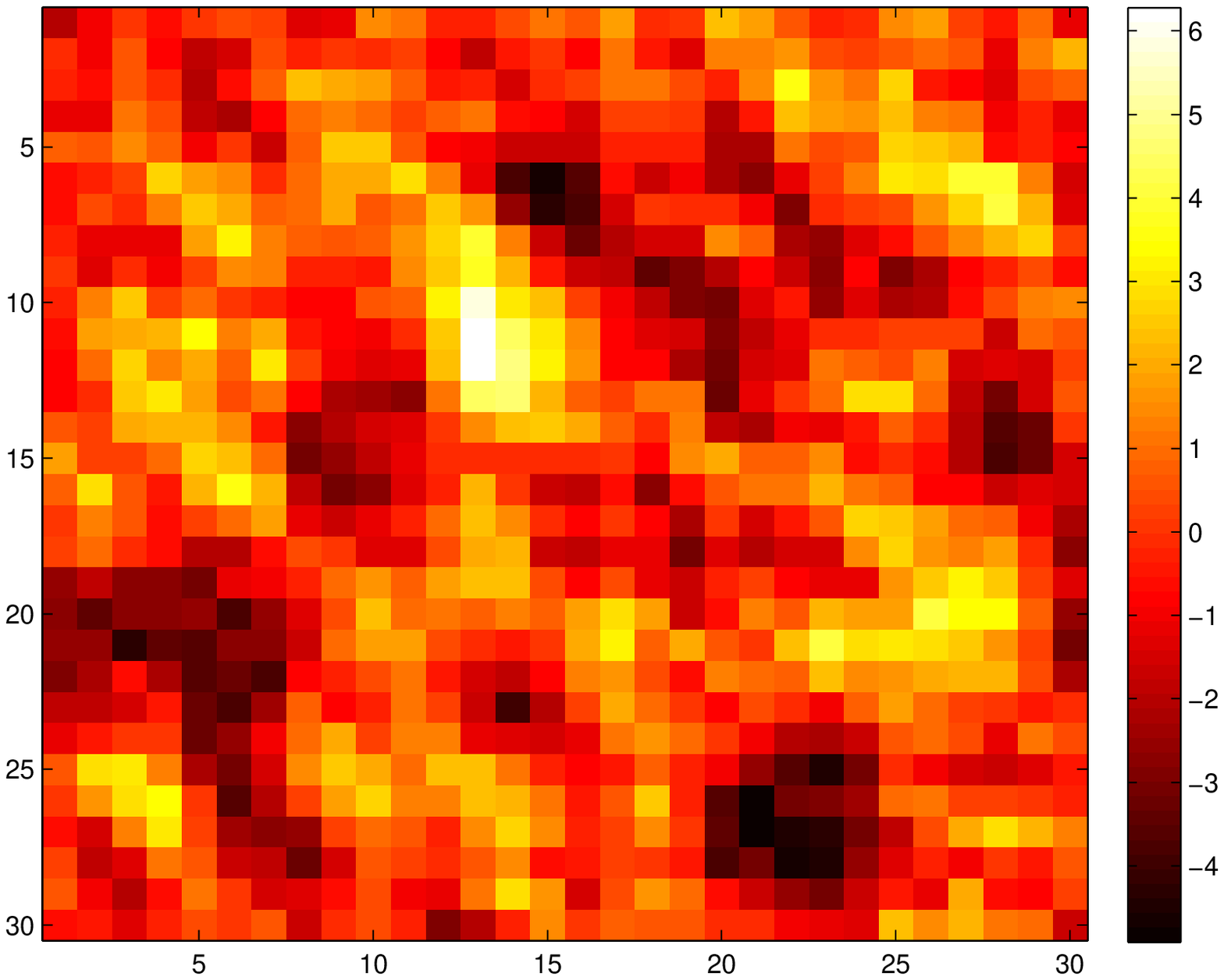}}
    \caption{Samples from the GMRF with the common Cholesky factor $\boldsymbol{L}$ (a) with \eqref{eq: cholesky_spde2}
              and the upper triangular matrix $\boldsymbol{R}$ (b) from the cTIGO algorithm.} \label{fig: cholesky_sampling2}
    \end{figure}
 
\subsection{Effect of dropping tolerance} \label{sec: cholesky_result_comparsion}
    In this section we choose different values of $\tau$ in order to know the effect of dropping tolerance.  We use the same kinds of structures for the precision matrices as discussed in Section
    \ref{sec: cholesky_result_band_matrices} with dropping tolerances 
    $ \tau = \{0, 0.000001, 0.00001, 0.0001, 0.001, 0.01\} $. The $1$-norm for the error matrix $\boldsymbol{E} = \boldsymbol{Q} - \boldsymbol{\widetilde{Q}}$
    is used for the comparisons. The results are given in Table \ref{RWcomparison} and Table \ref{othercomparison}. 
    From these tables, we can see that as the dropping tolerance $\tau$ becomes smaller and smaller, the error becomes smaller and smaller. Notice that by choosing $10^{-5}$
    as the dropping tolerance, the error reaches a level acceptable in many applications. If the dropping tolerance is equal to $0$, then the error is also equal to zero which means no element
    has been zeroed out during the Givens rotations and it returns the common Cholesky factor.

   \begin{table}[htb]
   \caption{Comparisons for RW$1$ Model and RW2 Model} \label{RWcomparison}
   \centering
   \begin{minipage}[b]{0.35\textwidth}
   \begin{tabular}{r|r}
    \hline
      \multicolumn{2}{c} {Random Walk 1} \\
      \hline
      tolerance & error \\
     \hline
      0.01    & 2.55E-04 \\
      0.001   & 1.66E-06 \\
      0.0001  & 2.13E-08 \\
      0.00001 & 2.50E-10 \\
      1.00E-6 & 2.34E-12 \\
            0 & 4.00E-15 \\
      \hline
    \end{tabular}
    \end{minipage}
   \begin{minipage}[b]{0.35\textwidth}
   \begin{tabular}{r|r}
    \hline
      \multicolumn{2}{c} {Random Walk 2} \\
      \hline
      tolerance & error \\
     \hline
      0.01    & 0.33 \\
      0.001   & 1.80E-05 \\
      0.0001  & 1.48E-07 \\
      0.00001 & 1.07E-08 \\
      1.00E-6 & 2.15E-09 \\
            0 & 1.73E-14 \\
      \hline
   \end{tabular}
   \end{minipage}
\end{table}

\begin{table}[htb]
   \caption{Comparisons for Poisson matrix and Toeplitz matrix} \label{othercomparison}
   \centering
   \begin{minipage}[c]{0.35\textwidth}
   \begin{tabular}{r|r}
    \hline
      \multicolumn{2}{c} {Poisson matrix} \\
      \hline
      tolerance & error \\
     \hline
      0.01    & 0.11 \\
      0.001   & 8.51E-03 \\
      0.0001  & 6.33E-04 \\
      0.00001 & 5.91E-05 \\
      1.00E-6 & 3.44E-06 \\
            0 & 1.49E-14 \\
      \hline
    \end{tabular}
    \end{minipage}
   \begin{minipage}[c]{0.35\textwidth}
   \begin{tabular}{r|r}
    \hline
      \multicolumn{2}{c} {Toeplitz matrix} \\
      \hline
      tolerance & error \\
     \hline
      0.01    & 9.15E-03 \\
      0.001   & 9.59E-04 \\
      0.0001  & 7.91E-05 \\
      0.00001 & 8.70E-06 \\
      1.00E-6 & 7.19E-07 \\
            0 & 5.66E-15 \\
      \hline
    \end{tabular}
    \end{minipage}
   \end{table}

\section{Discussion and Conclusion} \label{sec: cholesky_conclusion}
   In this paper we use the cTIGO algorithm to find sparse Cholesky factors for specifying GMRFs. 
   Some commonly used structures of the precision matrices and two precision matrices generated from SPDEs have been tested.
   By using the incomplete orthogonal factorization with Givens rotations, a sparse incomplete Cholesky factor can be found and it is usually sparser than the Cholesky factor from the standard Cholesky factorization. 
   The sparsity of the incomplete Cholesky factor depends on the value of the tolerance. With a good choice for the dropping tolerance, the error between the true covariance matrix and the approximated covariance matrix becomes negligible. 

   One advantage of this approach is that it is robust. It always produces a sparse incomplete Cholesky factor. Since the algorithm works both for square matrices and for rectangular matrices,
   this approach can be applied to GMRFs conditioned on observed data or a subset of the variable.
   On the negative side, it seems that our current implementation of the approach is slow when the dimension of the matrix becomes large. 
   We believe that this is due to the nature of the incomplete orthogonal factorization with dynamic dropping strategy.
   The orthogonal factorization is usually slower than the Cholesky factorization. Further, Givens rotations only zero out values to zeros one at a time.
   This leads to the slowness of the algorithm. When the computation resources are limited, 
   we might need to use the fixed pattern dropping strategy. However, to implement a fast cTIGO algorithm is out the scope of this paper and it is for further research.

\clearpage
\bibliographystyle{plainnat}
\linespread{0.8}{\small {\bibliography {Reference/Ref}}}

\begin{thebibliography}{25}
\providecommand{\natexlab}[1]{#1}
\providecommand{\url}[1]{\texttt{#1}}
\expandafter\ifx\csname urlstyle\endcsname\relax
  \providecommand{\doi}[1]{doi: #1}\else
  \providecommand{\doi}{doi: \begingroup \urlstyle{rm}\Url}\fi

\bibitem[Axelsson(1996)]{axelsson1996iterative}
O.~Axelsson.
\newblock \emph{Iterative solution methods}.
\newblock Cambridge Univ Pr, 1996.

\bibitem[Bai and Yin(2009)]{bai2009modified}
Z.Z. Bai and J.F. Yin.
\newblock {Modified incomplete orthogonal factorization methods using Givens
  rotations}.
\newblock \emph{Computing}, 86\penalty0 (1):\penalty0 53--69, 2009.
\newblock ISSN 0010-485X.

\bibitem[Bai et~al.(2001)Bai, Duff, and Wathen]{bai2001class}
Z.Z. Bai, I.S. Duff, and A.J. Wathen.
\newblock {A class of incomplete orthogonal factorization methods. I: Methods
  and theories}.
\newblock \emph{BIT Numerical Mathematics}, 41\penalty0 (1):\penalty0 53--70,
  2001.
\newblock ISSN 0006-3835.

\bibitem[Bai et~al.(2009)Bai, Duff, and Yin]{bai2009numerical}
Z.Z. Bai, I.S. Duff, and J.F. Yin.
\newblock {Numerical study on incomplete orthogonal factorization
  preconditioners}.
\newblock \emph{Journal of Computational and Applied Mathematics}, 226\penalty0
  (1):\penalty0 22--41, 2009.
\newblock ISSN 0377-0427.

\bibitem[Bj{\"o}rck(1996)]{bjorck1996numerical}
{\AA}~Bj{\"o}rck.
\newblock \emph{{Numerical methods for least squares problems}}.
\newblock Society for Industrial Mathematics, 1996.
\newblock ISBN 0898713609.

\bibitem[Forbes et~al.(2011)Forbes, Evans, Hastings, and
  Peacock]{forbes2011statistical}
Catherine Forbes, Merran Evans, Nicholas Hastings, and Brian Peacock.
\newblock \emph{Statistical distributions}.
\newblock Wiley, 4 edition, 2011.

\bibitem[Fuglstad(2011)]{fuglstad2011spatial}
G.A. Fuglstad.
\newblock Spatial modelling and inference with spde-based gmrfs.
\newblock Master's thesis, Department of Mathematical Sciences, NTNU, 2011.

\bibitem[Gneiting et~al.(2010)Gneiting, Kleiber, and Schlather]{gneitingmatern}
T.~Gneiting, W.~Kleiber, and M.~Schlather.
\newblock {Mat{\'e}rn Cross-Covariance Functions for Multivariate Random
  Fields}.
\newblock \emph{Journal of the American Statistical Association}, 105\penalty0
  (491):\penalty0 1167--1177, 2010.
\newblock ISSN 0162-1459.

\bibitem[Golub and {Van Loan}(1996)]{golub1996matrix}
G.H. Golub and C.F. {Van Loan}.
\newblock \emph{{Matrix computations}}.
\newblock Johns Hopkins Univ Pr, 1996.
\newblock ISBN 0801854148.

\bibitem[Hu et~al.(2012)Hu, Simpson, Lindgren, and Rue]{hu2012multivariate}
X.~Hu, D.P. Simpson, F.~Lindgren, and H.~Rue.
\newblock Multivariate gaussian random fields using systems of stochastic
  partial differential equations.
\newblock \emph{statistical report, Norwegian University of Science and
  Technology}, 2012.

\bibitem[Jennings and Ajiz(1984)]{jennings1984incomplete}
A.~Jennings and MA~Ajiz.
\newblock {Incomplete Methods for Solving $ A^T Ax= b$}.
\newblock \emph{SIAM Journal on Scientific and Statistical Computing},
  5:\penalty0 978, 1984.

\bibitem[Lindgren et~al.(2011)Lindgren, Rue, and
  Lindstr{\"o}m]{lindgren2011explicit}
F.~Lindgren, H.~Rue, and J.~Lindstr{\"o}m.
\newblock An explicit link between gaussian fields and gaussian markov random
  fields: the stochastic partial differential equation approach.
\newblock \emph{Journal of the Royal Statistical Society: Series B (Statistical
  Methodology)}, 73\penalty0 (4):\penalty0 423--498, 2011.

\bibitem[Meijerink and van~der Vorst(1981)]{meijerink1981guidelines}
J.A. Meijerink and H.A. van~der Vorst.
\newblock Guidelines for the usage of incomplete decompositions in solving sets
  of linear equations as they occur in practical problems.
\newblock \emph{Journal of computational physics}, 44\penalty0 (1):\penalty0
  134--155, 1981.

\bibitem[Munksgaard(1980)]{munksgaard1980solving}
N.~Munksgaard.
\newblock {Solving sparse symmetric sets of linear equations by preconditioned
  conjugate gradients}.
\newblock \emph{ACM Transactions on Mathematical Software (TOMS)}, 6\penalty0
  (2):\penalty0 206--219, 1980.
\newblock ISSN 0098-3500.

\bibitem[Papadopoulos et~al.(2005)Papadopoulos, Duff, and
  Wathen]{papadopoulos2005class}
A.T. Papadopoulos, I.S. Duff, and A.J. Wathen.
\newblock {A class of incomplete orthogonal factorization methods. II:
  Implementation and results}.
\newblock \emph{BIT Numerical Mathematics}, 45\penalty0 (1):\penalty0 159--179,
  2005.
\newblock ISSN 0006-3835.

\bibitem[Rue(2001)]{rue2001fast}
H.~Rue.
\newblock {Fast sampling of Gaussian Markov random fields}.
\newblock \emph{Journal of the Royal Statistical Society: Series B (Statistical
  Methodology)}, 63\penalty0 (2):\penalty0 325--338, 2001.
\newblock ISSN 1467-9868.

\bibitem[Rue(2005)]{rue2005marginal}
H.~Rue.
\newblock {Marginal variances for Gaussian Markov random fields}.
\newblock \emph{Statistics Report}, 2005.

\bibitem[Rue and Held(2005)]{rue2005gaussian}
H.~Rue and L.~Held.
\newblock \emph{{Gaussian Markov random fields: theory and applications}}.
\newblock Chapman \& Hall, 2005.
\newblock ISBN 1584884320.

\bibitem[Rue et~al.(2004)Rue, Steinsland, and Erland]{rue2004approximating}
H.~Rue, I.~Steinsland, and S.~Erland.
\newblock {Approximating hidden Gaussian Markov random fields}.
\newblock \emph{Journal of the Royal Statistical Society: Series B (Statistical
  Methodology)}, 66\penalty0 (4):\penalty0 877--892, 2004.
\newblock ISSN 1467-9868.

\bibitem[Saad(1988)]{saad1988preconditioning}
Y.~Saad.
\newblock {Preconditioning techniques for nonsymmetric and indefinite linear
  systems* 1}.
\newblock \emph{Journal of Computational and Applied Mathematics}, 24\penalty0
  (1-2):\penalty0 89--105, 1988.
\newblock ISSN 0377-0427.

\bibitem[Saad(2003)]{saad2003iterative}
Y.~Saad.
\newblock \emph{{Iterative methods for sparse linear systems}}.
\newblock Society for Industrial Mathematics, 2003.
\newblock ISBN 0898715342.

\bibitem[Simpson(2008)]{simpson2008krylov}
D.P. Simpson.
\newblock \emph{Krylov subspace methods for approximating functions of
  symmetric positive definite matrices with applications to applied statistics
  and anomalous diffusion}.
\newblock PhD thesis, Queensland University of Technology, 2008.

\bibitem[Trefethen and Bau(1997)]{trefethen1997numerical}
L.N. Trefethen and D.~Bau.
\newblock \emph{{Numerical linear algebra}}.
\newblock Society for Industrial Mathematics, 1997.
\newblock ISBN 0898713617.

\bibitem[Wang et~al.(1997)Wang, Gallivan, and Bramley]{wang1997cimgs}
X.~Wang, K.A. Gallivan, and R.~Bramley.
\newblock Cimgs: An incomplete orthogonal factorization preconditioner.
\newblock \emph{SIAM Journal on Scientific Computing}, 18\penalty0
  (2):\penalty0 516--536, 1997.

\bibitem[Wist and Rue(2006)]{wist2006specifying}
H.T. Wist and H.~Rue.
\newblock {Specifying a Gaussian Markov random field by a sparse Cholesky
  triangle}.
\newblock \emph{Communications in Statistics-Simulation and Computation},
  35\penalty0 (1):\penalty0 161--176, 2006.
\newblock ISSN 0361-0918.

\end{thebibliography}
\end{document}